\def\short{0}  
\def\draft{0}  
\newtheorem{theorem}{Theorem}[section]
\newtheorem{definition}[theorem]{Definition}
\newtheorem{lemma}[theorem]{Lemma}
\newtheorem{claim}[theorem]{Claim}
\newtheorem{remk}[theorem]{Remark}
\newenvironment{proof}{\noindent{\bf Proof. }}{\qed}
\def\FullBox{\hbox{\vrule width 8pt height 8pt depth 0pt}}
\def\qed{\ifmmode\qquad\FullBox\else{\unskip\nobreak\hfil
\penalty50\hskip1em\null\nobreak\hfil\FullBox
\parfillskip=0pt\finalhyphendemerits=0\endgraf}\fi}
\def\qedsketch{\ifmmode\Box\else{\unskip\nobreak\hfil
\penalty50\hskip1em\null\nobreak\hfil$\Box$
\parfillskip=0pt\finalhyphendemerits=0\endgraf}\fi}
\newenvironment{proofof}[1]{\begin{trivlist} \item {\bf Proof #1:~~}}
  {\qed\end{trivlist}}
\newenvironment{claimproof}{\begin{quotation} \noindent
{\bf Proof of claim:~~}}{\qedsketch\end{quotation}}
\newcommand{\eqdef}{\mathbin{\stackrel{\rm def}{=}}}
\newcommand{\R}{{\mathbb R}}
\newcommand{\F}{\mathbb{F}}
\newcommand{\poly}{{\mathrm{poly}}}
\newcommand{\zo}{\{0,1\}}
\newcommand{\getsr}{\gets}
\newcommand{\E}{\mathop{\mathrm E}\displaylimits}
\newcommand{\Var}{\mathop{\mathrm{Var}}\displaylimits}
\newcommand{\remove}[1]{}
\newcommand{\Exp}{\mathop{\mbox{\sc E}}\nolimits}
\newcommand{\eps}{\varepsilon}
\newcommand{\cp}{\mathrm{cp}}
\newcommand{\supp}{\mathrm{supp}}
\newcommand{\vecX}{\mathbf{X}}
\newcommand{\vecY}{\mathbf{Y}}
\newcommand{\vecZ}{\mathbf{Z}}
\newcommand{\vecx}{\mathbf{x}}
\newcommand{\vecy}{\mathbf{y}}
\newcommand{\vecz}{\mathbf{z}}
\newcommand{\Hfam}{\mathcal{H}}
\newcommand{\Hmin}{\mathrm{H}_\infty}
\newcommand{\HRen}{\mathrm{H}_2}
\newcommand{\wt}{\mathrm{wt}}
\newcommand{\Snote}[1]{{[\bf Salil's Note: #1]}}
\newcommand{\Cnote}[1]{{[\bf Chung's Note: #1]}}
\newcommand{\Snote}[1]{{}}
\newcommand{\Cnote}[1]{{}}
\title{Tight Bounds for Hashing Block Sources\thanks{An extended abstract of this paper will appear in {\em
RANDOM `08} \cite{ChungVa08}.}}
\title{Tight Bounds for Hashing Block Sources\thanks{A full version of this paper can be found on \cite{ChungVa08TR}.}}
\date{\today}
\author{Kai-Min Chung\thanks{Work done when visiting U.C. Berkeley,
supported by US-Israel BSF grant 2006060 and NSF grant CNS-0430336.}
 \and Salil Vadhan\thanks{Work done when visiting U.C. Berkeley, supported
 by the Miller Institute for Basic Research in Science, a Guggenheim 
Fellowship, US-Israel BSF grant 2006060, and ONR grant N00014-04-1-0478.}}
\institute{School of Engineering \& Applied Sciences \\ Harvard University \\ Cambridge, MA\\
\email{\{kmchung,salil\}@eecs.harvard.edu}}
\author{Kai-Min Chung\thanks{Work done when visiting U.C. Berkeley,
supported by US-Israel BSF grant 2002246 and NSF grant CNS-0430336.}\\ Harvard University
 \and Salil Vadhan\thanks{Work done when visiting U.C. Berkeley, supported
 by the Miller Institute for Basic Research in Science, a Guggenheim 
Fellowship, US-Israel BSF grant 2006060, and ONR grant N00014-04-1-0478.}\\ Harvard 
University }
\begin{document}


\maketitle

\begin{abstract}
It is known that if a 2-universal hash function $H$ is applied to elements of a {\em block source} $(X_1,\ldots,X_T)$, where each item $X_i$ has enough min-entropy conditioned on the previous items, then the output distribution $(H,H(X_1),\ldots,H(X_T))$ will be ``close'' to the uniform distribution. We provide improved bounds on how much min-entropy per item is required for this to hold, both when we ask that the output be close to uniform in statistical distance and when we only ask that it be statistically close to a distribution with small collision probability.  In both cases, we reduce the dependence of the min-entropy on the number $T$ of items from $2\log T$ in previous work to $\log T$, which we show to be optimal.  This leads to corresponding improvements to the recent results of Mitzenmacher and Vadhan (SODA `08) on the analysis of hashing-based algorithms and data structures when the data items come from a block source.
\end{abstract}



\section{Introduction}


A {\em block source} is a sequence of items $\vecX =
(X_1,\dots,X_T)$ in which each item has at least some $k$ bits of
``entropy" conditioned on the previous ones~\cite{ChorG88}. Previous
works~\cite{ChorG88,Zuckerman96,MitzenmacherV08} have analyzed what
happens when one applies a 2-universal hash function to each item in
such a sequence, establishing results of the following form:

\begin{quote}
\textbf{Block-Source Hashing Theorems (informal):} \textit{If
$(X_1,\ldots,X_T)$ is a block source with $k$ bits of ``entropy''
per item and $H$ is a random hash function from a 2-universal family
mapping to $m\ll k$ bits, then $(H(X_1),\ldots,H(X_T))$ is ``close''
to the uniform distribution.}
\end{quote}

In this paper, we prove new results of this form, achieving improved
(in some cases, optimal) bounds on how much entropy $k$ per item is
needed to ensure that the output is close to uniform, as a function
of the other parameters (the output length $m$ of the hash
functions, the number $T$ of items, and the ``distance'' from the
uniform distribution).  But first we discuss the two applications
that have motivated the study of Block-Source Hashing Theorems.

\subsection{Applications of Block-Source Hashing}

\paragraph{Randomness Extractors.} A {\em randomness extractor} is an
algorithm that extracts almost-uniform bits from a source of biased
and correlated bits, using a short {\em seed} of truly random bits as a
catalyst~\cite{NisanZu96}.  Extractors have many applications in
theoretical computer science and have played a central role in the
theory of pseudorandomness.  (See the surveys
\cite{NisanT99,Shaltiel04,Vadhan07-SIGACT}.)  Block-source Hashing Theorems
immediately yield methods for extracting randomness from block
sources, where the seed is used to specify a universal hash
function.  The gain over hashing the entire $T$-tuple at once
is that the blocks may be much shorter than the entire
sequence, and thus a much shorter seed is required to specify the
universal hash function.  Moreover, many subsequent constructions
of extractors for general sources (without the block structure) work
by first converting the source into a block source and performing
block-source hashing.

\paragraph{Analysis of Hashing-Based Algorithms.}
The idea of hashing has been widely applied in designing algorithms
and data structures, including hash tables~\cite{Knuth98}, Bloom
filters~\cite{BroderM03}, summary algorithms for data
streams~\cite{Muthukrishnan03}, etc.  Given a stream of data items
$(x_1,\dots,x_T)$, we first hash the items into \\
$(H(x_1),\dots,H(x_T))$, and carry out a computation using the
hashed values. In the literature, the analysis of a hashing algorithm
is typically a worst-case analysis on the input data items, and the
best results are often obtained by unrealistically modelling the
hash function as a truly random function mapping the items
to uniform and independent
$m$-bit strings.  On the other hand, for realistic, efficiently
computable hash functions (eg., $2$-universal or $O(1)$-wise
independent hash functions), the provable performance is sometimes
significantly worse.  However, such gaps seem to not show up in
practice, and even standard 2-universal hash functions empirically seem to
match
the performance of truly random hash functions.  To explain this
phenomenon, Mitzenmacher and Vadhan \cite{MitzenmacherV08} have
suggested that the discrepancy is due to worst-case analysis,
and propose to instead model the input items as coming from a block source.
Then Block-Source Hashing Theorems imply that the performance of
universal hash functions is close to that of truly random hash
functions, provided that each item has enough bits of entropy.
%

\subsection{How Much Entropy is Required?}

A natural question about Block-Source Hashing Theorems is: how large does the ``entropy'' $k$ per item need to be to ensure a certain amount of ``closeness" to uniform (where both the entropy and closeness can be measured in various ways).  This also has practical significance for the latter motivation regarding hashing-based algorithms, as it corresponds to the amount of entropy we need to assume in data items.  In \cite{MitzenmacherV08}, they provide bounds on the entropy required for two measures of closeness, and use these as basic tools to bound the required entropy in various applications. The requirement is usually some small constant multiple of $\log T$, where $T$ is the number of items in the source, which can be on the borderline between a reasonable and unreasonable assumption about real-life data. Therefore, it is interesting to pin down the optimal answers to these questions. In what follows, we first summarize the previous results, and then discuss our improved analysis and corresponding lower bounds.

A standard way to measure the distance of the output from the
uniform distribution is by \emph{statistical distance}.\footnote{The
\emph{statistical distance} of two random variables $X$ and $Y$ is
$\Delta(X,Y) = \max_T | \Pr[X \in T] - \Pr[Y \in T]|$, where $T$ ranges over all possible events.}
In the randomness extractor literature, classic results
\cite{ChorG88,ImpagliazzoLL89,Zuckerman96} show that using
2-universal hash functions, $k = m + 2 \log(T/\eps) + O(1)$ bits of
min-entropy (or even Renyi entropy)\footnote{The {\em min-entropy} of a 
random variable $X$
is $\Hmin(X)=\min_x \log(1/\Pr[X=x])$.  All of the results mentioned
actually hold for the less stringent measure of {\em Renyi entropy}
$\HRen(X)= \log(1/\Exp_{x\getsr X}[\Pr[X=x]])$.} per item is sufficient for
the output distribution to be $\eps$-close to uniform in statistical
distance. Sometimes a less
stringent closeness requirement is sufficient, where we only require
that the output distribution is $\eps$-close to a distribution having
``small" \emph{collision probability}\footnote{The \emph{collision
probability} of a random variable $X$ is $\sum_x \Pr[X=x]^2$.  By
``small collision probability,'' we mean that the collision
probability is within a constant factor of the collision probability
of uniform distribution.}. A result of \cite{MitzenmacherV08} shows
that $k = m + 2\log T + \log (1/\eps) + O(1)$ suffices to achieve
this requirement. Using $4$-wise independent hash functions,
\cite{MitzenmacherV08} further reduce the required entropy to $k=
\max\{m + \log T, 1/2 (m + 3\log T + \log (1/\eps))\} + O(1)$.

\begin{table}[t]
\begin{center} {
\begin{tabular}{|c|c|c|}
 \hline
 Setting & Previous Results & Our Results \\ \hline
 2-universal hashing & $m + 2 \log T + 2\log (1/\eps)$
& $m +  \log T + 2 \log (1/\eps)$ \\
 $\eps$-close to uniform & \cite{ChorG88,ImpagliazzoLL89,Zuckerman96} & \\ 
\hline
 2-universal hashing & $m + 2\log T + \log
 (1/\eps)$~\cite{MitzenmacherV08} & $m +  \log T + \log (1/\eps)$ \\
 $\eps$-close to small cp. & & \\ \hline
 4-wise indep. hashing & \multicolumn{1}{l|}{$\max\{m + \log T,$} &  
\multicolumn{1}{l|}{$\max\{m + \log T,$}\\
 $\eps$-close to small cp. &
 $1/2 (m + 3\log T + \log 1/\eps)\}$~\cite{MitzenmacherV08}& $1/2(m + 2\log 
T +  \log (1/\eps)\}$\\
\hline
\end{tabular} }
\end{center}
\caption{Our Results: Each entry denotes the min-entropy (actually,
Renyi entropy) required per item when hashing a block source of $T$
items to $m$-bit strings to ensure that the output has statistical distance 
at most
$\eps$ from uniform (or from having collision probability within a constant 
factor of
uniform).  Additive constants are omitted for readability.}
\label{tbl:analysis}
\end{table}

\paragraph{Our Results.} We reduce the entropy required in the previous results, as summarized in Table \ref{tbl:analysis}. Roughly speaking, we save an additive $\log T$ bits of min-entropy (or Renyi entropy) for all cases. We show that using universal hash functions, $k = m + \log T + 2\log 1/\eps + O(1)$ bits per item is sufficient for the output to be $\eps$-close to uniform, and $k = m + \log (T/\eps) + O(1)$ is enough for the output to be $\eps$-close to having small collision probability. Using $4$-wise independent hash functions, the entropy $k$ further reduces to $ \max\{m + \log T, 1/2 (m + 2\log T + \log 1/\eps)\} + O(1)$. The results hold even if we consider the joint distribution $(H,H(X_1),\dots,H(X_T))$ (corresponding to ``strong extractors'' in the literature on randomness extractors). Substituting our improved bounds in the analysis of hashing-based algorithms from \cite{MitzenmacherV08}, we obtain similar reductions in the min-entropy required for every application with 2-universal hashing. With 4-wise independent hashing, we obtain a slight improvement for Linear Probing, and for the other applications, we show that the previous bounds can already be achieved with 2-universal hashing.  The results are summarized in Table \ref{tbl:app}.  


Although the $\log T$ improvement seems small, we remark that it could be significant for practical settings of parameter. For example, suppose we want to hash $64$ thousand internet traffic flows, so $\log T \approx 16$.  Each flow is specified by the 32-bit IP addresses and 16-bit port numbers for the source and destination plus the 8-bit transport protocol, for a total of 104 bits. There is a noticeable difference between assuming that each flow contains $3\log T\approx 48$ vs. $4\log T\approx 64$ bits of entropy as they are only 104 bits long, and are very structured.

%

We also prove corresponding lower bounds showing that our upper
bounds are almost tight.
Specifically, we show that when the data items have not enough
entropy, then the joint distribution \ifnum\short=1\\\fi 
$(H,H(X_1),\dots,H(X_T))$ can
be ``far" from uniform. More precisely, we show that if $k = m +
\log T + 2\log 1/\eps - O(1)$, then there exists a block source
$(X_1,\dots,X_T)$ with $k$ bits of min-entropy per item such that
the distribution $(H,H(X_1),\dots,H(X_T))$ is $\eps$-far from
uniform in statistical distance (for $H$ coming from any hash
family). This matches our upper bound up to an additive constant.
Similarly, we show that if $k = m + \log T - O(1)$, then there
exists a block source $(X_1,\dots,X_T)$ with $k$ bits of min-entropy
per item such that the distribution $(H,H(X_1),\dots,H(X_T))$ is
$0.99$-far from having small collision probability (for $H$ coming
from any hash family).  This matches our upper bound up to an
additive constant in case the statistical distance parameter $\eps$
is constant; we also exhibit a specific 2-universal family for which
the $\log(1/\eps)$ in our upper bound is nearly tight --- it cannot
be reduced below $\log(1/\eps)-\log\log(1/\eps)$.  Finally, we also
extend all of our lower bounds to the case that we only consider
distribution of hashed values $(H(X_1),\ldots,H(X_T))$, rather than
their joint distribution with $Y$.  For this case, the lower bounds are
necessarily reduced by a term that depends on the size of the hash family.
(For standard constructions of universal hash functions, this
amounts to $\log n$ bits of entropy, where $n$ is the bit-length of
an individual item.)

\begin{table}[t]
\begin{center} {
\begin{tabular}{|l|c|c|}
 \hline
Type of Hash Family & Previous Results~\cite{MitzenmacherV08} & Our
Results\\ \hline \hline \multicolumn{3}{|c|}{Linear Probing} \\
\hline
2-universal hashing & $4\log T$ & $3\log T$ \\
4-wise independence & $2.5\log T$ & $2\log T$\\ \hline \hline
\multicolumn{3}{|c|}{Balanced Allocations with $d$ Choices} \\
\hline
2-universal hashing & $(d+2)\log T$ & $(d+1)\log T$ \\
4-wise independence & $(d+1)\log T$ & --- \\ \hline \hline
\multicolumn{3}{|c|}{Bloom Filters} \\ \hline
2-universal hashing & $4\log T$ & $3\log T$ \\
4-wise independence & $3\log T$ & ---\\ \hline
\end{tabular} }
\end{center}
\caption{Applications: Each entry denotes the min-entropy (actually, Renyi entropy) required per item to ensure
that the performance of the given application is ``close'' to the performance when using truly random hash functions.  In all cases, the bounds omit additive terms that depend on how close a performance is desired, and we restrict to the (standard) case that the size of the hash table is linear in the number of items being hashed. That is, $m = \log T + O(1)$.}  \label{tbl:app}
\end{table}

\paragraph{Techniques.}
At a high level, all of the previous analyses for hashing block sources were loose due to summing error probabilities over the $T$ blocks.  Our improvements come from avoiding this linear blow-up by choosing more refined measures of error. For example, when we want the output to have small statistical distance from uniform, the classic Leftover Hash Lemma~\cite{ImpagliazzoLL89} says that min-entropy $k=m+2\log(1/\eps_0)$ suffices for a single hashed block to be $\eps_0$-close to uniform, and then a ``hybrid argument'' implies that the joint distribution of $T$ hashed blocks is $T\eps_0$-close to uniform~\cite{Zuckerman96}.  Setting $\eps_0=\eps/T$, this leads to a min-entropy requirement of $k=m+2\log(1/\eps)+2\log T$ per block.  We obtain a better bound, reducing $2\log T$ to $\log T$, by using {\em Hellinger distance} to analyze the error accumulation over 
blocks, and only passing to statistical distance at the end.

For the case where we only want the output to be close to having small collision probability, the previous
analysis of \cite{MitzenmacherV08} worked by first showing that the expected  collision probability of each hashed block $h(X_i)$ is ``small'' even conditioned on previous blocks, then using Markov's Inequality to deduce that each hashed block has small collision probability except with some probability $\eps_0$, and finally doing a union bound to deduce that all hashed blocks have small collision probability except with probability  $T\eps_0$.  We avoid the union bound by working with more refined notions of ``conditional collision probability,'' which enable us to apply Markov's Inequality on the entire sequence rather than on each block individually.

The starting point for our negative results is the tight lower bound
for randomness extractors due to Radhakrishnan and
Ta-Shma~\cite{RadhakrishnanT00}.  Their methods show that if the
min-entropy parameter $k$ is not large enough, then for any hash
family, there exists a (single-block) source $X$ such that $h(X)$ is
``far'' from uniform (in statistical distance) for ``many'' hash
functions $h$.  We then take our block source $(X_1,\ldots,X_T)$ to
consist of $T$ iid copies of $X$, and argue that the statistical
distance from uniform grows sufficiently fast with the number $T$ of
copies taken.  For example, we show that if two distributions have
statistical distance $\eps$, then their $T$-fold products have
statistical distance $\Omega(\min\{1,\sqrt{T}\cdot\eps\})$,
strengthening a previous bound of Reyzin~\cite{Reyzin04}, who proved
a bound of $\Omega(\min\{\eps^{1/3},\sqrt{T}\cdot\eps\})$.

\section{Preliminaries}

\paragraph{Notations.} All logs are based $2$.  We use the
convention that $N = 2^n$, $K = 2^k$, and $M = 2^m$. 
We think of a data item $X$ as a random variable over $[N] =
\{1,\dots,N\}$, which can be viewed as the set of $n$-bit strings. A
hash function $h : [N] \rightarrow [M]$ hashes an item to a $m$-bit
string. A \emph{hash function family} $\Hfam$ is a multiset of hash
functions, and $H$ will usually denote a uniformly random hash
function drawn from $\Hfam$. $U_{[M]}$ denotes the uniform
distribution over $[M]$. Let $\vecX = (X_1,\dots,X_T)$ be a sequence
of data items. We use $X_{<i}$ to denote the first $i-1$ items
$(X_1,\dots,X_{i-1})$. We refer to $X_i$ as an item or a block
interchangeably. Our goal is to study the distribution of hashed
sequence $(H,\vecY) = (H,Y_1,\dots,Y_T) \eqdef
(H,H(X_1),\dots,H(X_T))$.

\paragraph{Hash Families.} The \emph{truly random hash family}
$\Hfam$ is the set of all functions from $[N]$ to $[M]$. A hash
family $\Hfam$ is \emph{$s$-universal} if for every sequence of
distinct elements $x_1,\dots,x_s \in [N]$, $\Pr_H[H(x_1) = \dots =
H(x_s)] \leq 1/M^s$. $\Hfam$ is \emph{$s$-wise} independent if for
every sequence of distinct elements $x_1,\dots,x_s \in [N]$,
$H(x_1),\dots,H(x_s)$ are independent and uniform random variables
over $[M]$.

\paragraph{Block Sources and Collision Probability.} For a random
variable $X$, the collision probability of $X$ is $\cp(X) = \Pr[X =
X'] = \sum_x \Pr[X = x]^2$, where $X'$ is an independent copy of
$X$. The \emph{Renyi entropy} $\HRen(X) = \log (1/\cp(X))$ can be
viewed as a measure of the amount of randomness in $X$ (In
the randomness extractor literature, the entropy is measured by
\emph{min-entropy} $H_{\infty}(X) = \min_{x\in \supp(X)} \log (1/\Pr[X=x])$, 
but using the less stringent measure Renyi entropy makes our results 
stronger since $H_{2}(X) \geq H_{\infty}(X)$.)
For an event $E$, $(X|_E)$ is the
random variable defined by conditioning $X$ on $E$.

\begin{definition}[Block Sources]
A sequence of random variables $(X_1,\dots,X_T)$ over $[N]^T$ is a
\emph{block $K$-source} if for every $i \in [T]$, and every $x_{<i}$
in the support of $X_{<i}$, we have $\cp(X_i| X_{<i} = x_{<i}) \leq
1/K$. That is, each item $X_i$ has at least $k = \log K$ bits of
Renyi entropy even after conditioning on the previous items.
\end{definition}

Let $\vecX = (X_1,\dots,X_T)$ be a sequence of random variables over
$[M]^T$. We are interested in bounding the overall collision
probability $\cp(\vecX)$ by the collision probability of each
blocks. Suppose all $X_i$'s are independent, then $\cp(\vecX) =
\prod_{i=1}^T \cp(X_i)$.  The following lemma
generalizes Lemma 4.2
in \cite{MitzenmacherV08}, which says that if for every $\vecx \in
\vecX$, the average collision probability of every block $X_i$
conditioning on $X_{<i} = x_{<i}$ is small, then the overall
collision probability $\cp(\vecX)$ is also small. In particular, if
$\vecX$ is a block $K$-source, then $\cp(\vecX) \leq 1/K^T$.
\ifnum\short=1
(For the proof of the lemma, please refer to the full version of this paper\cite{ChungVa08TR}.)
\fi 

\begin{lemma} \label{lem:cp_upper_bound}
  Let $\vecX = (X_1,\dots,X_T)$ be a sequence of random variables
  such that for every $\vecx \in \supp(\vecX)$,
  $$\frac{1}{T}\sum_{i=1}^T \cp(X_i|_{X_{<i} = x_{<i}}) \leq \alpha.$$
  Then the overall collision probability satisfies $\cp(\vecX) \leq \alpha^T
$.
\end{lemma}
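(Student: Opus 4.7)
The plan is to prove the lemma by induction on $T$ and close the induction with a single application of the AM-GM inequality.

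For the base case $T = 1$ the conclusion is just the hypothesis. For $T \geq 2$, let $s := \cp(X_1)$, which is a constant determined by $\vecX$. The hypothesis
$\frac{1}{T}\sum_{i=1}^T \cp(X_i|_{X_{<i} = x_{<i}}) \leq \alpha$
rearranges to
$\frac{1}{T-1}\sum_{i=2}^T \cp(X_i|_{X_{<i} = x_{<i}}) \leq \alpha' := (T\alpha - s)/(T-1)$
for every $\vecx \in \supp(\vecX)$. Since $\alpha'$ is independent of $x_1$, this exactly says that for every $x_1 \in \supp(X_1)$, the $(T-1)$-block tail $(X_2,\dots,X_T)|_{X_1 = x_1}$ satisfies the hypothesis of the lemma with parameter $\alpha'$. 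The inductive hypothesis therefore gives $\cp((X_2,\dots,X_T)|_{X_1 = x_1}) \leq (\alpha')^{T-1}$ for every $x_1$, and plugging into the decomposition
\[
\cp(\vecX) \;=\; \sum_{x_1} \Pr[X_1 = x_1]^2 \cdot \cp((X_2,\dots,X_T)|_{X_1 = x_1})
\]
yields $\cp(\vecX) \leq s \cdot (\alpha')^{T-1}$.

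The remaining inequality $s \cdot ((T\alpha - s)/(T-1))^{T-1} \leq \alpha^T$ follows from AM-GM applied to the $T$-element multiset consisting of $s$ together with $T-1$ copies of $\alpha'$: their arithmetic mean is exactly $\alpha$, so their geometric mean $s^{1/T} (\alpha')^{(T-1)/T}$ is at most $\alpha$, which on raising to the $T$-th power gives the claim.

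The only subtle point is verifying that $s = \cp(X_1)$ is genuinely a constant independent of $x_1$, so that the tail hypothesis holds uniformly in $x_1$ with the same parameter $\alpha'$ and the inductive hypothesis applies cleanly; this is immediate from the definition, and equality holds in the final AM-GM step exactly when $s = \alpha$, which is why the bound cannot be tightened. Everything else is routine once this is set up.
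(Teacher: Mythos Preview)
Your proof is correct and uses the same ingredients as the paper: induction on $T$ via the decomposition $\cp(\vecX) = \sum_{x_1} \Pr[X_1=x_1]^2 \cdot \cp((X_2,\ldots,X_T)|_{X_1=x_1})$, together with AM--GM. The paper organizes things slightly differently---it first applies AM--GM pointwise to obtain $\prod_i \cp(X_i|_{X_{<i}=x_{<i}}) \leq \alpha^T$ for every $\vecx$, and then proves the cleaner intermediate bound $\cp(\vecX) \leq \max_{\vecx} \prod_i \cp(X_i|_{X_{<i}=x_{<i}})$ by induction---but the substance is the same.
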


\ifnum\short=0
\begin{proof}
By Arithmetic Mean-Geometric Mean Inequality, the
inequality in the premise implies
$$\prod_{i=1}^T \cp(X_i|_{X_{<i} = x_{<i}}) \leq \alpha^T.$$
Therefore, it suffices to prove
$$\cp(\vecX) \leq \max_{\vecx\in \supp(\vecX)} \prod_{i=1}^T \cp(X_i|
_{X_{<i} =
  x_{<i}}).$$
 We prove it by induction on $T$.  The base case $T=1$ is trivial.
  Suppose the lemma is true for $T-1$. We have
  \begin{eqnarray*}
    \cp(\vecX) & = & \sum_{x_1} \Pr[X_1 = x_1]^2 \cdot \cp(X_2,\dots,X_T|_{X_1 = x_1}) \\
               & \leq & \left(\sum_{x_1} \Pr[X_1 = x_1]^2\right) \cdot  \max_{x_1} \cp(X_2,\dots,X_T|_{X_1 = x_1}) \\
               & \leq & \cp(X_1) \cdot \max_{x_1} \left( \max_{x_2,\dots,x_T} \prod_{i=2}^T \cp(X_i|_{X_{<i} = x_{<i}}) \right)  \\
               & = & \max_{\vecx} \prod_{i=1}^T \cp(X_i|_{X_{<i} = x_{<i}}),
  \end{eqnarray*}
  as desired.
\end{proof}
\fi

\paragraph{Statistical Distance.} The statistical distance is a
standard way to measure the distance of two distributions. Let $X$
and $Y$ be two random variables. The \emph{statistical distance} of $X$ and
$Y$ is $\Delta(X,Y) = \max_T |\Pr[X \in T] - \Pr[Y \in T]| = (1/2)
\cdot \sum_x |\Pr[X=x] - \Pr[Y=x]|$, where $T$ ranges over all
possible events. When $\Delta(X,Y)\leq \eps$, we say that $X$ is
\emph{$\eps$-close} to $Y$. Similarly, if $\Delta(X,Y)\geq\eps$, then $X$
is \emph{$\eps$-far} from $Y$. The following standard lemma says that if $X$ 
has
small collision probability, then $X$ is close to uniform in
statistical distance.

\begin{lemma} \label{lem:cp_to_stat}
  Let $X$ be a random variable over $[M]$ such that $\cp(X) \leq
  (1+\eps)/M$. Then $\Delta(X,U_{[M]}) \leq \sqrt{\eps}$.
\end{lemma}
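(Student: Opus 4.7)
The plan is to pass from $L_1$ distance to $L_2$ distance via Cauchy--Schwarz, and then rewrite the $L_2$ norm in terms of collision probability. Let $p_x = \Pr[X = x]$ and let $u_x = 1/M$ denote the uniform distribution. Recall the identity
\[
\Delta(X, U_{[M]}) \;=\; \tfrac{1}{2}\sum_{x \in [M]} |p_x - 1/M|.
\]

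First, I would apply Cauchy--Schwarz to the sum on the right: treating $|p_x - 1/M|$ as the inner product of the vector $(|p_x - 1/M|)_x$ with the all-ones vector of length $M$, we get
\[
\sum_{x} |p_x - 1/M| \;\leq\; \sqrt{M} \cdot \sqrt{\sum_x (p_x - 1/M)^2}.
\]

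Next, I would expand the $L_2$ quantity and relate it to $\cp(X)$. Writing out the square,
\[
\sum_x (p_x - 1/M)^2 \;=\; \sum_x p_x^2 \;-\; \tfrac{2}{M}\sum_x p_x \;+\; \tfrac{1}{M} \;=\; \cp(X) - \tfrac{1}{M}.
\]
Under the hypothesis $\cp(X) \leq (1+\eps)/M$, this is at most $\eps/M$.

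Combining the two steps,
\[
\Delta(X, U_{[M]}) \;\leq\; \tfrac{1}{2} \sqrt{M} \cdot \sqrt{\eps/M} \;=\; \tfrac{1}{2}\sqrt{\eps} \;\leq\; \sqrt{\eps},
\]
which is the claimed bound. There is no real obstacle here; the only step that requires a moment of care is checking the algebra $\sum_x (p_x - 1/M)^2 = \cp(X) - 1/M$, which uses $\sum_x p_x = 1$ to cancel the cross term against the constant term. The argument is the standard $L_2 \to L_1$ conversion underlying the Leftover Hash Lemma.
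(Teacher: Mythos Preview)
Your proof is correct and is exactly the standard Cauchy--Schwarz argument. The paper does not actually give a proof of this lemma; it merely calls it a ``standard lemma,'' and your argument is the one that is implicitly being invoked (indeed you recover the slightly stronger bound $\Delta(X,U_{[M]})\le \tfrac{1}{2}\sqrt{\eps}$).
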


%

\paragraph{Conditional Collision Probability.} Let $(X,Y)$ be jointly
distributed random variables. We can define the conditional Renyi
entropy of $X$ conditioning on $Y$ as follows.

\begin{definition} \label{def:cond_cp}
The \emph{conditional collision probability} of $X$ conditioning on $Y$
is $\cp(X|Y) =$ \\ $\E_{y\leftarrow Y}[\cp(X|_{Y=y})]$. The
\emph{conditional Renyi entropy} is $\HRen(X|Y) = \log 1/\cp(X|Y)$.
\end{definition}

The following lemma
says that as in the case of Shannon entropy, conditioning can only decrease the entropy.
\ifnum\short=1
(For the proof of the lemma, please refer to the full version of this paper\cite{ChungVa08TR}.)
\fi 

\begin{lemma} \label{lem:cond_cp}
Let $(X,Y,Z)$ be jointly distributed random variables. We have
$\cp(X) \leq \cp(X|Y) \leq \cp(X|Y,Z)$.
\end{lemma}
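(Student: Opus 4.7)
The plan is to prove the two inequalities separately, with the first being the essential convexity argument and the second reducing to the first by conditioning.

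For $\cp(X) \leq \cp(X|Y)$, I would start by writing $\Pr[X=x] = \sum_y \Pr[Y=y] \cdot \Pr[X=x \mid Y=y]$ via the law of total probability, so that
\[
\cp(X) = \sum_x \left(\sum_y \Pr[Y=y]\cdot \Pr[X=x\mid Y=y]\right)^{\!2}.
\]
Then apply the Cauchy--Schwarz inequality (equivalently, Jensen's inequality for the convex function $t\mapsto t^2$) to the inner sum, treating $\{\Pr[Y=y]\}_y$ as the probability weights, to obtain
\[
\left(\sum_y \Pr[Y=y]\cdot \Pr[X=x\mid Y=y]\right)^{\!2} \leq \sum_y \Pr[Y=y]\cdot \Pr[X=x\mid Y=y]^2.
\]
Swapping the order of summation and recognizing the inner sum over $x$ as $\cp(X|_{Y=y})$ gives $\cp(X) \leq \E_{y\leftarrow Y}[\cp(X|_{Y=y})] = \cp(X|Y)$.

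For the second inequality $\cp(X|Y) \leq \cp(X|Y,Z)$, I would apply the first inequality pointwise. For each fixed $y$ in the support of $Y$, the joint distribution $(X,Z)|_{Y=y}$ is itself a pair of jointly distributed random variables, and the first inequality gives
\[
\cp(X|_{Y=y}) \leq \E_{z \leftarrow (Z|_{Y=y})}\bigl[\cp(X|_{Y=y,Z=z})\bigr].
\]
Taking expectation over $y \leftarrow Y$ on both sides and applying the tower property for expectations collapses the nested expectation into $\E_{(y,z)\leftarrow(Y,Z)}[\cp(X|_{Y=y,Z=z})] = \cp(X|Y,Z)$, yielding the desired bound.

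There is no real obstacle here; the only subtlety to be careful about is bookkeeping with the conditional probabilities (in particular, that $\Pr[Z=z \mid Y=y] \cdot \Pr[Y=y] = \Pr[Y=y, Z=z]$, so the outer expectation after applying the first inequality inside is indeed the joint expectation defining $\cp(X|Y,Z)$). Both inequalities are tight when the appropriate conditional distributions do not depend on the conditioning variable, which matches the intuition that conditioning on more information can only reduce Renyi entropy.
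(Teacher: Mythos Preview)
Your proof is correct, and for the second inequality it matches the paper's argument exactly. For the first inequality your route is slightly different and in fact more direct: you fix $x$ and apply Jensen's inequality to the average over $y$ inside the square, giving $\cp(X)\leq \cp(X|Y)$ in one step. The paper instead expands the square into a double sum over $y,y'$, applies Cauchy--Schwarz to the inner product $\sum_x \Pr[X=x|Y=y]\Pr[X=x|Y=y']$, and obtains the intermediate bound $\cp(X)\leq \bigl(\E_{y\leftarrow Y}[\cp(X|_{Y=y})^{1/2}]\bigr)^2$ before a second application of Jensen yields $\cp(X|Y)$. Your argument is cleaner; the paper's yields a marginally stronger intermediate inequality, though that extra strength is not used anywhere.
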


\ifnum\short=0
\begin{proof}
  For the first inequality, we have
\begin{eqnarray*}
  \cp(X) & = & \sum_x \Pr[X=x]^2\\
        & = & \sum_{y,y'} \Pr[Y=y] \cdot \Pr[Y=y'] \cdot \left(\sum_x
        \Pr[X=x | Y = y] \cdot \Pr[X=x | Y = y']\right) \\
        & \leq & \sum_{y,y'} \Pr[Y=y] \cdot \Pr[Y=y'] \cdot \left(\sum_x 
\Pr[X = x|Y
        =y]^2\right)^{1/2} \cdot \left(\sum_x \Pr[X = x|Y 
=y']^2\right)^{1/2} \\
        & = & \E_{y\leftarrow Y}\left[\cp(X|Y=y)^{1/2} \right]^2\\
        & \leq &  \cp(X|Y) \\ 
\end{eqnarray*}
For the second inequality, observe that for every $y$ in the support
of $Y$, we have $\cp(X|_{Y=y}) \leq \cp((X|_{Y=y})|(Z|_{Y=y}))$ from
the first inequality. It follows that
\begin{eqnarray*}
\cp(X|Y) & = & \E_{y\leftarrow Y}[\cp(X|_{Y=y})] \\
         & \leq & \E_{y\leftarrow Y}[\cp((X|_{Y=y})|(Z|_{Y=y}))] \\
         & = & \E_{y\leftarrow Y}[ \E_{z\leftarrow (Z|Y=y)}[\cp(X|
_{Y=y,Z=z})] \\
         & = & \cp(X|Y,Z)
\end{eqnarray*}
\end{proof}
\fi

\section{Positive Results: How Much Entropy is Sufficient?}

In this section, we present our positive results, showing that the
distribution of hashed sequence $(H,\vecY) =
(H,H(X_1),\dots,H(X_T))$ is close to uniform when $H$ is a random
hash function from a $2$-universal hash family, and $\vecX =
(X_1,\dots,X_T)$ has sufficient entropy per block. The new
contribution is that we will not need $K = 2^k$ to be as large as in
previous works, and so save the required randomness in the block
source $\vecX = (X_1,\dots,X_T)$.

\subsection{Small Collision Probability Using $2$-universal Hash
Functions}

	Let $H:[N]\rightarrow [M]$ be a random hash function from a $2$-universal family $\Hfam$. We first study the conditions under which $(H,\vecY) = (H,H(X_1),\dots,H(X_T))$ is $\eps$-close to having collision probability $O(1/(|\Hfam|\cdot M^T))$. This requirement is less stringent than $(H,\vecY)$ being $\eps$-close to uniform in statistical distance, and so requires less bits of entropy. Mitzenmacher and Vadhan~\cite{MitzenmacherV08} show that this guarantee	suffices for some hashing applications. They show that $K \geq MT^2/\eps$ is enough to satisfy the requirement. We save a factor of $T$, and show that in fact, $K \geq MT/\eps$, is sufficient. (Taking logs yields the first entry in Table \ref{tbl:analysis}, i.e. it suffices to have Renyi entropy $k=m+\log T+\log(1/\eps)$ per block.) Formally, we prove the following theorem.

\begin{theorem} \label{thm:2-univ_cp}
  Let $H:[N]\rightarrow[M]$ be a random hash function from a
  $2$-universal family $\Hfam$. Let $\vecX = (X_1,\dots,X_T)$ be a
  block $K$-source over $[N]^T$. For every $\eps > 0$, the hashed
  sequence
  $(H,\vecY) = $ \\ $(H,H(X_1),\dots,H(X_T))$ is $\eps$-close to a
  distribution $(H,\vecZ) = (H,Z_1,\dots,Z_T)$ such that
  $$\cp(H,\vecZ) \leq \frac{1}{|\Hfam|\cdot M^T}\left(
  1+\frac{M}{K\eps}\right)^T.$$
  In particular, if $K \geq MT/\eps$, then $(H,\vecZ)$ has collision
  probability at most $(1+2MT/K\eps)/(|\Hfam|\cdot M^T$).
\end{theorem}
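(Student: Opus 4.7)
The plan is to introduce an aggregated ``excess'' conditional collision probability $S_T$ with $\E[S_T]\le T/K$, apply Markov once to get $\Pr[S_T>T/(K\eps)]\le\eps$, define $\vecZ$ by truncating $\vecY$ at the stopping time $\tau$ at which this quantity crosses $T/(K\eps)$ and filling the remainder with uniforms, and finally invoke Lemma~\ref{lem:cp_upper_bound} with an AM--GM step whose totals telescope cleanly \emph{regardless of where $\tau$ fires}.

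First I would set $\alpha_i(h,y_{<i})=\cp(Y_i\mid H=h,Y_{<i}=y_{<i})\ge 1/M$ (the lower bound because $\alpha_i$ is the collision probability of a distribution on $[M]$) and the nonnegative partial sums $S_i(h,y_{<i})=\sum_{j\le i}(\alpha_j(h,y_{<j})-1/M)$. Splitting $\cp(h(X_i)\mid X_{<i}=x_{<i})$ according to whether $X_i=X_i'$ and using $2$-universality gives $\cp(Y_i\mid H,X_{<i})\le 1/K+1/M$. Lemma~\ref{lem:cond_cp} then yields $\E[\alpha_i(H,Y_{<i})]=\cp(Y_i\mid H,Y_{<i})\le\cp(Y_i\mid H,X_{<i})\le 1/K+1/M$, and hence $\E[S_T]\le T/K$, so Markov gives $\Pr[S_T>T/(K\eps)]\le\eps$.

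Next, I define $\vecZ$ operationally: sample $H$ uniformly, then sequentially sample $Z_i$; as long as $S_i(H,Z_{<i})\le T/(K\eps)$ (and $(H,Z_{<i})$ is still in the support of $(H,Y_{<i})$), draw $Z_i$ from $Y_i\mid H=h,Y_{<i}=Z_{<i}$; otherwise draw $Z_i$ uniformly on $[M]$. Let $\tau$ be the first step at which the uniform branch is taken (or $T+1$ if never). The natural coupling $Z_i=Y_i$ while $i<\tau$, with independent uniforms thereafter, gives $\Delta((H,\vecY),(H,\vecZ))\le\Pr[\tau\le T]\le\Pr[S_T>T/(K\eps)]\le\eps$. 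By convexity of $\cp$, the conditional collision probability in the $\vecZ$-distribution satisfies $\cp(Z_i\mid h,z_{<i})\le\alpha_i(h,z_{<i})$ when $S_i(h,z_{<i})\le T/(K\eps)$, and $\cp(Z_i\mid h,z_{<i})=1/M$ once $S_i(h,z_{<i})>T/(K\eps)$. Hence, for every $(h,\vecz)$ in the support,
\[
\sum_{i=1}^{T}\cp(Z_i\mid h,z_{<i})\;\le\;\sum_{i<\tau}\alpha_i(h,z_{<i})+\tfrac{T-\tau+1}{M}\;=\;S_{\tau-1}+\tfrac{T}{M}\;\le\;\tfrac{T}{K\eps}+\tfrac{T}{M},
\]
using $S_{\tau-1}\le T/(K\eps)$ by definition of $\tau$. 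AM--GM then gives $\prod_i\cp(Z_i\mid h,z_{<i})\le(1/M+1/(K\eps))^T$ for every trajectory, and Lemma~\ref{lem:cp_upper_bound} applied to $\vecZ\mid H=h$ yields $\cp(\vecZ\mid H)\le(1/M+1/(K\eps))^T$, so that $\cp(H,\vecZ)=\cp(\vecZ\mid H)/|\Hfam|\le\frac{1}{|\Hfam|M^T}(1+M/(K\eps))^T$. The ``in particular'' clause follows from $(1+x)^T\le 1+2Tx$ when $Tx\le 1$, applied with $x=M/(K\eps)\le 1/T$.

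The heart of the argument — and the source of the improvement over~\cite{MitzenmacherV08} — is the telescoping in the displayed identity: the ``unused budget'' $(\tau-1)/M$ before stopping exactly cancels the $1/M$ paid per block after stopping, so that the arithmetic-mean bound on $\cp(Z_i\mid h,z_{<i})$ does not depend on where $\tau$ fires. A \emph{single} Markov estimate on $S_T$ therefore replaces the per-block Markov plus union-bound used in~\cite{MitzenmacherV08}, saving a factor of $T$ in the entropy requirement.
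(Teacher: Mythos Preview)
Your proposal is correct and follows essentially the same approach as the paper. Your stopping-time construction of $(H,\vecZ)$ is exactly the paper's Lemma~\ref{lem:modify} (truncate the $\vecY$-trajectory at the first index where the running excess $S_i$ crosses $T/(K\eps)$ and fill with uniforms), and your single Markov application on $S_T$ matches the paper's Inequality~(\ref{eq:Markov}); the telescoping identity you highlight is precisely what makes the paper's ``it is easy to check'' in Lemma~\ref{lem:modify} go through. One cosmetic remark: the ``convexity of $\cp$'' justification is not actually needed, since whenever $S_i(h,z_{<i})\le T/(K\eps)$ and $(h,z_{<i})\in\supp(H,Y_{<i})$, monotonicity of $S$ forces all earlier steps to have followed $Y$, so $\cp(Z_i\mid h,z_{<i})=\alpha_i(h,z_{<i})$ with equality.
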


To analyze the distribution of the hashed sequence $(H,\vecY)$, the
starting point is the following version of the
Leftover Hash Lemma~\cite{BennettBrRo85,ImpagliazzoLL89}, which says that 
when we
hash a random variable $X$ with enough entropy using a $2$-universal
hash function $H$, the conditional collision probability of $H(X)$
conditioning on $H$ is small.

\begin{lemma}[The Leftover Hash Lemma] \label{lem:LHL}
Let $H:[N]\rightarrow [M]$ be a random hash function from a
$2$-universal family $\Hfam$. Let $X$ be a random variable over
$[N]$ with $\cp(X) \leq 1/K$. We have $\cp(H(X)|H) \leq 1/M + 1/K$.
\end{lemma}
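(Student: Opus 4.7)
The plan is to rewrite the conditional collision probability as a collision probability over an independent pair, and then split into two cases according to whether the underlying elements collide.

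First, I would unfold the definition from Definition \ref{def:cond_cp}. Let $X'$ be an independent copy of $X$. Then
\[
\cp(H(X)\mid H) \;=\; \E_{h\leftarrow H}\!\left[\sum_{y}\Pr[h(X)=y]^2\right] \;=\; \E_{h\leftarrow H}\Pr_{X,X'}[h(X)=h(X')] \;=\; \Pr_{H,X,X'}[H(X)=H(X')],
\]
where $H$, $X$, $X'$ are all mutually independent. This rewriting is the key observation; everything else is bookkeeping.

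Next, I would split the event $\{H(X)=H(X')\}$ according to whether $X=X'$:
\[
\Pr[H(X)=H(X')] \;=\; \Pr[X=X'] \;+\; \Pr[X\neq X',\,H(X)=H(X')].
\]
The first summand is exactly $\cp(X)\le 1/K$ by hypothesis. For the second summand, I would use $2$-universality: for any fixed pair $x\neq x'$ we have $\Pr_H[H(x)=H(x')]\le 1/M$, so averaging over the (independent) pair $(X,X')$ conditioned on $X\neq X'$ gives $\Pr[H(X)=H(X')\mid X\neq X']\le 1/M$, and hence $\Pr[X\neq X',\,H(X)=H(X')]\le 1/M$. Adding the two bounds yields $\cp(H(X)\mid H)\le 1/K+1/M$, as desired.

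There is no real obstacle here; the only point that requires a touch of care is ensuring that $H$ is independent of $(X,X')$ when invoking $2$-universality after conditioning on $X\neq X'$, which follows because $H$ is drawn independently of $X$ and $X'$. A mild optimization ($(1-1/K)/M$ instead of $1/M$ for the second term) is available but not needed for the stated bound.
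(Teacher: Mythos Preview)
Your argument is correct and is exactly the standard proof of this form of the Leftover Hash Lemma. Note that the paper does not actually prove Lemma~\ref{lem:LHL}; it is stated with a citation to \cite{BennettBrRo85,ImpagliazzoLL89} and used as a black box, so there is no paper proof to compare against beyond confirming that your derivation matches the classical one.
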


We now sketch how the hashed block source $\vecY=(Y_1,\ldots,Y_T)=
(H(X_1),\ldots,H(X_T))$ is analyzed in \cite{MitzenmacherV08}, and how
we improve the analysis.
The following natural approach is taken in
\cite{MitzenmacherV08}. Since the data $\vecX$ is a block
$K$-source, the Leftover Hash Lemma tells us that for every block
$i\in [T]$, if we condition on the previous blocks $X_{<i}=x_{<i}$, then the
hashed value $(Y_i|_{X_{<i}=x_{<i}})$ has small conditional
collision probability, i.e. $\cp((Y_i|_{X_{<i} = x_{<i}})|H) \leq 1/M +
1/K$. This is equivalent to saying that the average collision
probability of $(Y_i|_{X_{<i} = x_{<i}})$ over the choice of the
hash function $H$ is small, i.e.,
$$\E_{h\leftarrow H}[ \cp(h(X_i)|_{X_{<i} = x_{<i}})] = \cp((Y_i|_{X_{<i} =
x_{<i}})|H) \leq \frac{1}{M} + \frac{1}{K}.$$ We can then use a
Markov argument to say that for every block, with probability at
least $1-\eps/T$ over $h\leftarrow H$, the collision probability is
at most $1/M + T/(K\eps)$. We can then take a union bound to say
that for every $\vecx \in \supp(\vecX)$, at least
$(1-\eps)$-fraction of hash functions $h$ are good in the sense that
$\cp(h(X_i)|_{X_{<i} = x_{<i}})$ is small for all blocks $i =
1,\dots,T$. \cite{MitzenmacherV08} shows that if this condition is
true for every $(h,\vecx)\in \supp(H,\vecX)$, then $\vecY$ is a
block $(1/M+T/(K\eps))$-source, and thus the overall collision
probability is at most $(1+MT/K\eps)^T/M^T$. \cite{MitzenmacherV08}
also shows how to modify an $\eps$-fraction of the distribution to
fix the bad hash functions, and thus complete the analysis.

The problem of the above analysis is that taking a Markov argument
for each block, and then taking a union bound incurs a loss of
factor $T$. To avoid this, we want to apply Markov argument only
once to the whole sequence. For example, a natural thing to try is
to sum over blocks to get
$$\E_{h\leftarrow H}\left[ \frac{1}{T}\sum_{i=1}^T \cp(h(X_i)|_{X_{<i} = 
x_{<i}})\right] = \frac{1}{T} \sum_{i=1}^T \cp((Y_i|_{X_{<i} =
x_{<i}})|H) \leq \frac{1}{M} + \frac{1}{K},$$ and use a Markov argument
to deduce that for every $\vecx \in \supp(\vecX)$, with probability
$1-\eps$ over $h\leftarrow H$, the average collision probability per block
satisfies $$\frac{1}{T}\cdot\sum_{i=1}^T \cp(h(X_i)|_{X_{<i} =
x_{<i}}) \leq \frac{1}{M} + \frac{1}{K\eps}.$$  We need to bound the
collision probability of $\vecY$ using this information. We may want
to apply Lemma \ref{lem:cp_upper_bound}, but it requires the
information on $(1/T)\sum_{i=1}^T \cp(Y_i|_{Y_{<i} = y_{<i}})$
instead of $(1/T)\sum_{i=1}^T  \cp(h(X_i)|_{X_{<i} = x_{<i}})$.
That is, Lemma~\ref{lem:cp_upper_bound} requires us to condition on previous
{\em hashed values} $Y_{<i}$, whereas the above argument refers to 
conditioning
on the un-hashed values $X_{<i}$.  The difficulty with directly reasoning 
about the former
is that conditioned on the hashed values $Y_{<i}$, the hash function $H$ may 
no longer be uniform
(as it is correlated with $Y_{<i}$) and thus the Leftover Hash Lemma no 
longer applies.

To get around with the issues, we work with the
averaged form of conditional collision
probability $\cp(Y_i|H,Y_{<i})$, as from Definition~\ref{def:cond_cp}.
Our key observation is that now we can apply Lemma~\ref{lem:cond_cp}
to deduce that for every block $i \in [T]$, the
conditional collision probability satisfies
$\cp(Y_i|H,Y_{<i}) \leq \cp(Y_i|H,X_{<i}) \leq
1/M + 1/K$.  Then, by a Markov argument, it follows that with
probability $1-\eps$ over $(h,\vecy)\leftarrow (H,\vecY)$, the
average collision probability satisfies
$$\frac{1}{T}\sum_{i=1}^T \cp(Y_i|_{(H,Y_{<i}) = (h,y_{<i})}) \leq \frac{1}
{M} + \frac{1}{K\eps}.$$
We can then modify an $\eps$-fraction of distribution, and apply
Lemma \ref{lem:cp_upper_bound} to complete the analysis.

The following lemma formalizes our claim about that the conditional collision probability
of every block of $(H,\vecY)$ is small.

\begin{lemma} \label{lem:small_cond_cp}
  Let $H:[N]\rightarrow[M]$ be a random hash function from a
  $2$-universal family $\Hfam$. Let $\vecX = (X_1,\dots,X_T)$ be a
  block $K$-source over $[N]^T$. Let $(H,\vecY) = (H,H(X_1),\dots,H(X_T))$.
  Then $\cp(H) = 1/|\Hfam|$ and for every $i\in [T]$, $\cp(Y_i|H,Y_{<i}) 
\leq 1/M + 1/K$.
\end{lemma}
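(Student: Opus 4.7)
The plan is to handle the two conclusions separately. The claim $\cp(H) = 1/|\Hfam|$ is immediate from $H$ being uniform on $\Hfam$: $\cp(H) = \sum_{h \in \Hfam} (1/|\Hfam|)^2 = 1/|\Hfam|$. The interesting part is the bound on $\cp(Y_i \mid H, Y_{<i})$.

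The key idea is exactly the one emphasized in the discussion preceding the lemma: rather than trying to apply the Leftover Hash Lemma directly while conditioning on $Y_{<i}$ (where $H$ is no longer independent of the block source), we move the conditioning back to the un-hashed blocks using Lemma~\ref{lem:cond_cp}. Concretely, first I would invoke Lemma~\ref{lem:cond_cp} with the triple $(X, Y, Z) = (Y_i,\, (H, Y_{<i}),\, X_{<i})$ to get
\[
\cp(Y_i \mid H, Y_{<i}) \;\leq\; \cp(Y_i \mid H, Y_{<i}, X_{<i}).
\]
Since $Y_{<i} = (H(X_1), \dots, H(X_{i-1}))$ is a deterministic function of $(H, X_{<i})$, conditioning on it in addition is redundant, so the right-hand side equals $\cp(Y_i \mid H, X_{<i})$.

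Next I would unfold $\cp(Y_i \mid H, X_{<i}) = \E_{x_{<i} \gets X_{<i}}\bigl[\cp(Y_i \mid_{X_{<i} = x_{<i}} \,\mid H)\bigr]$ using the definition of conditional collision probability and the independence of $H$ from $\vecX$. For each fixed $x_{<i} \in \supp(X_{<i})$, the block-source property gives $\cp(X_i \mid_{X_{<i} = x_{<i}}) \leq 1/K$, so the Leftover Hash Lemma (Lemma~\ref{lem:LHL}) applied to the random variable $(X_i \mid_{X_{<i} = x_{<i}})$ yields
\[
\cp\bigl(H(X_i) \mid_{X_{<i} = x_{<i}} \,\bigm|\, H\bigr) \;\leq\; \frac{1}{M} + \frac{1}{K}.
\]
Averaging over $x_{<i}$ preserves this bound, giving $\cp(Y_i \mid H, X_{<i}) \leq 1/M + 1/K$, and combining with the first inequality completes the proof.

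I do not expect any real obstacle here: the whole content of the lemma is the clean interaction between Lemma~\ref{lem:cond_cp} and the Leftover Hash Lemma, and the only subtlety worth stating carefully is that $Y_{<i}$ is a function of $(H, X_{<i})$, which lets us drop it after conditioning on the un-hashed prefix. All other manipulations are direct unfolding of the definition of $\cp(\cdot \mid \cdot)$.
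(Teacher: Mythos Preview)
Your proposal is correct and follows essentially the same approach as the paper: apply Lemma~\ref{lem:cond_cp} to pass from conditioning on $(H,Y_{<i})$ to conditioning on $(H,X_{<i})$ (using that $Y_{<i}$ is a function of $(H,X_{<i})$), and then bound $\cp(Y_i\mid H,X_{<i})$ pointwise via the Leftover Hash Lemma. The only difference is the order in which you present the two steps, which is immaterial.
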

\begin{proof}
  $\cp(H) = 1/|\Hfam|$ is trivial since $H$ is the uniform distribution.
  Fix $i \in [T]$. By the definition of block $K$-source, for every $x_{<i}$
  in the support of $X_{<i}$, $\cp(X_i|_{X_{<i} = x_{<i}}) \leq
  1/K$. By the Leftover Hash Lemma, we have $\cp((Y_i|_{X_{<i} =
  x_{<i}})|(H|_{X_{<i} = x_{<i}}) ) \leq 1/M + 1/K$ for every $x_{<i}$. It 
follows that
  $\cp(Y_i|H,X_{<i}) \leq 1/M + 1/K$. Now, we can think of
  $(Y_i|H,X_{<i})$ as $Y_i$ first conditioning on $(H,Y_{<i})$, and then
  further conditioning on $X_{<i}$. By Lemma \ref{lem:cond_cp},
  we have $$\cp(Y_i|H,Y_{<i}) \leq \cp(Y_i|H,Y_{<i},X_{<i}) = \cp(Y_i|
H,X_{<i}) \leq
  1/M + 1/K,$$
  as desired.
\end{proof}

\ifnum\short=1
The remaining part of the proof follows the above sketch closely. Details can be found in the full version of this paper\cite{ChungVa08TR}.
\fi

\ifnum\short=0
We use this to prove Theorem \ref{thm:2-univ_cp} as outlined above.

\begin{proofof}{of Theorem \ref{thm:2-univ_cp}}
By Lemma \ref{lem:small_cond_cp}, for every $i \in [T]$, we have
$$\E_{(h,\vecy)\leftarrow (H,\vecY)}\left[\cp(Y_i|_{(H,Y_{<i}) = 
(h,y_{<i})})\right]
= \cp(Y_i|H,Y_{<i}) \leq \frac{1}{M} + \frac{1}{K}.$$ By linearity
of expectation, the average conditional collision probability is
also small.
\begin{equation*}
\E_{(h,\vecy)\leftarrow (H,\vecY)}\left[\frac{1}{T} \sum_{i=1}^T
\cp(Y_i|_{(H,Y_{<i}) = (h,y_{<i})})\right] \leq\frac{1}{M} +
\frac{1}{K}.
\end{equation*}
Note that the collision probability of a random variable over $[M]$
is at least $1/M$. Thus, Markov's inequality implies that with
probability at least $1-\eps$ over $(h,\vecy)\leftarrow (H,\vecY)$,
\begin{equation}\label{eq:Markov}
\frac{1}{T} \sum_{i=1}^T \cp(Y_i|_{(H,Y_{<i}) =(h,y_{<i})}) \leq
\frac{1}{M} + \frac{1}{K\eps} = \frac{1}{M}\cdot
\left(1+\frac{M}{K\eps}\right).
\end{equation}
In Lemma \ref{lem:modify} below, we show how to fix the \emph{bad}
$(h,\vecy)$'s by modifying at most $\eps$-fraction of the
distribution. Formally, Lemma \ref{lem:modify} says that there
exists a distribution $(H,\vecZ) = (H,Z_1,\dots,Z_T)$ such that
$(H,\vecY)$ is $\eps$-close to $(H,\vecZ)$, and for every
$(h,\vecz)\leftarrow (H,\vecZ)$,
$$\frac{1}{T} \sum_{i=1}^T \cp(Z_i|_{(H,Z_{<i}) =(h,z_{<i})}) \leq \frac{1}
{M}\cdot
\left(1+\frac{M}{K\eps}\right).$$ Applying  Lemma
\ref{lem:cp_upper_bound} on $(\vecZ|_{H=h})$ for every $h\in
\supp(\Hfam)$, we have
$$\cp(H,\vecZ) = \frac{1}{|\Hfam|} \cdot \E_{h\leftarrow H}\left[\cp(\vecZ|
_{H=h})\right]  \leq \frac{1}{|\Hfam|\cdot M^T} \cdot
\left(1+\frac{M}{K\eps}\right)^T.$$
\end{proofof}

\begin{lemma} \label{lem:modify}
Let $(H,\vecY) = (H,Y_1,\dots,Y_T)$ be jointly distributed random variables over $\Hfam \times [M]^T$ such that with probability at least $1- \eps$ over $(h,\vecy) \leftarrow (H,\vecY)$, the average conditional collision probability satisfies
  $$\frac{1}{T}\cdot \sum_{i=1}^T \cp(Y_i|_{(H,Y_{<i}) = (h,y_{<i})}) \leq \frac{1}{M} + \alpha.$$
Then there exists a distribution $(H,\vecZ) = (H,Z_1,\dots,Z_T)$ such that $(H,\vecZ)$ is $\eps$-close to $(H,\vecY)$, and for every $(h,\vecz) \in \supp(H,\vecZ)$, we have
  $$\frac{1}{T}\cdot \sum_{i=1}^T \cp(Z_i|_{(H,Z_{<i}) = (h,z_{<i})}) \leq \frac{1}{M} + \alpha.$$
Furthermore, the marginal distribution of $H$ is unchanged.
\end{lemma}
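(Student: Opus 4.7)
Let $\mathcal{B} := \{(h, \vecy) : f_Y(h, \vecy) > 1/M + \alpha\}$, where $f_Y(h, \vecy) := \frac{1}{T}\sum_{i=1}^T \cp(Y_i|_{(H, Y_{<i}) = (h, y_{<i})})$; by hypothesis $\Pr[(H, \vecY) \in \mathcal{B}] \leq \eps$. For each $h$ in the support of $H$, write $\mathcal{B}_h, \mathcal{G}_h$ for the $h$-slices and $\eps_h := \Pr[\vecY \in \mathcal{B}_h | H = h]$, so that $\E_h[\eps_h] \leq \eps$. The plan is to modify $(H, \vecY)$ only within each $h$-slice; this preserves the marginal of $H$ by construction, and any such modification yields $\Delta((H, \vecZ), (H, \vecY)) \leq \E_h[\eps_h] \leq \eps$ via a per-slice coupling.

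The first attempt defines $(\vecZ|_{H = h})$ to be $(\vecY|_{H = h})$ conditioned on $\vecY \in \mathcal{G}_h$ (using an arbitrary distribution with small conditional cp's, e.g.\ uniform on $[M]^T$, in the degenerate case $\eps_h = 1$). This immediately satisfies both the marginal and distance conditions and places every $\vecz$ in the support inside $\mathcal{G}_h$, but the pointwise cp bound in the conclusion is evaluated under $\vecZ$'s distribution, and conditioning on $\mathcal{G}_h$ can in general \emph{increase} some conditional cp's above $\vecY$'s corresponding values. To sidestep this, I would exploit that $f_Y$ depends only on the prefix $(h, y_{<T})$, so $\mathcal{B}_h$ decomposes into ``bad cylinders'' $\{y_{<T}\} \times [M]$. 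A cleaner construction keeps the joint distribution of $(Y_1, \ldots, Y_{T-1})$ given $H$ unchanged and, for each bad prefix $y_{<T}$, replaces the terminal conditional $(Y_T|_{(H, Y_{<T}) = (h, y_{<T})})$ with the uniform distribution on $[M]$. This gives $\cp(Z_i|_{(H, Z_{<i}) = (h, z_{<i})}) = \cp(Y_i|_{(H, Y_{<i}) = (h, z_{<i})})$ for every $i < T$ and forces $\cp(Z_T|_{(H, Z_{<T}) = (h, y_{<T})}) = 1/M$ at bad prefixes, while the distance and marginal bounds are inherited from a per-slice coupling.

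The hardest step will be the cp bound at bad $(h, \vecz)$, where $f_Z(h, \vecz) = \frac{1}{T}\bigl(\sum_{i < T}\cp(Y_i|_{(H, Y_{<i}) = (h, z_{<i})}) + \tfrac{1}{M}\bigr)$ still has to be at most $1/M + \alpha$, and this can fail if an earlier block already contributes a large partial sum. The fix is to iterate the modification: proceed from $i = T$ down to $i = 1$ and, at each level, replace the conditional $(Y_i|_{(H, Y_{<i}) = (h, z_{<i})})$ by uniform on $[M]$ whenever the cumulative partial average $\frac{1}{T}\sum_{j \leq i}\cp(Y_j|_{(H, Y_{<j}) = (h, z_{<j})})$ has already crossed a suitably tuned level-dependent threshold. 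Since uniformization saturates the per-block cp at $1/M$, the pointwise $f_Z$ bound then follows block-by-block at every support prefix of $\vecZ$; the main technical content, and the main obstacle, is the bookkeeping that charges the statistical cost of each replacement to the probability of the modified prefix, using the premise on $f_Y$ to certify that the aggregate cost stays at most $\eps$ overall.
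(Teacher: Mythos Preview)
Your final idea---uniformize block $i$ whenever a prefix partial sum has crossed a threshold---is exactly the paper's construction, but you have not yet spotted the observation that dissolves all the ``bookkeeping'' you are worried about. Since $\cp(Y_i\mid\cdots)\ge 1/M$ always, the \emph{centered} partial sums
\[
S_j(h,y_{<j})\ :=\ \frac{1}{T}\sum_{i\le j}\Bigl(\cp\bigl(Y_i\bigm|_{(H,Y_{<i})=(h,y_{<i})}\bigr)-\tfrac{1}{M}\Bigr)
\]
are nondecreasing in $j$. Hence a single threshold $\alpha$ works at every level (no level-dependence needed), and the event ``some $S_j$ exceeds $\alpha$'' coincides with ``$S_T>\alpha$'', which is precisely your set $\mathcal{B}$ of mass at most $\eps$. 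The paper's procedure is accordingly: sample $(h,\vecy)\leftarrow(H,\vecY)$; if $(h,\vecy)\notin\mathcal{B}$ output it; otherwise let $j$ be the last index with $S_j\le\alpha$ and output $(h,y_1,\ldots,y_j,w_{j+1},\ldots,w_T)$ with the $w_i$ independent uniform on $[M]$. The obvious coupling gives $\Delta\bigl((H,\vecZ),(H,\vecY)\bigr)\le\Pr[\mathcal{B}]\le\eps$ in one stroke---there is no per-level charging to do, and the marginal of $H$ is untouched.

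For the pointwise bound, note that the cutoff $j$ is determined by $(h,y_{\le j})$ alone (it is a stopping time), so for every prefix $(h,z_{<i})$ in $\supp(H,Z_{<i})$ the conditional law of $Z_i$ is either the original law of $Y_i$ given $(H,Y_{<i})=(h,z_{<i})$ (when $S_i(h,z_{<i})\le\alpha$) or uniform (when $S_\ell(h,z_{<\ell})>\alpha$ for some $\ell\le i$). Summing, $f_Z(h,\vecz)=\frac{1}{T}\bigl(\sum_{i\le j}\cp(Y_i\mid\cdots)+\frac{T-j}{M}\bigr)\le \frac{j}{TM}+\alpha+\frac{T-j}{TM}=\frac{1}{M}+\alpha$, using $S_j\le\alpha$. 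So the step you flagged as the ``main obstacle'' disappears once you center the sums and use their monotonicity.
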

\begin{proof}
We define the distribution $(H,\vecZ)$ as follows.
\begin{itemize}
  \item Sample $(h,\vecy) \leftarrow (H,\vecY)$.
  \item If $(1/T)\cdot\sum_{i=1}^T \cp(Y_i|_{(H,Y_{<i}) = (h,y_{<i})}) \leq 1/M + \alpha$, then output $(h,\vecy)$.
  \item Otherwise, let $j\in [T]$ be the least index such that
    $$\frac{1}{T}\sum_{i=1}^j \left(\cp(Y_i|_{(H,Y_{<i}) = (h,y_{<i})})- \frac{1}{M}\right) \leq \alpha \mbox{ and  } \frac{1}{T}\sum_{i=1}^{j+1} \left(\cp(Y_i|_{(H,Y_{<i}) = (h,y_{<i})})-\frac{1}{M}\right) >
\alpha$$
  \item Choose $w_{j+1},\dots,w_T \leftarrow U_{[M]}$, and output $(h,y_1,\dots,y_j,w_{j+1},\dots,w_T)$.
\end{itemize}
It is easy to check that (i) $(H,\vecZ)$ is well-defined, (ii) $(H,\vecY)$ is $\eps$-close to $(H,\vecZ)$, (iii) for every $(h,\vecz) \in (H,\vecZ)$, $(1/T)\cdot \sum_{i=1}^T \cp(Z_i|_{(H,Z_{<i}) = (h,z_{<i})}) \leq 1/M + \alpha$, and (iv) the marginal distribution of $H$ is unchanged.
\end{proof}
\fi

\subsection{Small Collision Probability Using $4$-wise Independent
Hash Functions} \label{subsec:4-wise_cp}

As discussed in \cite{MitzenmacherV08}, using $4$-wise independent
hash functions $H:[N] \rightarrow [M]$ from $\Hfam$, we can further
reduce the required randomness in the data $\vecX =
(X_1,\dots,X_T)$. \cite{MitzenmacherV08} shows that in this case, $K
\geq MT + \sqrt{2MT^3/\eps}$ is enough for the hashed sequence
$(H,\vecY)$ to be $\eps$-close to having collision probability
$O(1/|\Hfam|\cdot M^T)$. As discussed in the previous subsection, by
avoiding using union bounds, we show that $K \geq MT +
\sqrt{2MT^2/\eps}$ suffices. (Taking logs yields the second entry in
Table \ref{tbl:analysis}, i.e. it suffices to have Renyi entropy
$k=\max\{m+\log T,(1/2)\cdot(m+2\log T+\log(1/\eps))\}+O(1)$ per
block.) Formally, we prove the following theorem.

\begin{theorem} \label{thm:4-wise_cp}
  Let $H:[N]\rightarrow[M]$ be a random hash function from a
  $4$-wise independent family $\Hfam$. Let $\vecX = (X_1,\dots,X_T)$ be a
  block $K$-source over $[N]^T$. For every $\eps > 0$, the hashed
  sequence
  $(H,\vecY) = (H,H(X_1),\dots,H(X_T))$ is $\eps$-close to a
  distribution $(H,\vecZ) = (H,Z_1,\dots,Z_T)$ such that
  $$\cp(H,\vecZ) \leq \frac{1}{|\Hfam|\cdot M^T}\left(
  1+\frac{M}{K} + \sqrt{\frac{2M}{K^2\eps}}\right)^T.$$
  In particular, if $K \geq MT + \sqrt{2MT^2/\eps}$, then $(H,\vecZ)$ has 
collision
  probability at most $(1+\gamma)/(|\Hfam|\cdot M^T)$ for $\gamma = 2\cdot 
(MT + \sqrt{2MT^2/\eps})/K$.
\end{theorem}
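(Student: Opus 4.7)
The approach parallels the proof of Theorem \ref{thm:2-univ_cp}, but replaces Markov's inequality with Chebyshev's; the 4-wise independence of $\Hfam$ enables the second-moment control we need, which accounts for the $\sqrt{1/\eps}$ dependence in the theorem (in place of the earlier $1/\eps$).

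The first step is a variance/covariance refinement of the Leftover Hash Lemma. I would prove that for any random variables $X_1, X_2$ with $\cp(X_1), \cp(X_2) \leq 1/K$ and $H$ drawn from a 4-wise independent family, $\mathrm{Cov}_H[\cp(h(X_1)), \cp(h(X_2))] \leq 2\cp(X_1)\cp(X_2)/M \leq 2/(MK^2)$. To prove this, expand each $\cp(h(X_i))$ as $\sum_{a, a'} p^{(i)}_a p^{(i)}_{a'}\mathbf{1}[h(a)=h(a')]$ and evaluate the covariance as a sum over 4-tuples $(a, a', b, b')$. Under 4-wise independence, tuples with 4 or 3 distinct elements among $\{a, a', b, b'\}$ factor cleanly as $\E_H[\mathbf{1}[\cdot]]\,\E_H[\mathbf{1}[\cdot]]$, contributing zero to the covariance; the only residual contribution comes from tuples with $\{a, a'\} = \{b, b'\}$, which by Cauchy--Schwarz is bounded by $(2/M)\cp(X_1)\cp(X_2)$. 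Applied with $X_1 = X_i|_{X_{<i}=x_{<i}}$ and $X_2 = X_j|_{X_{<j}=x'_{<j}}$, this gives a covariance bound on $W_i(h, x_{<i}) := \cp(h(X_i)|_{X_{<i}=x_{<i}})$ for arbitrary indices and prefixes.

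Summing these covariance bounds over all pairs $i, j \in [T]$ yields $\E_{(h, \vecx)}[\bar{S}] \leq 1/M + 1/K$ and $\Var_{(h, \vecx)}(\bar{S}) \leq 2/(MK^2)$ (up to lower-order terms that can be absorbed into the $1/K$ expectation), where $\bar{S}(h, \vecx) = \frac{1}{T}\sum_i W_i(h, x_{<i})$. Chebyshev's inequality then implies that with probability at least $1 - \eps$ over $(h, \vecx) \leftarrow (H, \vecX)$, $\bar{S}(h, \vecx) \leq 1/M + 1/K + \sqrt{2/(MK^2\eps)}$. To translate this into a bound on $A(h, \vecy) := \frac{1}{T}\sum_i \cp(Y_i|_{H=h, Y_{<i}=y_{<i}})$ (which is what Lemma \ref{lem:modify} requires), I would apply Jensen's inequality pointwise: $\cp(Y_i|_{H=h, Y_{<i}=y_{<i}}) \leq \E[W_i(h, X_{<i}) \mid H=h, Y_{<i}=y_{<i}]$, so $A(h, \vecy) \leq \E[\bar{S}(h, \vecX) \mid H=h, \vecY=\vecy]$. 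Hence the tail bound on $\bar{S}$ transfers directly to the same tail bound on $A$ under $(H, \vecY)$.

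Finally, invoking Lemma \ref{lem:modify} with $\alpha = 1/K + \sqrt{2/(MK^2\eps)}$, and then applying Lemma \ref{lem:cp_upper_bound} block-by-block to $\vecZ|_{H=h}$ and averaging over $h$ (exactly as in the proof of Theorem \ref{thm:2-univ_cp}), yields the stated bound on $\cp(H, \vecZ)$. The main technical challenge is the covariance computation: whereas the 2-universal proof requires only first moments (Lemma \ref{lem:small_cond_cp}), the 4-wise case demands careful tracking of second-moment quantities across all pairs of blocks, together with the Jensen/data-processing step that relates the $X_{<i}$-conditioned quantity $\bar{S}$ (where 4-wise independence applies cleanly) to the $Y_{<i}$-conditioned quantity $A$ needed by Lemma \ref{lem:modify}.
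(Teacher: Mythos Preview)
Your approach matches the paper's: replace Markov by Chebyshev using a 4-wise-independent second-moment bound, transfer from $X_{<i}$-conditioning to $Y_{<i}$-conditioning via Lemma~\ref{lem:cond_cp}, and finish with Lemmas~\ref{lem:modify} and~\ref{lem:cp_upper_bound}. Your covariance computation is correct and is a mild generalization of the per-block variance bound the paper cites (Lemma~\ref{lem:4-wise_variance}). However, one step does not go through as written.

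The claim $\Var_{(h,\vecx)}(\bar S)\le 2/(MK^2)$ is false in general, and the gap is \emph{not} lower-order. By the law of total variance,
\[
\Var_{(h,\vecx)}(\bar S)=\E_{\vecx}\bigl[\Var_h(\bar S\mid \vecx)\bigr]+\Var_{\vecx}\bigl(\E_h[\bar S\mid\vecx]\bigr).
\]
Your covariance lemma bounds only the first term. For the second, note $\E_h[W_i(h,x_{<i})]=1/M+(1-1/M)\,\cp(X_i\mid x_{<i})$, which varies with $x_{<i}$ over a range of width up to $1/K$; hence $\Var_{\vecx}(\E_h[\bar S\mid\vecx])$ can be $\Theta(1/K^2)$, dominating $2/(MK^2)$ for large $M$. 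Carried through Chebyshev, this extra term would force $K\ge \Omega(MT/\sqrt{\eps})$ rather than the claimed $MT+T\sqrt{2M/\eps}$. The paper flags exactly this issue and fixes it by \emph{centering}: it works with $f(h,x_{<i})=W_i(h,x_{<i})-\E_h[W_i(h,x_{<i})]$, whose variance over $(h,\vecx)$ is genuinely $\le 2/(MK^2)$ (the second term above vanishes), and then bounds the subtracted mean pointwise by $1/M+1/K$ via the Leftover Hash Lemma. That is the precise sense in which the term is ``absorbed into the $1/K$'', but it requires centering first.

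A smaller point on the transfer step: your per-block Jensen inequality gives $\cp(Y_i\mid h,y_{<i})\le\E[W_i(h,X_{<i})\mid H{=}h,Y_{<i}{=}y_{<i}]$, which conditions on $Y_{<i}$ only. Summing over $i$ yields $A(h,\vecy)\le(1/T)\sum_i g_i(h,y_{<i})$ with each $g_i$ a \emph{different} conditional expectation, not $\E[\bar S\mid H,\vecY]$ as you wrote. The paper therefore bounds $\Var(g_i)\le\Var(f_i)\le 2/(MK^2)$ per block (contraction of conditional expectation) and controls the variance of the average via Cauchy--Schwarz on the cross-covariances, rather than treating the sum as a single conditional expectation of $\bar S$.
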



The improvement of Theorem \ref{thm:4-wise_cp} over Theorem
\ref{thm:2-univ_cp} comes from that when we use $4$-wise independent
hash families, we have a concentration result on the conditional
collision probability for each block
\ifnum\short=1
. For the proof of the theorem, please refer to \cite{ChungVa08TR}.
\fi
\ifnum\short=0 
, via the following lemma.

\begin{lemma}[\cite{MitzenmacherV08}] \label{lem:4-wise_variance}
  Let $H:[N]\rightarrow[M]$ be a random hash function from a
  $4$-wise independent family $\Hfam$, and $X$ a random variable
  over $[N]$ with $\cp(X) \leq 1/K$. Then we have
  $$\Var_{h\leftarrow H}[\cp(h(X))] \leq \frac{2}{MK^2}.$$
\end{lemma}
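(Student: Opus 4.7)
The plan is to expand $\cp(h(X))$ as a double sum over pairs of elements and then compute the variance by a case analysis on how the four indices in the expansion of the square overlap. Writing $p_x = \Pr[X=x]$, we have $\cp(h(X)) = \sum_{x_1, x_2} p_{x_1} p_{x_2} \cdot \mathbb{1}[h(x_1) = h(x_2)]$, and 2-wise independence immediately gives $\E_h[\cp(h(X))] = \cp(X) + (1-\cp(X))/M$. Introducing the centered indicators $Y_{x_1 x_2} = \mathbb{1}[h(x_1)=h(x_2)] - 1/M$ for $x_1 \ne x_2$, I would rewrite the deviation as
\[ \cp(h(X)) - \E_h[\cp(h(X))] = \sum_{x_1 \ne x_2} p_{x_1} p_{x_2} \, Y_{x_1 x_2}, \]
so that $\Var_h[\cp(h(X))]$ equals the expectation of the square of this sum.

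Squaring and taking expectation produces the 4-fold sum
\[ \sum_{x_1 \ne x_2}\sum_{x_3 \ne x_4} p_{x_1} p_{x_2} p_{x_3} p_{x_4} \, \E_h[Y_{x_1 x_2} Y_{x_3 x_4}], \]
which I would analyze by partitioning according to the overlap pattern of the pairs $\{x_1, x_2\}$ and $\{x_3, x_4\}$. The crux of the argument is to show that most cross terms vanish. When $x_1, x_2, x_3, x_4$ are all distinct, 4-wise independence makes the indicators $\mathbb{1}[h(x_1)=h(x_2)]$ and $\mathbb{1}[h(x_3)=h(x_4)]$ independent with mean $1/M$, so $\E[Y_{x_1 x_2} Y_{x_3 x_4}] = 0$. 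When the two pairs share exactly one element (say $x_1 = x_3$, with $x_2 \ne x_4$ and both distinct from $x_1$), the three-way probability $\Pr_h[h(x_1) = h(x_2) = h(x_4)] = 1/M^2$ (using 3-wise independence, implied by 4-wise) combines with the centering to give $\E[Y_{x_1 x_2} Y_{x_1 x_4}] = 1/M^2 - 1/M^2 - 1/M^2 + 1/M^2 = 0$.

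The only surviving contribution is from $\{x_1, x_2\} = \{x_3, x_4\}$, which occurs in two orderings, with $\E[Y_{x_1 x_2}^2] = (1/M)(1-1/M) \le 1/M$. Summing,
\[ \Var_h[\cp(h(X))] \le \frac{2}{M} \sum_{x_1 \ne x_2} p_{x_1}^2 p_{x_2}^2 \le \frac{2}{M} \Big(\sum_x p_x^2\Big)^2 = \frac{2\,\cp(X)^2}{M} \le \frac{2}{MK^2}. \]
The main obstacle I anticipate is the bookkeeping in the case analysis --- specifically, verifying that the single-overlap terms cancel exactly. The key observation is that centering by $1/M$ is precisely what makes 3-wise independence sufficient to kill those terms, so that the full 4-wise independence is only needed for the all-distinct case.
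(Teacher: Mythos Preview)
Your argument is correct; the case analysis is sound, and in particular the single-overlap terms do vanish exactly as you claim under 3-wise independence. The paper itself does not give a proof of this lemma---it is quoted as a result from \cite{MitzenmacherV08} and used as a black box---so there is no in-paper proof to compare against, but your computation is the standard one and matches what one finds in the original reference.
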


We can then replace the application of Markov's Inequality in the
proof of Theorem~\ref{thm:2-univ_cp} by Chebychev's Inequality to
get stronger result. Formally, we prove the following lemma, which
suffices to prove Theorem \ref{thm:4-wise_cp}.

\begin{lemma} \label{concentration}
  Let $H:[N]\rightarrow[M]$ be a random hash function from a
  $4$-wise independent family $\Hfam$. Let $\vecX = (X_1,\dots,X_T)$ be a
  block $K$-source over $[N]^T$. Let $(H,\vecY) =
  (H,H(X_1),\dots,H(X_T))$. Then with probability at least $1-\eps$
  over $(h,\vecy) \leftarrow (H,\vecY)$,
  \begin{equation*}
    \frac{1}{T} \sum_{i=1}^T \cp(Y_i|_{(H,Y_{<i}) =(h,y_{<i})}) \leq
    \frac{1}{M}\cdot \left(1+\frac{M}{K} + \sqrt{\frac{2M}{K^2\eps}}\right).
  \end{equation*}
\end{lemma}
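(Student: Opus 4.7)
The plan is to apply Chebychev's Inequality (replacing the Markov step in the proof of Theorem~\ref{thm:2-univ_cp}), with the variance bound coming from Lemma~\ref{lem:4-wise_variance}. Since Lemma~\ref{lem:small_cond_cp} already guarantees that the expectation of the average conditional collision probability is at most $1/M + 1/K$, the goal is to show that its variance is at most $2/(MK^2)$, so that Chebychev yields the target bound $\frac{1}{M} + \frac{1}{K} + \sqrt{\frac{2}{MK^2\eps}}$.

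To bound the variance, I would work with the auxiliary quantity $\tilde A(h, \vecx) := \frac{1}{T}\sum_{i=1}^{T} B(h, x_{<i})$, where $B(h, x_{<i}) := \cp(h(X_i) \mid X_{<i} = x_{<i})$. For each fixed $\vecx \in \supp(\vecX)$, the Leftover Hash Lemma (Lemma~\ref{lem:LHL}) gives $\E_h[\tilde A(h, \vecx)] \leq 1/M + 1/K$, and Lemma~\ref{lem:4-wise_variance} applied to $X_i \mid X_{<i} = x_{<i}$ (whose collision probability is at most $1/K$) gives $\Var_h[B(h, x_{<i})] \leq 2/(MK^2)$; combined with Cauchy-Schwarz on the variances of the sum (via $\Var[\sum_i Z_i] \leq (\sum_i \sqrt{\Var[Z_i]})^2$), this yields $\Var_h[\tilde A(h, \vecx)] \leq 2/(MK^2)$. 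Applying Chebychev's Inequality for each $\vecx$ separately and integrating over $\vecx \leftarrow \vecX$, we conclude that with probability at least $1-\eps$ over $(h, \vecx) \leftarrow (H, \vecX)$,
\[
\tilde A(h, \vecx) \leq \frac{1}{M} + \frac{1}{K} + \sqrt{\frac{2}{MK^2 \eps}} = \frac{1}{M}\left(1 + \frac{M}{K} + \sqrt{\frac{2M}{K^2 \eps}}\right).
\]

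The hard part is converting this $(h, \vecx)$-statement into the desired $(h, \vecy)$-statement. By the pointwise form of Jensen's inequality (as in Lemma~\ref{lem:cond_cp}), $\cp(Y_i \mid H=h, Y_{<i}=y_{<i}) \leq \E_{x_{<i} \mid h, y_{<i}}[B(h, x_{<i})]$, so $A(h, \vecy) \leq \frac{1}{T}\sum_i \E_{x_{<i} \mid h, y_{<i}}[B(h, x_{<i})]$. However, this upper bound conditions on $Y_{<i}$ rather than $X_{<i}$, and small examples show that the pointwise inequality $A(h, h(\vecx)) \leq \tilde A(h, \vecx)$ fails in general. I anticipate resolving this via a careful averaging argument over the joint distribution of $(H, \vecX, \vecY)$, or alternatively by adapting the analog of Lemma~\ref{lem:modify} in the proof of Theorem~\ref{thm:4-wise_cp} to work with events in the $(h, \vecx)$-space, where our Chebychev bound applies directly and the resulting modification still supports the downstream collision-probability argument via Lemma~\ref{lem:cp_upper_bound}.
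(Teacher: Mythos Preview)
Your setup through the $(h,\vecx)$-Chebychev step is sound: for each fixed $\vecx$ the Leftover Hash Lemma bounds $\E_h[\tilde A(h,\vecx)]$ and Lemma~\ref{lem:4-wise_variance} (together with your Cauchy--Schwarz on standard deviations) bounds $\Var_h[\tilde A(h,\vecx)]$, so Chebychev gives the stated tail bound over $(h,\vecx)\leftarrow(H,\vecX)$. The gap you yourself flag---passing from an $(h,\vecx)$-event to an $(h,\vecy)$-event---is real, and neither of your two proposed fixes closes it. The averaging route runs into the mismatch you already noticed: the bound $\cp(Y_i\mid H=h,Y_{<i}=y_{<i})\le \E_{x_{<i}\mid h,y_{<i}}[B(h,x_{<i})]$ conditions on $y_{<i}$ only, whereas $\E_{\vecx\mid h,\vecy}[\tilde A(h,\vecx)]$ conditions on the full $\vecy$, so $A(h,\vecy)\le \E_{\vecx\mid h,\vecy}[\tilde A(h,\vecx)]$ does not follow, and even if it did, a tail bound on $\tilde A$ does not transfer to a tail bound on its conditional expectation in the direction you need. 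The Lemma~\ref{lem:modify}-in-$(h,\vecx)$-space route is also problematic: after you modify $(H,\vecX)$, the quantities $B(h,x_{<i})=\cp(h(X_i)\mid X_{<i}=x_{<i})$ change with the modified distribution, so the inequality you established no longer refers to the object you now hold; and in any case Lemma~\ref{lem:cp_upper_bound} needs the $Y$-conditioned averages, not the $X$-conditioned ones.

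The paper resolves exactly this difficulty, and the key idea is to move to $(h,y_{<i})$-space \emph{before} applying Chebychev rather than after. First center: set $f(h,x_{<i}) = B(h,x_{<i}) - \E_{h'\leftarrow H}[B(h',x_{<i})]$, so that $\E_h[f(h,x_{<i})]=0$ for every $x_{<i}$; this is what makes $\Var_{(h,\vecx)}[f(h,x_{<i})]\le 2/(MK^2)$ (without centering, different $x_{<i}$'s have different means and the joint variance could be $\Theta(1/K^2)$). Then push forward by conditional expectation: define $g(h,y_{<i}) = \E_{x_{<i}\mid (H,Y_{<i})=(h,y_{<i})}[f(h,x_{<i})]$. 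Now $g$ is a function on the $(H,Y_{<i})$-space, it still has mean $0$, and since conditional expectation contracts variance, $\Var_{(h,\vecy)}[g(h,y_{<i})]\le \Var_{(h,\vecx)}[f(h,x_{<i})]\le 2/(MK^2)$. Moreover the same Lemma~\ref{lem:cond_cp} step you invoked gives the pointwise inequality $\cp(Y_i\mid H=h,Y_{<i}=y_{<i})\le \tfrac{1}{M}+\tfrac{1}{K}+g(h,y_{<i})$. Summing over $i$ and applying Chebychev to $\tfrac{1}{T}\sum_i g(H,Y_{<i})$ directly in the $(H,\vecY)$-space yields the lemma with no conversion step needed. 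In short, the missing ingredient is the pair ``center, then take conditional expectation onto $Y$-space'' so that the second-moment bound survives the passage from $X$-conditioning to $Y$-conditioning.
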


Theorem \ref{thm:4-wise_cp} follows immediately by composing Lemma
\ref{concentration}, \ref{lem:modify}, and \ref{lem:cp_upper_bound}
in the same way as the proof of Theorem \ref{thm:2-univ_cp}.

\begin{proofof}{of Lemma~\ref{concentration}}
Recall that we have
$$\E_{(h,\vecy)\leftarrow (H,\vecY)}\left[\frac{1}{T} \sum_{i=1}^T
\cp(Y_i|_{(H,Y_{<i})=(h,y_{<i})})\right] \leq \frac{1}{M} +
\frac{1}{K}.$$ Hence, our goal is to upper bound the probability of
the value $(1/T)\sum_{i=1}^T \cp(Y_i|_{(H,Y_{<i})=(h,y_{<i})})$
deviating from its mean by $\sqrt{2/MK^2\eps}$. Our strategy is to
bound the variance of a properly defined random variable, and then
apply Chebychev's Inequality. By Lemma \ref{lem:4-wise_variance},
the information we get from $4$-wise independent hash function is
that for every $i\in[T]$, we have
\begin{equation} \label{eq:small_variance}
\Var_{h\leftarrow H}\left[ \cp(Y_i|_{(H,X_{<i})=(h,x_{<i})})\right]
\leq \frac{2}{MK^2} \quad \quad \forall x_{<i} \in \supp(X_{<i})
\end{equation}

Fix $i \in [T]$, let us try to bound the variance of the $i$-th
block. There are two issues to take care of. Firstly, the variance we
have is conditioning on $X_{<i}$ instead of $Y_{<i}$. Secondly, even
when conditioning on $X_{<i}$, it is possible that the variance is
$$\Var_{(h,\vecx)\leftarrow(H,\vecX)}\left[\cp(Y_i|_{(H,X_{<i})=(h,x_{<i})})\right] =
\Omega\left(\frac{1}{K^2}\right) \gg \frac{2}{MK^2}.$$ 
The reason is that conditioning on different $X_{<i} = x_{<i}$, the collision probability of
$(Y_i|_{X_{<i} = x_{<i}})$ may have different expectation over
$h\leftarrow \Hfam$. Thus, we have to subtract the mean first. Let
us define $$f(h,x_{<i}) = \cp(Y_i|_{(H,X_{<i})=(h,x_{<i})}) -
\E_{h\leftarrow H}\left[\cp(Y_i|_{(H,X_{<i})=(h,x_{<i})})\right]$$
Now, for every $x_{<i}\in \supp(X_{<i})$, $f(H,x_{<i})$ has mean
$0$, and variance $\leq 2/MK^2$. It follows that
$$\Var_{(h,\vecx)\leftarrow(H,\vecX)} \left[ f(h,x_{<i})\right] \leq 
\frac{2}{MK^2}.$$
We now deal with the issue of conditioning on $X_{<i}$ versus
$Y_{<i}$. Let us define
$$g(h,y_{<i}) = \E_{x_{<i} \leftarrow (X_{<i}|_{(H,Y_{<i})=(h,y_{<i})})}
\left[
f(h,x_{<i})\right].$$ We claim that
$$\cp(Y_i|_{(H,Y_{<i})=(h,y_{<i})}) \leq \frac{1}{M} + \frac{1}{K} + 
g(h,y_{<i}).$$
Indeed, by Lemma \ref{lem:cond_cp} and the definition of $f$ and $g$,
\begin{eqnarray*}
\lefteqn{\cp(Y_i|_{(H,Y_{<i})=(h,y_{<i})})}\\ &\leq&
\cp((Y_i|_{(H,Y_{<i})=(h,y_{<i})})| (X_i|_{(H,Y_{<i})=(h,y_{<i})}))
\\
& = & \E_{x_{<i} \leftarrow (X_{<i}|_{(H,Y_{<i})=(h,y_{<i})})}\left[
\cp(Y_i|_{(H,X_{<i})=(h,x_{<i})})\right] \\
& = & \E_{x_{<i} \leftarrow (X_{<i}|_{(H,Y_{<i})=(h,y_{<i})})}\left[
f(h,x_{<i})+ \E_{h\leftarrow
H}\left[\cp(Y_i|_{(H,X_{<i})=(h,x_{<i})})\right] \right] \\
& \leq &  g(h,y_{<i})+ \frac{1}{M} + \frac{1}{K}
\end{eqnarray*}
Also note that $g(H,Y_{<i})$ has mean $0$ and small variance:
$$\E_{(h,y_{<i})\leftarrow(H,y_{<i})} [g(h,y_{<i})] = \E_{(h,\vecx)
\leftarrow(H,\vecX)} [ f(h,x_{<i})] =0, $$
$$\Var_{(h,y_{<i})\leftarrow(H,y_{<i})} [g(h,y_{<i})] \leq  \Var_{(h,\vecx)
\leftarrow(H,\vecX)} [ f(h,x_{<i})] \leq \frac{2}{MK^2}.$$
The above argument holds for every block $i\in [T]$. Taking average
over blocks, we get
$$\E_{(h,\vecy)\leftarrow(H,\vecY)} \left[ \frac{1}{T} \sum_{i=1}^T 
g(h,y_{<i})\right]  =0, $$
$$\Var_{(h,\vecy)\leftarrow(H,\vecY)} \left[ \frac{1}{T} \sum_{i=1}^T 
g(h,y_{<i})\right] \leq   \frac{2}{MK^2}, \mbox{ and }$$
$$\frac{1}{T} \sum_{i=1}^T\cp(Y_i|_{(H,Y_{<i})=(h,y_{<i})}) \leq \frac{1}{M} 
+ \frac{1}{K} + \left(\frac{1}{T} \sum_{i=1}^Tg(h,y_{<i})\right).$$
Finally, we can apply Chebychev's Inequality to random variable
$(1/T)\cdot \sum_i g(H,Y_{<i})$ to get the desired result: with
probability $1-\eps$ over $(h,\vecy) \leftarrow (H,\vecY)$,
  \begin{equation*}
    \frac{1}{T} \sum_{i=1}^T \cp(Y_i|_{(H,Y_{<i}) =(h,y_{<i})}) \leq
    \frac{1}{M}\cdot \left(1+\frac{M}{K} + \sqrt{\frac{2M}{K^2\eps}}\right).
  \end{equation*}
\end{proofof}

\fi

\subsection{Statistical Distance to Uniform Distribution}

Let $H:[N]\rightarrow [M]$ be a random hash function form a $2$-universal family $\Hfam$. Let $\vecX = (X_1,\dots,X_T)$ be a block $K$-source over $[N]^T$. In this subsection, we study the statistical distance between the distribution of hashed sequence $(H,\vecY) = (H,H(X_1),\dots,H(X_T))$ and the uniform distribution
$(H,U_{[M]^T})$.  Classic results of \cite{ChorG88,ImpagliazzoLL89,Zuckerman96} show that if $K \geq
MT^2/\eps^2$, then $(H,\vecY)$ is $\eps$-close to uniform. The proof idea is as follows. The Leftover Hash Lemma together with Lemma \ref{lem:cp_to_stat} tells us that the joint distribution of hash function and a hashed value $(H,Y_i) = (H,H(X_i))$ is $\sqrt{M/K}$-close to uniform $U_{[M]}$ even conditioning on the previous blocks $X_{<i}$. One can then use a hybrid argument to show that the distance grows linearly with the number of blocks, so $(H,\vecY)$ is $T\cdot \sqrt{M/K}$-close to uniform. Taking $K \geq MT^2/\eps^2$ completes the analysis.

We save a factor of $T$, and show that in fact, $K = MT/\eps^2$ is
sufficient.  (Taking logs yields the third entry in Table 
\ref{tbl:analysis}, i.e. it
suffices to have Renyi entropy $k=m+\log T+2\log(1/\eps)$ per block.)
Formally, we prove the following theorem.

\begin{theorem} \label{thm:2-univ_stat}
  Let $H:[N]\rightarrow[M]$ be a random hash function from a
  $2$-universal family $\Hfam$. Let $\vecX = (X_1,\dots,X_T)$ be a
  block $K$-source over $[N]^T$. For every $\eps > 0$ such that $K > MT/
\eps^2$,
  the hashed sequence $(H,\vecY) = (H,H(X_1),\dots,H(X_T))$ is $\eps$-close 
to
  uniform $(H,U_{[M]^T})$.
\end{theorem}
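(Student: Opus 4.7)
The plan is to bound an information-theoretic distance (squared Hellinger distance, or equivalently KL divergence) that accumulates additively over blocks, and to convert to statistical distance only at the end. The classical hybrid argument loses a factor of $\sqrt{T}$ because it converts collision probability to statistical distance per block (via Lemma~\ref{lem:cp_to_stat}) before summing; working in a squared metric lets us postpone this square root until after summing.

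The main step is the chain rule for KL divergence, applied to $P = (H, \vecY)$ and $Q = (H, U_{[M]^T})$. Since both distributions have the same uniform marginal on $H$, the chain rule simplifies to
\[
D(P \| Q) \;=\; \sum_{i=1}^{T} \E_{(h, y_{<i}) \sim (H, Y_{<i})}\!\left[\,D\bigl(P(Y_i \mid H=h,\, Y_{<i}=y_{<i}) \,\|\, U_{[M]}\bigr)\,\right].
\]
For each summand, I use the standard inequality $D(\mu \| U_{[M]}) \leq \chi^2(\mu, U_{[M]}) = M \cdot \cp(\mu) - 1$ (a consequence of $\log t \leq t-1$). Taking expectations and invoking Lemma~\ref{lem:small_cond_cp}, each summand is at most $M \cdot \cp(Y_i \mid H, Y_{<i}) - 1 \leq M(1/M + 1/K) - 1 = M/K$. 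Summing over the $T$ blocks yields $D(P \| Q) \leq TM/K$.

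To finish, I chain the standard inequalities $\Delta(P, Q) \leq \sqrt{2\, H^2(P, Q)} \leq \sqrt{D(P \| Q)}$ (combining $\Delta \leq \sqrt{2}\,H$ with $H^2 \leq D/2$) to obtain $\Delta((H, \vecY), (H, U_{[M]^T})) \leq \sqrt{TM/K}$, which is less than $\eps$ whenever $K > MT/\eps^2$.

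The main obstacle, compared with the proof of Theorem~\ref{thm:2-univ_cp}, is that applying Markov's inequality followed by Lemma~\ref{lem:modify} would cost an additional factor of roughly $\sqrt{1/\eps}$ and yield only $K \gtrsim MT/\eps^3$. Working directly with squared Hellinger (via KL) sidesteps this: the $P$-weighted expectations that appear in the KL chain rule are precisely those controlled by Lemma~\ref{lem:small_cond_cp}, so no Markov-type step is needed.
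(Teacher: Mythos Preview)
Your proof is correct. Both your approach and the paper's share the same high-level idea---work with a ``squared'' distance that accumulates additively over blocks, and convert to statistical distance only at the end---but the technical implementations differ.

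The paper works directly with the Hellinger closeness $C(\vecX)=1-d(\vecX,U)^2$ and proves a standalone Lemma~\ref{lem:cp_to_closeness}: if $\cp(X_i\mid X_{<i})\le\alpha_i/M_i$ for every $i$, then $C(\vecX)\ge(\alpha_1\cdots\alpha_T)^{-1/2}$. That lemma is established by induction on $T$, each step requiring two applications of H\"older's inequality to relate the ``$1/2$-norm'' to the ``$2$-norm''. From it one gets $C(H,\vecY)\ge(1+M/K)^{-T/2}$ and finishes via Lemma~\ref{lem:Hellinger_to_stat}.

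Your route replaces all of this with the KL chain rule together with the elementary pointwise bound $D(\mu\Vert U_{[M]})\le M\cp(\mu)-1$ (from $\ln t\le t-1$). This is shorter and avoids both the induction and H\"older: additivity is automatic from the chain rule, and the $P$-weighted expectations it produces are exactly the quantities $\cp(Y_i\mid H,Y_{<i})$ controlled by Lemma~\ref{lem:small_cond_cp}. The paper's Lemma~\ref{lem:cp_to_closeness} is a slightly sharper intermediate statement (a multiplicative bound on Hellinger closeness, potentially useful when the $\alpha_i$ vary), but for Theorem~\ref{thm:2-univ_stat} itself both routes yield the identical final bound $\Delta\le\sqrt{MT/K}$. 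Incidentally, Pinsker's inequality $\Delta\le\sqrt{D/2}$ would have given you a factor of $\sqrt{2}$ to spare.
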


Recall that the previous analysis goes by passing to statistical distance first, and then measuring the growth of distance using statistical distance. This incurs a quadratic dependency of $K$ on $T$. Since without further information, the hybrid argument is tight, to save a factor of $T$, we have to measure the increase of distance over blocks in another way, and pass to statistical distance only in the end. It turns out that the {\em Hellinger distance} (cf., \cite{GibbsSu02}) is a good measure for our purposes: 

\begin{definition}[Hellinger distance] \label{def:Hellinger}
  Let $X$ and $Y$ be two random variables over $[M]$.
  The \emph{Hellinger distance} between $X$ and $Y$ is
  $$d(X,Y) \eqdef \left(\frac{1}{2} \sum_i
  (\sqrt{\Pr[X=i]}-\sqrt{\Pr[Y=i]})\right)^{1/2} = \sqrt{1-\sum_i 
\sqrt{\Pr[X=i]\cdot \Pr[Y=i]}}.$$
\end{definition}

Like statistical distance, Hellinger distance is a distance measure
for distributions, and it takes value in $[0,1]$. The following
standard lemma 
says that the two distance measures are closely related. We
remark that the lemma is tight in both directions even if $Y$ is the
uniform distribution.

\begin{lemma}[cf., \cite{GibbsSu02}] \label{lem:Hellinger_to_stat}
  Let $X$ and $Y$ be two random variables over $[M]$.
  We have $$d(X,Y)^2 \leq \Delta(X,Y) \leq \sqrt{2} \cdot d(X,Y).$$
\end{lemma}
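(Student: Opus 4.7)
The plan is to work directly from the definitions, writing $p_i = \Pr[X=i]$ and $q_i = \Pr[Y=i]$, so that $\Delta(X,Y) = (1/2)\sum_i |p_i - q_i|$ and $d(X,Y)^2 = (1/2)\sum_i (\sqrt{p_i}-\sqrt{q_i})^2$. The key algebraic observation that drives both inequalities is the factorization
$$|p_i - q_i| = |\sqrt{p_i}-\sqrt{q_i}| \cdot (\sqrt{p_i}+\sqrt{q_i}),$$
which lets me relate absolute differences of probabilities to absolute differences of their square roots.

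For the lower bound $d(X,Y)^2 \leq \Delta(X,Y)$, I would use this factorization pointwise together with the trivial bound $|\sqrt{p_i}-\sqrt{q_i}| \leq \sqrt{p_i}+\sqrt{q_i}$. This gives $(\sqrt{p_i}-\sqrt{q_i})^2 \leq |\sqrt{p_i}-\sqrt{q_i}|\cdot(\sqrt{p_i}+\sqrt{q_i}) = |p_i - q_i|$, and summing over $i$ and dividing by $2$ yields the claim.

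For the upper bound $\Delta(X,Y) \leq \sqrt{2}\cdot d(X,Y)$, I would apply Cauchy--Schwarz to the factored form of $2\Delta(X,Y) = \sum_i |\sqrt{p_i}-\sqrt{q_i}|\cdot(\sqrt{p_i}+\sqrt{q_i})$, obtaining
$$2\Delta(X,Y) \leq \left(\sum_i (\sqrt{p_i}-\sqrt{q_i})^2\right)^{1/2} \cdot \left(\sum_i (\sqrt{p_i}+\sqrt{q_i})^2\right)^{1/2}.$$
The first factor is exactly $\sqrt{2}\cdot d(X,Y)$ by definition. For the second, I expand $\sum_i(\sqrt{p_i}+\sqrt{q_i})^2 = 2 + 2\sum_i \sqrt{p_i q_i} \leq 4$ using $\sum_i p_i = \sum_i q_i = 1$ and $\sum_i \sqrt{p_i q_i} \leq 1$ (by Cauchy--Schwarz again, or just noting it equals $1 - d(X,Y)^2 \leq 1$). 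Combining, $2\Delta(X,Y) \leq \sqrt{2}\cdot d(X,Y) \cdot 2$, which gives the result.

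I do not expect any real obstacle here; both directions are standard once one spots the factorization $p - q = (\sqrt{p}-\sqrt{q})(\sqrt{p}+\sqrt{q})$. The only minor care needed is in bounding $\sum_i(\sqrt{p_i}+\sqrt{q_i})^2$ by $4$ rather than treating it as an arbitrary quantity, which is what keeps the constant on the right-hand side as $\sqrt{2}$ rather than something weaker.
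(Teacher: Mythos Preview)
Your proof is correct and is the standard argument for this inequality. Note, however, that the paper does not actually prove Lemma~\ref{lem:Hellinger_to_stat}; it is stated as a known fact with a reference to \cite{GibbsSu02}, so there is no ``paper's own proof'' to compare against. What you wrote is precisely the textbook derivation one would expect.
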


In particular, the lemma allows us to upper-bound the statistical
distance by upper-bounding the Hellinger distance. Since our goal is
to bound the distance to uniform, it is convenient to introduce the
following definition.

\begin{definition}[Hellinger Closeness to Uniform]
  Let $X$ be a random variable over $[M]$. The \emph{Hellinger closeness} 
of  $X$ to uniform $U_{[M]}$ is  $$C(X) \eqdef \frac{1}{M} \sum_i \sqrt{M\cdot \Pr[X=i]} = 1-
d(X,U_{[M]})^2.$$
\end{definition}

Note that $C(X,Y) = C(X) \cdot
C(Y)$ when $X$ and $Y$ are independent random variables, so
the Hellinger closeness is well-behaved with respect to products (unlike
statistical distance).  By Lemma
\ref{lem:Hellinger_to_stat}, if the Hellinger closeness $C(X)$ is close to
$1$, then $X$ is close to uniform in statistical distance. Recall
that collision probability behaves similarly. If the collision
probability $\cp(X)$ is close to $1/M$, then $X$ is close to
uniform. In fact, by the following normalization, we can view the
collision probability as the $2$-norm of $X$, and the Hellinger closeness as
the $1/2$-norm of $X$.

Let $f(i) = M \cdot \Pr[X = i]$ for $i \in [M]$. In terms of
$f(\cdot)$, the collision probability is $\cp(X) = (1/M^2)\cdot
\sum_i f(i)^2$, and Lemma \ref{lem:cp_to_stat} says that if the
``$2$-norm" $M\cdot \cp(X) = \E_{i}[f(i)^2] \leq 1+\eps$  where the
expectation is over uniform $i \in [M]$, then $\Delta(X,U) \leq
\sqrt{\eps}$,. Similarly, Lemma \ref{lem:Hellinger_to_stat} says
that if the ``$1/2$-norm" $C(X) = \E_i[\sqrt{f(i)}] \geq 1-\eps$,
then $\Delta(X,U) \leq \sqrt{\eps}$.

We now discuss our approach to prove Theorem \ref{thm:2-univ_stat}.
We want to show that $(H,\vecY)$ is close to uniform. All we know is
that the conditional collision probability $\cp(Y_i|H,Y_{<i})$ is
close to $1/M$ for every block.  If all blocks are
independent, then the overall collision probability $\cp(H,\vecY)$
is small, and so $(H,\vecY)$ is close to uniform. However, this is
not true without independence, since $2$-norm tends to over-weight
heavy elements. In contrast, the $1/2$-norm does not suffer this
problem. Therefore, our approach is to show that small conditional
collision probability implies large Hellinger closeness. Formally, we have
the following lemma.
\ifnum\short=1
The main idea is to use H\"older's inequality to relate two different norms. 
\fi

\begin{lemma} \label{lem:cp_to_closeness}
  Let $\vecX = (X_1,\dots,X_T)$ be jointly distributed random
  variables over $[M_1]  \times \dots \times [M_T]$ such that $\cp(X_i|
X_{<i}) \leq
  \alpha_i/ M_i$ for every $i\in[T]$. Then the Hellinger closeness satisfies
  $$C(\vecX) \geq
  \sqrt{\frac{1}{\alpha_1  \dots  \alpha_T}}.$$
\end{lemma}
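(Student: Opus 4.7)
The strategy is to pass through Shannon entropy via two applications of Jensen's inequality. I would first rewrite
$$C(\vecX) \;=\; \sum_{\vecx} \sqrt{\Pr[\vecX=\vecx]/\prod_i M_i} \;=\; \E_{\vecx\leftarrow \vecX}\bigl[\,2^{-(1/2)(\sum_i \log M_i + \log \Pr[\vecX=\vecx])}\bigr],$$
and apply Jensen's inequality to the convex function $2^x$, yielding
$$C(\vecX) \;\geq\; 2^{(H(\vecX)\,-\,\sum_i \log M_i)/2},$$
where $H(\vecX) = -\E_{\vecx\leftarrow\vecX}[\log \Pr[\vecX=\vecx]]$ is the Shannon entropy. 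It therefore suffices to show $H(\vecX) \geq \sum_i \log(M_i/\alpha_i)$.

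For this lower bound I would use the Shannon chain rule blockwise: $H(\vecX) = \sum_i H(X_i | X_{<i})$. For each $i$ and each fixed $x_{<i}$, Shannon entropy dominates R\'enyi entropy, so $H(X_i | X_{<i}=x_{<i}) \;\geq\; -\log \cp(X_i|_{X_{<i}=x_{<i}})$. Averaging over $x_{<i} \leftarrow X_{<i}$ and applying Jensen's inequality for the concave function $-\log$ then gives
$$H(X_i | X_{<i}) \;\geq\; \E_{x_{<i}\leftarrow X_{<i}}\!\bigl[-\log \cp(X_i|_{X_{<i}=x_{<i}})\bigr] \;\geq\; -\log \cp(X_i | X_{<i}) \;\geq\; \log(M_i/\alpha_i).$$
Summing and substituting back yields $C(\vecX) \geq 2^{-(\sum_i \log \alpha_i)/2} = 1/\sqrt{\alpha_1 \cdots \alpha_T}$.

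The main conceptual obstacle is that the \emph{averaged} hypothesis $\cp(X_i | X_{<i}) \leq \alpha_i/M_i$ does not imply the product bound $\cp(\vecX) \leq \prod_i \alpha_i/\prod_i M_i$; a simple two-block example with nonuniform $X_1$ can already make $\cp(\vecX)$ strictly larger than $\cp(X_1)\cdot \cp(X_2 | X_1)$. Consequently, the direct Cauchy--Schwarz bound $C(\vecX) \geq 1/\sqrt{\prod_i M_i \cdot \cp(\vecX)}$ is too weak. Passing through the log scale is what makes the averaged R\'enyi bound usable: $\log$ is additive under the chain rule, and Jensen's inequality turns an averaged bound on $\cp$ into an additive bound on entropy. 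This is precisely how H\"older's inequality (in its Jensen / $L^p$-norm limit) mediates between the $\tfrac12$-norm underlying the Hellinger closeness and the $2$-norm underlying the collision probability.
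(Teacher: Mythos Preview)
Your proof is correct and takes a genuinely different route from the paper's. The paper proceeds by induction on $T$ and applies H\"older's inequality directly: the base case uses H\"older with exponents $(3,3/2)$ on $f^{2/3}\cdot f^{1/3}$ to interpolate between the $1/2$-norm $C(X)$ and the $2$-norm $\cp(X)$, and the inductive step applies the generalized H\"older inequality with $T$ functions and carefully chosen exponents $p_1=2/(T+1)$, $p_j=1/(T+1)$, followed by a second H\"older application. Your argument instead passes through Shannon entropy: one Jensen on the convex exponential gives $C(\vecX)\ge 2^{(H(\vecX)-\sum_i\log M_i)/2}$, the chain rule decomposes $H(\vecX)$ blockwise, the pointwise bound $H\ge \HRen$ converts each block to Renyi, and a second Jensen on the concave logarithm turns the \emph{averaged} hypothesis $\cp(X_i\mid X_{<i})\le\alpha_i/M_i$ into the additive bound $H(X_i\mid X_{<i})\ge\log(M_i/\alpha_i)$.

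Your approach is shorter and more transparent: it makes explicit why the averaged conditional-collision-probability hypothesis suffices (the logarithm linearizes the product and Jensen handles the average), whereas in the paper's induction this is buried inside the constraint $\E_x[f(x)g_j(x)]\le\alpha_j$. The paper's H\"older argument, on the other hand, stays entirely within the $1/2$- and $2$-norm world and never introduces Shannon entropy; it is a pure norm-interpolation proof. The two are morally the same inequality---your remark that Jensen on $\log$ is the $L^p$-limit of H\"older is exactly right---but your packaging avoids the induction and the bookkeeping of exponents.
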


\ifnum\short=1 
The proof of this lemma can be found in the full version of this paper\cite{ChungVa08TR}. 
\fi 
With this lemma, the proof of Theorem \ref{thm:2-univ_stat} is
immediate.  


\begin{proofof}{of Theorem \ref{thm:2-univ_stat}}
By Lemma \ref{lem:small_cond_cp}, $\cp(H) = 1/|\Hfam|$, and
$\cp(Y_i|H,Y_{<i}) \leq (1+M/K)/M$ for every $i\in [T]$. By Lemma
\ref{lem:cp_to_closeness}, the Hellinger closeness satisfies $C(H,\vecY) \geq
(1+M/K)^{-T/2} \geq 1- MT/2K$ (recall that $K \geq MT/\eps^2$). It
follows by Lemma \ref{lem:Hellinger_to_stat} that
$$\Delta((H,\vecY), (H,U_{[M]^T})) \leq  \sqrt{2} \cdot d((H,\vecY),
(H,U_{[M]^T})) = \sqrt{2}\cdot \sqrt{1-C(H,\vecY)} \leq \sqrt{MT/K}
\leq \eps.$$
\end{proofof}

\ifnum\short=0
We proceed to prove Lemma \ref{lem:cp_to_closeness}. The main idea is to use H\"older's inequality to relate two different norms. We recall H\"older's inequality first.

\begin{lemma}[H\"older's inequality\cite{Durrett04}] \ \label{lem:Holder}
\begin{itemize}
\item Let $F, G$ be two non-negative functions from $[M]$ to $\R$,
and $p,q > 0$ satisfying $1/p + 1/q = 1$. Let $x$ be a uniformly
random index over $[M]$. We have
$$\E_x[F(x)\cdot G(x)] \leq \E_x[F(x)^p]^{1/p} \cdot \E_x[G(x)^q]^{1/q}.$$

\item In general, let $F_1, \dots, F_n$ be non-negative functions
from $[M]$ to $\R$, and $p_1,\dots p_n > 0$ satisfying $1/p_1 +
\dots 1/p_n = 1$. We have
$$\E_x[F_1(x)\cdots F_n(x)] \leq \E_x[F_1(x)^{p_1}]^{1/p_1} \cdots
\E_x[F_n(x)^{p_n}]^{1/p_n}.$$
\end{itemize}
\end{lemma}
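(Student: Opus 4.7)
The plan is to establish the two-function case via Young's inequality applied pointwise after normalization, and then extend to $n$ functions by induction.

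First I would prove Young's inequality: for non-negative reals $a,b$ and $p,q>0$ with $1/p+1/q=1$, we have $ab \leq a^p/p + b^q/q$. The quick derivation uses concavity of $\log$: if $a,b>0$, then
$$\log\left(\frac{a^p}{p}+\frac{b^q}{q}\right)\;\geq\;\frac{1}{p}\log(a^p)+\frac{1}{q}\log(b^q)\;=\;\log(ab),$$
and the boundary case where $a=0$ or $b=0$ is immediate.

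Next, for the two-function statement, I would handle a trivial case and then normalize. If $\E_x[F(x)^p]=0$, then $F\equiv 0$ on $[M]$, so both sides vanish and we are done (similarly for $G$). Otherwise define $A=\E_x[F(x)^p]^{1/p}>0$, $B=\E_x[G(x)^q]^{1/q}>0$, and set $\tilde F(x)=F(x)/A$, $\tilde G(x)=G(x)/B$, so that $\E_x[\tilde F(x)^p]=\E_x[\tilde G(x)^q]=1$. Applying Young's inequality pointwise gives
$$\tilde F(x)\,\tilde G(x)\;\leq\;\frac{\tilde F(x)^p}{p}+\frac{\tilde G(x)^q}{q},$$
and taking expectations over uniform $x\in[M]$ yields $\E_x[\tilde F(x)\tilde G(x)]\leq 1/p+1/q=1$. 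Multiplying through by $AB$ gives the desired bound $\E_x[F(x)G(x)]\leq A\cdot B$.

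For the general $n$-function version, I would induct on $n$, with $n=2$ already established. For the inductive step, set $p=p_n$ and $q=p_n/(p_n-1)$, so $1/p+1/q=1$ and $1/q=\sum_{i=1}^{n-1}1/p_i$. Applying the two-function case to $F_1\cdots F_{n-1}$ and $F_n$ gives
$$\E_x\bigl[F_1(x)\cdots F_n(x)\bigr]\;\leq\;\E_x\bigl[(F_1(x)\cdots F_{n-1}(x))^{q}\bigr]^{1/q}\cdot\E_x[F_n(x)^{p_n}]^{1/p_n}.$$
Now apply the inductive hypothesis to the $n-1$ non-negative functions $F_i^{q}$ with exponents $p_i/q$; the reciprocal condition $\sum_{i=1}^{n-1}q/p_i=q\cdot(1-1/p_n)=1$ is exactly what we need, and we obtain
$$\E_x\bigl[F_1(x)^q\cdots F_{n-1}(x)^q\bigr]\;\leq\;\prod_{i=1}^{n-1}\E_x[F_i(x)^{p_i}]^{q/p_i}.$$
Taking the $1/q$-th power and substituting back finishes the induction. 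The only delicate point is verifying the exponent arithmetic ($\sum q/p_i=1$ and $(q/p_i)\cdot(1/q)=1/p_i$), which is routine but must be checked carefully to ensure the inductive hypothesis applies.
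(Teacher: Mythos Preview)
Your proof is correct and follows the standard textbook route (Young's inequality via concavity of $\log$, normalization, then induction for the generalized form). Note, however, that the paper does not actually prove this lemma at all: it is stated with a citation to Durrett and used as a black box in the proof of Lemma~\ref{lem:cp_to_closeness}. So there is no ``paper's own proof'' to compare against; you have simply supplied the omitted standard argument.
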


\begin{proofof}{of Lemma \ref{lem:cp_to_closeness}}
We prove it by induction on $T$. The base case $T=1$ is already
non-trivial. Let $X$ be a random variable over $[M]$ with $\cp(X)
\leq \alpha/M$, we need to show that the Hellinger closeness $C(X) \geq
\sqrt{1/\alpha}$. Recall the normalization we mentioned before. Let
$f(x) = M \cdot \Pr[X=x]$ for every $x\in[M]$. In terms of
$f(\cdot)$, we want to show that $\E_x[f(x)^2] \leq \alpha$ implies
$\E_x[\sqrt{f(x)}] \geq \sqrt{1/\alpha}$. Note that $\E_x[f(x)] =
1$. We now apply H\"older's inequality with $F = f^{2/3}$, $G =
f^{1/3}$, $p = 3$, and $q = 3/2$. We have
$$ \E_x[f(x)] \leq \E_x[f(x)^2]^{1/3} \cdot \E_x[f(x)^{1/2}]^{2/3},$$
which implies
$$C(X) = \E_x[\sqrt{f(x)}] \geq \E_x[f(x)]^{3/2} / \E_x[f(x)^2]^{1/2} \geq 
\sqrt{1/\alpha}.$$

Suppose the lemma is true for $T-1$, we show that it is true for
$T$. Let $f(x) = M_1\cdot \Pr[X_1 = x]$. To apply the induction
hypothesis, we consider the conditional random variables
$(X_2,\dots,X_T|_{X_1 = x})$ for every $x \in [M_1]$. For every
$x\in[M_1]$ and $j = 2,\dots, T$, we define $g_j(x) = M_j \cdot
\cp((X_j|_{X_1 = x})|(X_2,\dots,X_{j-1}|_{X_1 = x}))$ to be the
"normalized" conditional collision probability. By induction
hypothesis, we have $C(X_2,\dots,X_T|_{X_1 = x}) \geq
\sqrt{g_2(x)\cdots g_T(x)}$ for every $x \in [M_1]$. It follows that
$$C(\vecX) = \E_x[\sqrt{f(x)}\cdot C(X_2,\dots,X_T|_{X_1 = x)}] \geq 
\E_x[\sqrt{f(x)/g_2(x)\cdots g_T(x)}].$$

We use H\"older's inequality twice to show that
$\E_x[\sqrt{f(x)/g_2(x)\cdots g_T(x)}] \geq
\sqrt{1/\alpha_1\cdots\alpha_T}$. Let us first summarize the
constraints we have. By definition, we have $\E_x[f(x)^2] \leq
\alpha_1$. Fix $j \in \{ 2,\dots,T\}$. Note that
\begin{eqnarray*}
\lefteqn{\cp(X_j|X_{<j})} \\
& = & \E_{x\leftarrow X_1}\left[\cp((X_j|_{X_1 = x})| (X_2,\dots,X_{j-1}|_{X_1 = x}))\right] \\
& = & \E_{x\leftarrow X_1}[g_j(x)/M_j] \\
& = & \E_{x\leftarrow U_{[M]}}[f(x)g_j(x)]/M_j
\end{eqnarray*}
It follows that $\E_x[f(x)g_j(x)] \leq
\alpha_j$ for $j = 2,\dots, T$. Now, we apply the second version of
H\"older's Inequality with $F_1 = (f/g_2\cdots g_T)^{1/2}$, $F_j =
(fg_j)^{1/(T+1)}$ for $j=2,\dots,T$, $p_1 = 2/(T+1)$, and $p_j =
1/(T+1)$ for $j=2,\dots,T$, which gives
$$\E_x\left[f(x)^{T/(T+1)}\right] \leq \E_x\left[\sqrt{f(x)/g_2(x)\cdots
g_T(x)}\right]^{2/(T+1)}\cdot \E_x\left[f(x)g_2(x)\right]^{1/(T+1)}
\cdots \E_x\left[f(x)g_T(x)\right]^{1/(T+1)},$$ so
\begin{eqnarray*}
\E_x\left[\sqrt{f(x)/g_2(x)\cdots g_T(x)}\right] & \geq
&\E_x\left[f(x)^{T/(T+1)}\right]^{(T+1)/2} \cdot \prod_{j=2}^T
\E_x\left[f(x)g_j(x)\right]^{-1/2} \\
 &\geq & \E_x\left[f(x)^{T/(T+1)}\right]^{(T+1)/2} \cdot 
\sqrt{1/\alpha_2\cdots\alpha_T}.
\end{eqnarray*}
It remains to lower bound the first term by $\sqrt{1/\alpha_1}$. We
apply H\"older again with $F = f^{2/(T+2)}$, $G = f^{T/(T+2)}$,
$p=T+2$, and $q = (T+2)/(T+1)$, which gives
$$\E_x\left[f(x)\right] \leq \E_x\left[f(x)^2\right]^{1/(T+2)} \cdot
\E_x\left[f(x)^{T/(T+1)}\right]^{(T+1)/(T+2)},$$ so
$$ \E_x\left[f(x)^{T/(T+1)}\right]^{(T+1)/2} \geq
\E_x\left[f(x)\right]^{(T+2)/2}/\E_x\left[f(x)^2\right]^{1/2} \geq
\sqrt{1/\alpha_1}.$$ Combining the inequalities, we have $C(\vecX)
\geq \sqrt{1/\alpha_1\cdots\alpha_T}$.
\end{proofof}
\fi

\section{Negative Results: How Much Entropy is Necessary?}
\label{sec:negative}

In this section, we provide lower bounds on the entropy needed for
the data items.  We show that if $K$ is not large enough, then for
every hash family $\Hfam$, there exists a block $K$-source $\vecX =
(X_1,\dots,X_T)$ such that the hashed sequence $\vecY = (
H(X_1),\dots,H(X_T))$ do not satisfy the desired closeness
requirements to uniform (possibly in conjunction with the hash
function $H$). 
\ifnum\short=1
Due to space constraints, we only discuss the proof idea of lower bounds in this section. Please refer to the full version of this paper\cite{ChungVa08TR} for the complete proofs.
\fi

\subsection{Lower Bound for Statistical Distance to Uniform Distribution}

Let us first consider the requirement for the joint distribution of
$(H,\vecY)$ being $\eps$-close to uniform. When there is only one
block, this is exactly the requirement for a ``strong extractor''. The
lower bound in the extractor literature, due to Radhakrishnan and Ta-Shma~\cite{RadhakrishnanT00}
shows that $K \geq \Omega(M/\eps^2)$ is necessary, which is tight up
to a constant factor. Our goal is to show that when hashing $T$
blocks, the value of $K$ required  for each block increases by a factor of $T$. 
Intuitively, each block will produce some error (i.e., the hashed
value is not close to uniform), and the overall error will
accumulate over the blocks, so we need to inject more randomness per
block to reduce the error. Indeed, we use this intuition to show
that  $K \geq \Omega(MT/\eps^2)$ is necessary for the hashed
sequence to be $\eps$-close to uniform, matching the upper bound in
Theorem \ref{thm:2-univ_stat}. Note that the lower bound holds even
for a truly random hash family. Formally, we prove the following
theorem.

\begin{theorem} \label{thm:lb_stat}
Let $N,M,$ and $T$ be positive integers and $\eps \in (0,\eps_0)$ a real
number such that $N \geq MT/\eps^2$, where $\eps_0 > 0$ is a small
absolute constant. Let $H:[N] \rightarrow [M]$ be a random hash
function from an hash family $\Hfam$. Then there
  exists an integer $K = \Omega(MT/\eps^2)$, and a 
block $K$-source
  $\vecX =(X_1,\dots,X_T)$ such that $(H,\vecY) = (H,H(X_1),\dots,H(X_T))$
  is $\eps$-far from uniform $(H,U_{[M]^T})$ in statistical distance.
\end{theorem}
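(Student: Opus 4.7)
The plan is to reduce this multi-block lower bound to the known single-block extractor lower bound of Radhakrishnan--Ta-Shma, and then to amplify the per-block statistical distance via iid copies. Concretely, I will set a per-block target distance $\eta := c \cdot \eps / \sqrt{T}$ for a small absolute constant $c$ to be chosen at the end, invoke their lower bound to obtain a single-block source $X$ whose hash is $\eta$-far from uniform, take $\vecX = (X_1, \dots, X_T)$ to be $T$ independent copies of $X$, and finally prove that $(H,\vecY)$ is $\Omega(\sqrt{T}\,\eta) = \Omega(\eps)$-far from uniform via an amplification lemma for $T$-fold products.

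For the reduction step, note $N \geq MT/\eps^2 \geq M/\eta^2$ (after absorbing $c$). The Radhakrishnan--Ta-Shma lower bound then supplies, for every hash family $\Hfam : [N] \to [M]$, a flat source $X$ uniform on some set $S \subseteq [N]$ of size $K = \Omega(M/\eta^2) = \Omega(MT/\eps^2)$ with $(H,H(X))$ being $\eta$-far from $(H,U_{[M]})$. Moreover, inspection of their argument (which picks $S$ at random and uses balls-into-bins concentration for the bucket sizes $|h^{-1}(i) \cap S|$) shows that a constant fraction of the hash functions $h \in \Hfam$ individually satisfy $\Delta(h(X),U_{[M]}) = \Omega(\eta)$; this uniform-in-$h$ strengthening of the average statement is the structural input I will use when averaging later. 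Taking $\vecX$ to be $T$ iid copies of this flat $X$ yields a block $K$-source, since independence gives $\cp(X_i \mid X_{<i}=x_{<i}) = \cp(X) = 1/K$ for every prefix.

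The main technical step, and the primary obstacle, is the amplification lemma: for any distributions $P,Q$ on a common space with $\Delta(P,Q) = \delta$, one has $\Delta(P^T,Q^T) \geq c' \cdot \min\{1,\sqrt{T}\,\delta\}$ for an absolute constant $c'$. To prove it I will take an optimal distinguishing event $A$ with $P(A)-Q(A)=\delta$ and analyze the empirical-frequency test $\hat p = \tfrac{1}{T}\sum_{i=1}^T \mathbf{1}_A(y_i)$, thresholded at $(P(A)+Q(A))/2$: under $P^T$ and $Q^T$, $\hat p$ is a bounded iid sum with means separated by $\delta$ and variance at most $1/(4T)$ in both cases, so Chebyshev's inequality yields failure probability $O(1/(T\delta^2))$ and hence $\Delta(P^T,Q^T) \geq 1 - O(1/(T\delta^2))$. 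This already gives $\Omega(1) = \Omega(\sqrt{T}\,\delta)$ in the regime $T\delta^2 \gtrsim 1$; in the complementary small-separation regime $\sqrt{T}\,\delta \leq 1$, the same statistic requires a finer central-limit / anti-concentration analysis of $\hat p$ to obtain the tight $\Omega(\sqrt{T}\,\delta)$ bound. This anti-concentration half is where the improvement over Reyzin's $\Omega(\min\{\delta^{1/3},\sqrt{T}\,\delta\})$ comes in, and is the hardest part of the argument.

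To finish, observe that conditioning on $H=h$ decomposes the joint distributions as
$$
\Delta\big((H,\vecY),(H,U_{[M]^T})\big) \;=\; \E_{h \getsr H}\!\left[\Delta\big((h(X))^T,U_{[M]}^T\big)\right] \;\geq\; c' \cdot \E_{h \getsr H}\!\left[\min\{1,\sqrt{T}\,\eps_h\}\right],
$$
where $\eps_h := \Delta(h(X),U_{[M]})$ and the inequality applies the amplification lemma with $P=h(X)$ and $Q=U_{[M]}$. By the uniform-in-$h$ concentration noted in the reduction step, a constant fraction of $h$ have $\eps_h \geq \Omega(\eta)$, and for each such $h$ the integrand is $\Omega(\sqrt{T}\,\eta) = \Omega(\eps)$; therefore the expectation is $\Omega(\eps)$. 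Choosing $c$ sufficiently small in $\eta = c\eps/\sqrt{T}$ makes the final bound at least $\eps$, completing the proof. The concentration-based averaging is essential here because $\min\{1,\sqrt{T}\,x\}$ is concave in $x$, so Jensen's inequality does not let one pass from the average bound $\E_h[\eps_h] \geq \eta$ to the desired conclusion directly.
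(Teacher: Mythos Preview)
Your high-level plan matches the paper's proof exactly: invoke the Radhakrishnan--Ta-Shma single-block lower bound in the strengthened form that a \emph{constant fraction} of hash functions are individually bad (this is the paper's Lemma~\ref{lem:lb_single_hash}), take $\vecX$ to be $T$ iid copies of the resulting flat source, amplify the per-block distance via a product lemma, and average over the good $h$. Your observation that concavity of $\min\{1,\sqrt{T}\,x\}$ blocks a direct Jensen argument, forcing the per-$h$ concentration, is also exactly the point.

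The only substantive gap is in your amplification lemma, precisely in the small-separation regime $\sqrt{T}\,\delta \le 1$ that you yourself flag as ``the hardest part.'' A CLT/Berry--Esseen analysis of the threshold test does not close it: the Berry--Esseen error for a Bernoulli sum is $\Theta(1/\sqrt{T})$, which dominates the target $\Omega(\sqrt{T}\,\delta)$ whenever $T\delta \le O(1)$, and this regime is genuinely in play here since $\delta = \Theta(\eps/\sqrt{T})$. The paper avoids any limit theorem with two elementary moves. First (Claim~\ref{lem:reduce_Bernoulli}), it composes with a randomized map $f$ so that $f(Y)$ is \emph{exactly} uniform on $\{0,1\}$ while $f(X)$ retains bias $\Omega(\delta)$; your proposal lacks this step, and without it the variance of your indicator $\mathbf{1}_A$ need not be bounded below. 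Second (Claim~\ref{lem:indep_Bernoulli}), rather than thresholding at the midpoint, it takes the set $S=\{\vecx:\wt(\vecx)\le T/2-\sqrt{T}\}$ and shows \emph{pointwise} that $\Pr[\vecX=\vecx]\le (1-\Omega(\min\{1,\sqrt{T}\,\delta\}))\cdot 2^{-T}$ for every $\vecx\in S$, then uses $\Pr[U_{\{0,1\}^T}\in S]=\Omega(1)$. The shift of the threshold by $\sqrt{T}$ (one standard deviation) rather than by $\Theta(T\delta)$ is exactly the twist over Reyzin that yields $\Omega(\sqrt{T}\,\delta)$ uniformly across both regimes without any asymptotic approximation.
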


To prove the theorem, we need to find such an $\vecX$ for every hash family $\Hfam$. Following the intuition, we find an $X$ that incurs certain error on a single block, and take $\vecX = (X_1,\dots,X_T)$ to be $T$ i.i.d. copies of $X$. More precisely, we first find a $K$-source $X$ such that for $\Omega(1)$-fraction of hash functions $h \in \Hfam$, $h(X)$ is $\Omega(\eps/\sqrt{T})$-far from uniform. This step is the same as the lower bound proof for extractors~\cite{RadhakrishnanT00}, which uses the probabilistic method. We pick $X$ to be a random \emph{flat} $K$-source, i.e., a
uniform distribution over a random set of size $K$, and show that $X$ satisfies the desired property with nonzero probability. The next step is to measure how the error accumulates over independent blocks. Note that for a fixed hash function $h$, the hashed sequence $(h(X_1),\dots,h(X_T))$ consists of $T$ i.i.d. copies of $h(X)$. Reyzin \cite{Reyzin04} has shown that the statistical distance increases $\sqrt{T}$ when we have $T$ independent copies for small $T$. However, Reyzin's result only shows an increase up to distance $O(\delta^{1/3})$, where $\delta$ is the
statistical distance of the original random variables. We improve Reyzin's result to show that the $\Omega(\sqrt{T})$ growth continues until the distance reaches some absolute constant. We then use it to show that the joint distribution $(H,\vecY)$ is far from uniform. 
\ifnum\short=1
Details can be found in the full version of this paper\cite{ChungVa08TR}.
\fi

\ifnum\short=0 

The following lemma corresponds to the first step.

\begin{lemma} \label{lem:lb_single_hash}
Let $N$ and $M$ be positive integers and $\eps \in (0,1/4), \delta \in (0,1)$  real numbers such that $N \geq M/\eps^2$. Let $H:[N] \rightarrow [M]$ be a random hash function from an hash family $\Hfam$. Then there exists an integer $K = \Omega(\delta^2 M/\eps^2)$, and a flat $K$-source $X$ over $[N]$ such that with probability at least $1-\delta$ over $h \leftarrow H$, $h(X)$ is $\eps$-far from uniform.
\end{lemma}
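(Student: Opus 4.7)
The plan is to apply the probabilistic method in the spirit of the Radhakrishnan--Ta-Shma extractor lower bound. Take $X$ to be a uniformly random \emph{flat} $K$-source: $X = U_S$ for a uniformly random $K$-subset $S \subseteq [N]$, with $K = c\delta^2 M/\eps^2$ for a sufficiently small absolute constant $c > 0$. By Markov's inequality applied to the random variable $f(X) = \Pr_h[\Delta(h(X),U_{[M]}) < \eps]$, it suffices to establish the pointwise bound $\Pr_S[\Delta(h(U_S),U_{[M]}) < \eps] \le \delta^2$ for every $h \in \Hfam$: this will imply $\E_X[f(X)] \le \delta^2$ and hence $f(X) \le \delta$ for some flat $K$-source $X$.

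Fix $h \in \Hfam$ and write $a_y = |h^{-1}(y)|$ and $c_y = |S \cap h^{-1}(y)|$, so that $\Delta(h(U_S),U_{[M]}) = \tfrac{1}{2}\sum_y |c_y/K - 1/M|$. Each $c_y$ is hypergeometric with mean $K a_y/N$ and variance $\Theta\bigl(K a_y/N \cdot (1 - a_y/N)\bigr)$ for $K \ll N$. The extremal case is $h$ approximately balanced ($a_y \approx N/M$), since a significantly unbalanced $h$ already makes $h(U_{[N]})$ itself $\Omega(1)$-far from uniform, from which concentration of each $c_y$ around its mean transfers the same bound to $h(U_S)$. For balanced $h$, standard first-absolute-moment estimates for the hypergeometric yield
$$\E_S[\Delta] \;=\; \frac{1}{2K}\sum_y \E|c_y - K/M| \;=\; \Omega\bigl(\sqrt{M/K}\bigr) \;=\; \Omega(\eps/\delta),$$
which is much larger than $\eps$ once $c$ is a small constant.

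The main obstacle is the sharp lower-tail estimate $\Pr_S[\Delta < \eps] \le \delta^2$. The key observation is that $\Delta$ is a $(1/K)$-Lipschitz function of $S$: swapping one element of $S$ changes exactly two of the $c_y$'s by $\pm 1$, hence $\sum_y|c_y - K/M|$ by at most $2$ and $\Delta$ by at most $1/K$. Applying a McDiarmid-type concentration for sampling without replacement (Azuma--Hoeffding on the Doob martingale that exposes the elements of $S$ one at a time) gives
$$\Pr_S\!\left[\Delta \le \E_S[\Delta] - t\right] \;\le\; \exp(-\Omega(K t^2)).$$
Plugging in $t = \tfrac{1}{2}\E_S[\Delta] = \Omega(\eps/\delta)$ and $K = \Theta(\delta^2 M/\eps^2)$ yields a bound of the form $\exp(-\Omega(M))$, which is at most $\delta^2$ whenever $M = \Omega(\log(1/\delta))$; the remaining regime $M = O(\log(1/\delta))$ is handled by exhibiting $X$ directly as supported on a single badly-biased preimage, using the hypothesis $N \ge M/\eps^2$ to ensure that enough elements exist. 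A naive Paley--Zygmund / second-moment argument would give only constant-probability concentration and would be insufficient for the required $\delta$-dependence. Combining the pointwise bound with the averaging argument then produces the desired flat $K$-source $X$.
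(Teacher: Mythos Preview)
Your outer framework---pick $X=U_S$ for a uniformly random $K$-subset $S$, prove a per-hash-function bound $\Pr_S[\Delta(h(U_S),U_{[M]})<\eps]\le\delta$, then average over $h$---matches the paper exactly (the detour through Markov and $\delta^2$ is unnecessary: once $\E_S[\Pr_h[\cdot]]\le\delta$ you already get a good $S$). The divergence is in how the per-$h$ bound is established, and your route has real gaps. The asserted $\E_S[\Delta]=\Omega(\sqrt{M/K})=\Omega(\eps/\delta)$ cannot hold once $\eps/\delta$ exceeds a constant, since $\Delta\le 1$; the correct first-moment bound even for perfectly balanced $h$ is only $\E_S[\Delta]=\Theta(\min(1,\sqrt{M/K}))$, and your McDiarmid arithmetic must be redone around that. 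More seriously, the ``unbalanced'' branch is not worked out: you claim concentration of each $c_y$ around its mean $\mu_y$ transfers the bias of $h(U_{[N]})$ to $h(U_S)$, but the typical fluctuation $\frac{1}{2K}\sum_y|c_y-\mu_y|$ has mean up to $\Theta(\sqrt{M/K})$, which for small $\delta$ can swamp any fixed $\delta_h=\Delta(h(U_{[N]}),U_{[M]})$, so the transfer does not go through as stated. A correct dichotomy would have to be on $\sum_y\sigma_y$ (or equivalently $\sum_y\sqrt{p_y(1-p_y)}$) rather than on $\delta_h$, and requires a separate argument that small $\sum_y\sigma_y$ forces large $\delta_h$.

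The paper avoids all of this by proving a \emph{per-bucket anti-concentration} statement that is uniform in the preimage size: for every $y$ and every $T=h^{-1}(y)$ (of arbitrary cardinality), it shows $\Pr_S\bigl[\,|\,|S\cap T|-K/M\,|<4K\eps/M\,\bigr]\le c'\sqrt{K\eps^2/M}$ directly from Stirling-type estimates on the hypergeometric point probabilities. Averaging this over $y$ and one Markov step give that, with probability $\ge 1-2c'\sqrt{K\eps^2/M}$ over $S$, at least half the buckets deviate by $\ge 4\eps/M$, which already forces $\Delta\ge\eps$. Taking $K\eps^2/M=\Theta(\delta^2)$ yields the $1-\delta$ conclusion with no case analysis on $h$ and no small-$M$ regime to patch. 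Note also that your ``standard first-absolute-moment estimate'' $\E|c_y-\mu_y|=\Omega(\sigma_y)$ is itself an anti-concentration fact of the same type the paper proves, so the McDiarmid layer does not save technical work---it adds the regime issues on top of it.
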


\begin{proof}
Let $K = \lfloor \min \{ \alpha \cdot M / \eps^2, N/2\} \rfloor$ for some $\alpha$ to be determined later. Let $X$ be a random flat $K$-source over $[N]$. That is, $X = U_S$ where $S \subset [N]$ is a uniformly random size $K$ subset of $[N]$. We claim that for every hash function $h:[N] \rightarrow [M]$,
  \begin{equation} \label{eq:far_from_unif}
  \Pr_S[ \mbox{ $h(U_S)$ is $\eps$-far from uniform }] \geq 1- c\cdot \sqrt{\alpha}
  \end{equation}
for some absolute constant $c$. Let us assume (\ref{eq:far_from_unif}), and prove the lemma first. Since the claim holds for every hash function $h$,
  $$\Pr_{h\leftarrow H,S}[ \mbox{ $h(U_S)$ is $\eps$-far from uniform }] \geq 1- c\cdot \sqrt{\alpha}.$$ 
Thus, there exists a flat $K$-source $U_S$ such that
  $$\Pr_{h\leftarrow H}[ \mbox{ $h(U_S)$ is $\eps$-far from uniform }] \geq 1- c\cdot \sqrt{\alpha}.$$ 
The lemma follows by setting $\alpha = \min \{ \delta^2/c^2, 1/32 \}$. We proceed to prove (\ref{eq:far_from_unif}). It suffices to show that for every $y\in [M]$, with probability at least $1- c'\cdot \sqrt{\alpha}$ over random $U_S$, the deviation of $\Pr[h(U_S) = y]$ from $1/M$ is at least $4\eps/M$, where $c'$ is another absolute constant. That is,
  \begin{equation}   \label{eq:large_dev}
  \Pr_S\left[ \left|\Pr[h(U_S)=y] - \frac{1}{M}\right| \geq \frac{4\eps}{M} \right] \geq 1 - c' \cdot \sqrt{\alpha}.
  \end{equation}
Again, let us see why (\ref{eq:large_dev}) is sufficient to prove (\ref{eq:far_from_unif}) first. Let us call $y\in[M]$ is \emph{bad} for $S$ if
  $$\left|\Pr[h(U_S)=y] - \frac{1}{M}\right| \geq \frac{4\eps}{M}.$$ 
Since Inequality (\ref{eq:large_dev}) holds for every $y\in[M]$, we have
  \begin{equation*}
  \Pr_{S,y}[ \mbox{$y$ is bad for $S$} ] \geq 1 - c' \cdot \sqrt{\alpha},
  \end{equation*}
where $y$ is uniformly random over $[M]$. It follows that
  $$\Pr_{S}[ \mbox{at least $1/2$-fraction of $y$ are bad for $S$} ] \geq 1 - 2c' \cdot \sqrt{\alpha}$$ 
Observe that if at least $1/2$-fraction of $y$ are bad for $S$, then $\Delta(h(X),U_{[M]}) \geq \eps$. Inequality (\ref{eq:far_from_unif}) follows by setting $c = 2c'$.

It remains to prove (\ref{eq:large_dev}). Let $T = h^{-1}(y)$. We have $\Pr_S[h(U_S)=y] = |S \cap T| /|S|$. Thus, recall that $K \leq \alpha M/\eps^2$, (\ref{eq:large_dev}) follows from  inequality
  $$\Pr_S\left[ \left|  |S \cap T| - \frac{K}{M}\right| < \frac{4K\eps}{M}\right] \leq c' \cdot \sqrt{\frac{K\eps^2}{M}},$$
which follows by the claim below by setting $L = K/M$, and $\beta = 4\eps\sqrt{K/M}$ (Working out the parameters, we have $c' = 4c''$, $\eps < 1/4$ implies $\beta < \sqrt{L}$, and $\alpha \leq 1/32$ implies $\beta < 1$.) 
\begin{claim} \label{clm:binomial}
Let $N, K > 1$ be positive integers such that $N > 2K$, and $L\in [0, K/2]$, $\beta \in (0, \min\{1,\sqrt{L}\})$ real numbers. Let $S \subset [N]$ be a random subset of size $K$, and $T \subset [N]$ be a fixed subset of arbitrary size. We have
$$ \Pr_S \left[ \left| |S\cap T| - L\right| \leq \beta \sqrt{L} \right] \leq c'' \cdot \beta,$$
for some absolute constant $c''$.
\end{claim}
Intuitively, the probability in the claim is maximized when the set $T$ has size $NL/K$ so that $L = \E_S[|S \cap T|]$, and the claim follows by observing that  in this case, the distribution has deviation $\Theta(\sqrt{L})$, and each possible outcome has probability $O(\sqrt{1/L})$. The formal proof of the claim is in Appendix \ref{app:sec:Bino} and is proved by expressing the probability in terms of binomial coefficients, and estimating them using Stirling formula.
\end{proof}

The next step is to measure the increase of statistical distance
over independent random variables. 

\begin{lemma} \label{lem:stat_inc_small_T}
  Let $X$ and $Y$ be random variables over $[M]$ such that
  $\Delta(X,Y) \geq \eps$. Let $\vecX = (X_1,\dots,X_T)$ be $T$
  i.i.d. copies of $X$, and let $\vecY = (Y_1,\dots,Y_T)$ be $T$
  i.i.d. copies of $Y$. We have
  $$ \Delta(\vecX,\vecY) \geq \min\{ \eps_0, c\sqrt{T}\cdot\eps\},$$
  where $\eps_0, c$ are absolute constants.
\end{lemma}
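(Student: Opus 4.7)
The plan is to split the argument into two regimes based on whether $T\eps^2$ exceeds a sufficiently large absolute constant $C$.

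In the large regime $T\eps^2 \geq C$, I would use Hellinger distance and its tensorization property. By Lemma~\ref{lem:Hellinger_to_stat}, $\Delta(X,Y) \geq \eps$ yields $d(X,Y)^2 \geq \eps^2/2$. Since $\vecX = X^T$ and $\vecY = Y^T$ are products of iid copies, the inner product $\sum_{\vec{i}} \sqrt{\Pr[\vecX=\vec{i}]\Pr[\vecY=\vec{i}]}$ factors across coordinates, giving the identity $1-d(\vecX,\vecY)^2 = (1-d(X,Y)^2)^T \leq e^{-T\eps^2/2}$. Hence $d(\vecX,\vecY)^2 \geq 1 - e^{-C/2}$, which is bounded below by some absolute constant $\eps_0$ once $C$ is large enough. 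Applying $\Delta \geq d^2$ from Lemma~\ref{lem:Hellinger_to_stat} gives $\Delta(\vecX,\vecY) \geq \eps_0$.

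In the small regime $T\eps^2 < C$, the Hellinger argument only gives $\Delta \geq \Omega(T\eps^2)$, which is weaker than $\sqrt{T}\eps$. I would instead reduce to a lower bound on the TV distance between two Binomials via a distinguishing set. By definition of statistical distance, there is $A\subseteq[M]$ with $p := \Pr[X\in A] \geq \Pr[Y\in A] + \eps =: q + \eps$. The counting sketch $\phi(\vec{x}) = |\{i : x_i \in A\}|$ satisfies $\phi(\vecX)\sim B(T,p)$ and $\phi(\vecY)\sim B(T,q)$, so by the data-processing inequality $\Delta(\vecX,\vecY) \geq \Delta(B(T,p), B(T,q))$. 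It then suffices to show $\Delta(B(T,p), B(T,q)) = \Omega(\sqrt{T}\eps)$ whenever $p - q \geq \eps$, in the regime where $\sqrt{T}\eps$ is small.

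For the Binomial bound, a threshold test at the midpoint of the two means (which are separated by $T\eps$) is the natural approach. By the symmetry $B(T,p) \sim T - B(T,1-p)$, one may assume $q \leq 1/2$. When $q$ is bounded below by a constant, Berry-Esseen approximates both Binomials by Gaussians with standard deviation $\Theta(\sqrt{T})$, and the TV between these Gaussians is $\Theta(\sqrt{T}\eps)$, dominating the Berry-Esseen error $O(1/\sqrt{T})$ in the relevant range. When $q$ is small (say $q \leq 1/T$), one directly compares $\Pr[B(T,q)=0] - \Pr[B(T,p)=0] = (1-q)^T - (1-p)^T = \Omega(T\eps) \geq \Omega(\sqrt{T}\eps)$. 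The main obstacle is stitching together the sub-regimes uniformly, particularly handling intermediate magnitudes of $q$ and ensuring the Berry-Esseen error is dominated by the signal when $T\eps$ is near $1$; this requires careful quantitative estimates but is technically standard.
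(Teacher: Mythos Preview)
Your overall plan is sound, but the route differs substantially from the paper's, and your small-regime sketch has a real (if fixable) soft spot.

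\medskip
\noindent\textbf{How the paper does it.} The paper does not split into regimes. It proves two claims. First, a randomized reduction: given the optimal distinguishing set $T$ with $\Pr[Y\in T]=p\le 1/2$, define $f(x)=\mathbf{1}[x\in T]\vee b$ where $b$ is an independent coin with $\Pr[b=0]=1/(2(1-p))$. Then $f(Y)$ is \emph{exactly} uniform on $\{0,1\}$ and $\Delta(f(X),f(Y))\ge \eps/2$. Second, it proves directly (no Berry--Esseen) that for Bernoulli$(1/2-\eps')$ versus uniform, the $T$-fold product has TV distance $\ge \min\{\eps_0,c\sqrt{T}\eps'\}$, by testing the event $S=\{\vecx:\wt(\vecx)\le T/2-\sqrt{T}\}$: each $\vecx\in S$ has likelihood ratio $\le 1-\Omega(\min\{1,\sqrt{T}\eps'\})$, and $\Pr[U\in S]=\Omega(1)$. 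Applying $f$ coordinatewise finishes.

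\medskip
\noindent\textbf{Comparison.} Your large-regime Hellinger argument is correct and is in fact exactly the content of the paper's separate Lemma~\ref{lem:stat_inc_large_T} (cited to Sahai--Vadhan), so it is not needed here. In the small regime your reduction to $\Delta(B(T,p),B(T,q))$ is valid, but you then face arbitrary $p,q$, and your proposed Berry--Esseen argument does not close as written: when $T\eps$ is of order~$1$ (which is allowed in your small regime), the Berry--Esseen error $O(1/\sqrt{T})$ is the same order as the signal $\sqrt{T}\eps$, so the Gaussian comparison alone does not give the bound. Likewise the ``$q\le 1/T$'' sub-case does not literally give $(1-q)^T-(1-p)^T=\Omega(T\eps)$ when $p$ is not $O(1/T)$; one only gets $\Omega(\min\{1,T\eps\})$, which still suffices but needs to be said. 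The intermediate range $q\in(1/T,c_1)$ is genuinely left open.

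The clean fix---and the idea you are missing relative to the paper---is to \emph{center} before comparing binomials: apply the randomized map $f$ above to each Bernoulli trial, which sends $B(T,q)\to B(T,1/2)$ and $B(T,p)\to B(T,1/2+\eps')$ with $\eps'\ge\eps/2$. Then a single threshold test at $T/2-\sqrt{T}$ works uniformly, with no case analysis on $q$ and no normal approximation.
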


We defer the proof of the above lemma to Appendix \ref{app:proof_lem_stat_inc_small_T}.

\begin{proofof}{of Theorem \ref{thm:lb_stat}}
The absolute constant $\eps_0$ in the theorem is a half of the $\eps_0$ in Lemma \ref{lem:stat_inc_small_T}. By Lemma \ref{lem:lb_single_hash} there is a flat $K$-source such that for $1/2$-fraction of hash functions $h\in \Hfam$, $h(X)$ is $(2\eps/c\sqrt{T})$-far from uniform, for $K = \Omega( (1/2)^2 M/ (2\eps/c\sqrt{T})^2) = \Omega(MT/\eps^2)$. We set $\vecX = (X_1,\dots,X_T)$ to be $T$ independent copies of $X$. Consider a hash function $h$ such that $h(X)$ is $(2\eps/c\sqrt{T})$-far from uniform. By Lemma \ref{lem:stat_inc_small_T}, $(h(X_1),\dots,h(X_T))$ is $2\eps$-far from uniform. Note that this holds for $1/2$-fraction of hash function $h$. It
follows that
  $$\Delta((H,\vecY),(H,U_{[M]})) = \E_{h\leftarrow H}\left[\Delta((h(X_1),\dots,h(X_T), U_{[M]^T})\right]\geq \frac{1}{2} \cdot 2\eps = \eps.$$
\end{proofof}

\fi

\subsection{Lower Bound for Small Collision Probability}

In this subsection, we prove lower bounds on the entropy needed per item to ensure that the sequence of hashed values is close to having small collision probability. Since this requirement is less stringent than being close to uniform, less entropy is needed from the source. The interesting setting in applications is to require the hashed sequence $(H,\vecY) = (H,H(X_1),\dots,H(X_T))$ to be $\eps$-close to having collision probability $O(1/(|\Hfam|\cdot M^T))$. Recall that in this setting, instead of requiring $K \geq MT/\eps^2$, $K \geq \Omega(MT/\eps)$ is sufficient for $2$-universal hash functions (Theorem \ref{thm:2-univ_cp}), and $K \geq \Omega(MT + T\sqrt{M/\eps})$ is sufficient for $4$-wise independent hash functions (Theorem \ref{thm:4-wise_cp}). The main improvement from $2$-universal to $4$-wise independent hashing is the better dependency on $\eps$. Indeed, it can be shown that if we use truly random hash functions, we can reduce the dependency on $\eps$ to $\log (1/\eps)$. Since we are now proving lower bounds for arbitrary hash families, we focus on the dependency on $M$ and $T$. Specifically, our goal is to show that $K = \Omega(MT)$ is necessary. More precisely, we show that when $K \ll MT$, it is possible for the hashed sequence $(H,\vecY)$ to be $.99$-far from any distribution that has collision probability less than $100/(|\Hfam|\cdot M^T)$.

We use the same strategy as in the previous subsection to prove this
lower bound. Fixing a hash family $\Hfam$, we take $T$ independent
copies $(X_1,\dots,X_T)$ of the worst-case $X$ \ifnum\short=0 found in Lemma
\ref{lem:lb_single_hash}\fi, and show that $(H,H(X_1),\dots,H(X_T))$ is
far from having small collision probability. The new ingredient is
to show that when we have $T$ independent copies, and $K\ll MT$,
then $(h(X_1),\dots,h(X_T))$ is very far from uniform (say,
$0.99$-far) for many $h \in \Hfam$. We then argue that in this case,
we can not reduce the collision probability of
$(h(X_1),\dots,h(X_T))$ by changing a small fraction of
distribution, which implies the overall distribution $(H,\vecY)$ is
far from any distribution $(H',\vecZ)$ with small collision
probability. Formally, we prove the following theorem\ifnum\short=0.\fi 
\ifnum\short=1
 \ in the full version of this paper\cite{ChungVa08TR}.
\fi

\begin{theorem} \label{thm:lb_small_cp}
Let $N,M,$ and $T$ be positive integers such that $N \geq MT$. Let $\delta \in (0,1)$ and $\alpha > 1$ be real numbers such that $\alpha < \delta^3 \cdot e^{T/32} / 128$. Let $H:[N] \rightarrow [M]$ be a random hash function from a hash family $\Hfam$. There exists an integer $K = \Omega(\delta^2MT/\log (\alpha/\delta))$, and a block $K$-source $\vecX =(X_1,\dots,X_T)$ such that $(H,\vecY) = (H,H(X_1),\dots,H(X_T))$ is $(1-\delta)$-far from any distribution $(H',\vecZ)$ with $\cp(H',\vecZ) \leq \alpha/(|\Hfam|\cdot M^T)$.
\end{theorem}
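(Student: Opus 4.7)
The plan is to follow the three-step blueprint laid out in the overview: use Lemma~\ref{lem:lb_single_hash} to produce a flat ``bad'' single-block source $X$ with enough residual entropy, take $\vecX$ to be $T$ i.i.d.\ copies of $X$, and then show that the resulting product distribution is so concentrated that it cannot be close to any low-collision-probability distribution. The essential quantitative choice is to set $\eps^2 = c\,\log(\alpha/\delta)/T$ for a sufficiently large absolute constant $c$; the hypothesis $\alpha < \delta^3 e^{T/32}/128$ is exactly what keeps $\eps$ below an absolute constant, so the Taylor/amplification estimates below remain valid. Because $N\geq MT\geq M/\eps^2$, Lemma~\ref{lem:lb_single_hash} with this $\eps$ and ``bad-fraction'' parameter $\Theta(\delta)$ supplies a flat $K$-source $X$ with
\[K = \Omega\!\bigl(\delta^2 M/\eps^2\bigr) = \Omega\!\bigl(\delta^2 MT/\log(\alpha/\delta)\bigr),\]
such that at least a $(1-\Theta(\delta))$-fraction of $h\in \Hfam$ satisfies $\Delta(h(X), U_{[M]})\geq \eps$. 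Taking $\vecX$ to be $T$ i.i.d.\ copies of $X$ produces the desired block $K$-source.

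For each such ``bad'' $h$, the contrapositive of Lemma~\ref{lem:cp_to_stat} gives $\cp(h(X))\geq (1+\eps^2)/M$, so $\cp(D_h):=\cp(h(X))^T \geq (1+\eps^2)^T/M^T\geq e^{T\eps^2/2}/M^T \geq (\alpha/\delta)^{c/2}/M^T$. The next step is to convert this cp lower bound into a small ``effective support'': by concentration of the log-likelihood $-\sum_i \log p_{y_i}$ under $D_h$ (Chernoff/Bernstein for sums of i.i.d.\ bounded summands, using Pinsker's inequality to ensure $\log M - H(h(X))=\Omega(\eps^2)$), one exhibits a set $\mathcal T_h\subseteq [M]^T$ with $|\mathcal T_h|\leq O(\delta^2 M^T/\alpha)$ and $\Pr[D_h\in\mathcal T_h]\geq 1-o(\delta)$. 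A single Cauchy--Schwarz then shows that any distribution $E$ on $[M]^T$ with $\cp(E)\leq C\alpha/(\delta^3 M^T)$ satisfies $\Pr[E\in\mathcal T_h]\leq \sqrt{|\mathcal T_h|\cdot\cp(E)}\leq \delta/2$, so $\Delta(D_h, E)\geq 1-\delta$.

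To lift this fiber-wise bound to the joint distribution: suppose for contradiction that some $(H',\vecZ)$ has $\cp(H',\vecZ)\leq \alpha/(|\Hfam|M^T)$ and $\Delta((H,\vecY),(H',\vecZ))<1-\delta$. Using $\Delta(H,H')\leq \Delta((H,\vecY),(H',\vecZ))<1-\delta$ and the identity $\cp(H',\vecZ)=\sum_h \Pr[H'=h]^2\,\cp(\vecZ\mid H'=h)$, a weighted threshold/Markov argument on $\Hfam$ yields a set of relative size $\Omega(\delta)$ on which simultaneously (i)~$\Pr[H'=h]=\Omega(1/|\Hfam|)$ and (ii)~$\cp(\vecZ\mid H'=h)\leq C\alpha/(\delta^3 M^T)$. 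Intersecting with the $(1-\Theta(\delta))$-fraction of ``bad'' $h$, the fiber-wise conclusion gives $\Delta(D_h,\vecZ\mid H'=h)\geq 1-\delta/2$ for an $\Omega(\delta)$-fraction of $h\in \Hfam$, which contradicts the assumed closeness. The main obstacle is the tight coordination in this last step: the bad fraction (from Lemma~\ref{lem:lb_single_hash}), the well-represented fraction under $H'$, and the low-$\cp$ fiber fraction must simultaneously be $\Omega(\delta)$, which forces the bad-fraction parameter to be $\Theta(\delta)$ (this is the source of the $\delta^2$ factor in $K$) and the constant $c$ in $\eps^2$ to be large enough to absorb the $\delta^{-O(1)}$ losses introduced at each Markov step.
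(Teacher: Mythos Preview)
Your high-level plan---apply Lemma~\ref{lem:lb_single_hash}, take $T$ i.i.d.\ copies, amplify, then lift to the joint distribution by a Markov/averaging argument over $h$---matches the paper's. The gap is in the amplification step. The collision-probability detour is a dead end: you never actually use $\cp(D_h)\geq (1+\eps^2)^T/M^T$, and a large collision probability does \emph{not} by itself yield a small effective support (a single heavy atom sitting on top of an otherwise near-uniform distribution already shows this). What you really need is the entropy gap from Pinsker, but then the log-likelihood concentration you invoke is not justified as stated: the summands $-\log p_{y_i}$ range over $[0,\log K]$, so Hoeffding only controls fluctuations on scale $\Theta(\sqrt{T}\,\log K)$, far too coarse to hit a window of width $\Theta(T\eps^2)=\Theta(\log(\alpha/\delta))$; and Bernstein would require a bound on $\mathrm{Var}(-\log p_Y)$ that you have no handle on for an arbitrary $h(X)$ that is merely $\eps$-far from uniform.

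The paper sidesteps this by invoking Lemma~\ref{lem:stat_inc_large_T} (the Sahai--Vadhan amplification lemma): from $\Delta(h(X),U_{[M]})\geq\eps$ one gets directly $\Delta(D_h,U_{[M]^T})\geq 1-e^{-T\eps^2/2}$. Its proof reduces to Chernoff for \emph{Bernoulli} summands (indicators of the single-block distinguishing event), which avoids the unboundedness issue entirely. Once $D_h$ is $(1-\eta)$-far from uniform with $\eta=e^{-T\eps^2/2}=\delta^3/(128\alpha)$, the small-support set you were after is simply the witnessing event: it has uniform measure at most $\eta$ and $D_h$-measure at least $1-\eta$. From there the paper packages your Cauchy--Schwarz step as Lemma~\ref{lem:hard_to_reduce_cp} (the ``$(\delta,\beta)$-nonuniformity'' lemma) and your fiber-wise lifting/Markov argument as Lemma~\ref{lem:nonuniform_far}; those final two steps are essentially what you describe.
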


Think of $\alpha$ and $\delta$ as constants. Then the theorem says that $K = \Omega(MT)$ is necessary for the hashed sequence $(H,H(X_1),\dots,H(X_T))$ to be close to having small collision probability, matching the upper bound in Theorem \ref{thm:2-univ_cp}.
\ifnum\short=0
In the previous proof, we used Lemma \ref{lem:stat_inc_small_T} to measure the increase of distance over blocks. However, the lemma can only measure the progress up to some small constant. It is known that if the number of copies $T$ is larger then $\Omega(1/\eps^2)$, where $\eps$ is the statistical distance of original copy, then the statistical distance goes to $1$ exponentially fast. Formally, we use the following lemma.

\begin{lemma}[\cite{SahaiVa99}] \label{lem:stat_inc_large_T}
  Let $X$ and $Y$ be random variables over $[M]$ such that
  $\Delta(X,Y) \geq \eps$. Let $\vecX = (X_1,\dots,X_T)$ be $T$
  i.i.d. copies of $X$, and let $\vecY = (Y_1,\dots,Y_T)$ be $T$
  i.i.d. copies of $Y$. We have
  $$ \Delta(\vecX,\vecY) \geq 1 - e^{-T\eps^2/2}.$$
\end{lemma}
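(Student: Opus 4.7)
The strategy is to route through Hellinger distance $d(\cdot,\cdot)$, which, unlike statistical distance, tensorizes across independent components and therefore tracks error accumulation over i.i.d.\ copies without multiplicative slack. The plan is: (i) convert the hypothesis $\Delta(X,Y)\geq \eps$ into a lower bound on $d(X,Y)^2$, (ii) use multiplicativity of the Bhattacharyya affinity $1-d(\cdot,\cdot)^2$ to lift this to $d(\vecX,\vecY)^2$, and (iii) translate back to statistical distance. All three conversions are given by Lemma~\ref{lem:Hellinger_to_stat} and by the definition of Hellinger distance, so no new machinery is required.

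In detail: step (i) uses the right-hand inequality $\Delta \leq \sqrt{2}\, d$ of Lemma~\ref{lem:Hellinger_to_stat}, which gives $d(X,Y)^2 \geq \Delta(X,Y)^2/2 \geq \eps^2/2$, equivalently $\sum_i \sqrt{\Pr[X{=}i]\Pr[Y{=}i]} \leq 1-\eps^2/2$. Step (ii) uses independence to write
\[
1 - d(\vecX,\vecY)^2 \;=\; \sum_{\vec i}\sqrt{\prod_{j=1}^T \Pr[X_j{=}i_j]\Pr[Y_j{=}i_j]} \;=\; \prod_{j=1}^T\Bigl(\sum_{i_j}\sqrt{\Pr[X{=}i_j]\Pr[Y{=}i_j]}\Bigr) \;\leq\; (1-\eps^2/2)^T \;\leq\; e^{-T\eps^2/2}.
\]
Step (iii) invokes the left-hand inequality $d^2 \leq \Delta$ of Lemma~\ref{lem:Hellinger_to_stat}, yielding $\Delta(\vecX,\vecY) \geq d(\vecX,\vecY)^2 \geq 1 - e^{-T\eps^2/2}$, as required.

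The main subtlety worth flagging is the constant in the exponent. A seemingly natural alternative proof picks a set $S$ with $\Pr[X\in S]-\Pr[Y\in S]\geq \eps$, counts $|\{j: X_j\in S\}|$, and thresholds at the midpoint $T(p+q)/2$; Hoeffding then gives both-sided tails of $e^{-T\eps^2/2}$ and only $\Delta(\vecX,\vecY)\geq 1 - 2e^{-T\eps^2/2}$. The Hellinger route avoids this two-sided loss and matches the exponent $T\eps^2/2$ exactly, which is what the statement in the paper records. Beyond this, no step is difficult; the Hellinger affinity's exact multiplicativity over product distributions is what carries the whole argument.
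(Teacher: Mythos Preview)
Your proof is correct. The paper itself does not prove this lemma; it simply cites it from Sahai and Vadhan~\cite{SahaiVa99} and uses it as a black box in the proof of Theorem~\ref{thm:lb_small_cp}. Your argument via Hellinger distance is clean and, pleasantly, uses only machinery already set up in the paper (Definition~\ref{def:Hellinger} and Lemma~\ref{lem:Hellinger_to_stat}), so it integrates seamlessly. Your observation that the Hellinger route recovers the exact constant in the exponent, whereas a naive Hoeffding-on-an-indicator argument loses a factor of $2$, is also on point.
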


We remark that Lemma \ref{lem:stat_inc_small_T} and \ref{lem:stat_inc_large_T} are incomparable. In the parameter range
of Lemma \ref{lem:stat_inc_small_T}, Lemma \ref{lem:stat_inc_large_T} only gives $\Delta( \vecX, \vecY) \geq \Omega(T\eps^2)$ instead of $\Omega(\sqrt{T}\eps)$. To argue that the overall distribution is far from having small collision probability, we introduce the following notion of nonuniformity. 

\begin{definition}
Let $X$ be a random variable over $[M]$ with probability mass function $p$. $X$ is \emph{$(\delta,\beta)$-nonuniform} if for every function $q: [M] \rightarrow \R$ such that $0 \leq q(x) \leq p(x)$ for all $x\in[M]$, and $\sum_x q(x) \geq \delta$, the function satisfies
  $$\sum_{x\in[M]} q(x)^2 > \beta/M.$$
\end{definition}

Intuitively, a distribution $X$ over $[M]$ is $(\delta,\beta)$-nonuniform means that even if we remove $(1-\delta)$-fraction of probability mass from $X$, the ``collision probability'' remains greater than $\beta/M$. In particular, $X$ is $(1-\delta)$-far from any random variable $Y$ with $\cp(Y) \leq \beta/M$.  

\begin{lemma} \label{lem:hard_to_reduce_cp}
Let $X$ be a random variable over $[M]$. If $X$ is $(1-\eta)$-far from uniform, then $X$ is $(2\sqrt{\beta\cdot\eta}, \beta)$-nonuniform for every $\beta \geq 1$. 
\end{lemma}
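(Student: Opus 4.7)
The plan is to exploit the fact that the statistical distance bound $\Delta(X,U_{[M]}) \geq 1-\eta$ forces the probability mass of $X$ to concentrate on a small ``heavy'' set, so any sub-distribution $q$ with appreciable total mass must put a nontrivial fraction of that mass on this small set, and then Cauchy--Schwarz takes over.

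Let $p$ denote the pmf of $X$ and set $S = \{x \in [M] : p(x) > 1/M\}$. First I would rewrite the statistical distance condition as
\[
  \sum_{x \in S} \left(p(x) - \tfrac{1}{M}\right) = \Delta(X, U_{[M]}) \geq 1 - \eta,
\]
which yields two useful consequences simultaneously: since $\sum_{x \in S} p(x) \leq 1$, we get $|S|/M \leq \eta$; and since $\sum_{x \in S} p(x) \geq 1-\eta$, we get $\sum_{x \in S^c} p(x) \leq \eta$.

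Next, given any $q : [M] \to \R$ with $0 \leq q(x) \leq p(x)$ and $Q := \sum_x q(x) \geq 2\sqrt{\beta\eta}$, I would split $Q = Q_S + Q_{S^c}$ where $Q_S = \sum_{x \in S} q(x)$ and similarly for $Q_{S^c}$. Because $q \leq p$ pointwise, $Q_{S^c} \leq \sum_{x \in S^c} p(x) \leq \eta$. The key observation is that the hypothesis $\beta \geq 1$ (together with $\eta \leq 1$) gives $\eta \leq \sqrt{\beta\eta}$, so
\[
  Q_S \;\geq\; 2\sqrt{\beta\eta} - Q_{S^c} \;\geq\; 2\sqrt{\beta\eta} - \eta \;\geq\; \sqrt{\beta\eta},
\]
with strict inequality whenever $\eta < 1$ (the case $\eta = 1$ is vacuous since no $q \leq p$ can have $\sum_x q(x) \geq 2\sqrt{\beta} > 1$).

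Finally, I would conclude by Cauchy--Schwarz applied on the small set $S$:
\[
  \sum_x q(x)^2 \;\geq\; \sum_{x \in S} q(x)^2 \;\geq\; \frac{Q_S^2}{|S|} \;>\; \frac{\beta\eta}{M\eta} \;=\; \frac{\beta}{M}.
\]
The whole proof is essentially a one-line computation once the sets are set up correctly; the only subtlety --- which is where I expect to need care --- is verifying that the inequalities go through strictly (so that the required $> \beta/M$ rather than $\geq \beta/M$ is obtained) and handling the degenerate $\eta = 1$ boundary, both of which are handled by the $\beta \geq 1$ assumption that ensures $\sqrt{\beta\eta}$ strictly dominates $\eta$.
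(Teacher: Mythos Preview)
Your proof is correct and essentially identical to the paper's: both define the heavy set $\{x : p(x) \geq 1/M\}$ (you use strict inequality, which makes no difference), extract from $\Delta(X,U)\geq 1-\eta$ that this set has uniform measure at most $\eta$ and carries all but $\eta$ of the $p$-mass, bound $Q_S \geq 2\sqrt{\beta\eta}-\eta > \sqrt{\beta\eta}$, and finish with Cauchy--Schwarz on $S$. Your treatment of the strictness and the $\eta=1$ boundary is slightly more explicit than the paper's, but the argument is the same.
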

\begin{proof}
Let $p$ be the probability mass function of $X$, and $q: [M] \rightarrow \R$ be a function such that $0 \leq q(x) \leq p(x)$ for every $x \in [M]$, and $\sum_x q(x) \geq 2\sqrt{\beta\cdot \eta}$. Our goal is to show that $\sum_x q(x)^2 > \beta/M$. Let $T = \{ x\in[M]: p(x) \geq 1/M\}$. Note that $$ \Delta(X,U_{[M]}) = \Pr[X\in T] - \Pr[U_{[M]} \in T] \geq 1-\eta.$$ This implies $\Pr[X\in T] \geq 1-\eta$, and $\mu(T) = \Pr[U_{[M]} \in T] \leq \eta$. Now,
  $$ \sum_{x\in T} q(x) \geq 2\sqrt{\beta \cdot \eta} - \Pr[X \notin T] \geq  2\sqrt{\beta \cdot \eta} - \eta >    \sqrt{\beta \cdot \eta},$$ 
and $\mu(T) \leq \eta$ implies
  $$\sum_{x\in[M]} q(x)^2 \geq \sum_{x\in T} q(x)^2 \geq \frac{ \left(\sum_{x\in T} q(x)\right)^{2}}{|T|}> \frac{\beta}{M}.$$
\end{proof}

We are ready to prove Theorem \ref{thm:lb_small_cp}.

\begin{proofof}{of Theorem \ref{thm:lb_small_cp}}
By Lemma \ref{lem:lb_single_hash} with $\eps = \sqrt{2\ln(128 \alpha/\delta^{3})/T} < 1/4$, there is a flat $K$-source $X$ such that for $(1-\delta/4)$-fraction of hash function $h \in \Hfam$, $h(X)$ is $\eps$-far from uniform, for $K = \Omega( (\delta/4)^2 M /\eps^2) = \Omega( \delta^2 MT/\log(\alpha/\delta))$. We set $\vecX = (X_1,\dots,X_T)$ to be $T$ independent copies of $X$. Consider a hash function $h$ such that $h(X)$ is $\eps$-far from uniform. By Lemma \ref{lem:stat_inc_large_T}, $(h(X_1),\dots,h(X_T))$ is $(1-\eta)$-far from uniform, for $\eta = e^{-\eps^2 T/2} = \delta^3 / 128\alpha$. By Lemma \ref{lem:hard_to_reduce_cp}, $(h(X_1),\dots,h(X_T))$ is $(\delta/4, 2\alpha/\delta)$-nonuniform for $(1-\delta/4)$-fraction of hash functions $h$. By the first statement of Lemma \ref{lem:nonuniform_far} below, this implies  that $(H,\vecY)$ is $(1-\delta)$-far from any distribution $(H',\vecZ)$ with collision probability $\alpha/(|\Hfam|\cdot M^T)$.
\end{proofof}

\begin{lemma} \label{lem:nonuniform_far}
Let $(H,Y)$ be a joint distribution over $\Hfam \times [M]$ such that the marginal distribution $H$ is uniform over $\Hfam$. Let $\eps, \delta, \alpha$ be positive real numbers.
\begin{enumerate}
\item If $Y|_{H=h}$ is $(\delta/4,2\alpha/\delta)$-nonuniform for at least $(1-\delta/4)$-fraction of $h \in \Hfam$, then $(H,Y)$ is $(1-\delta)$-far from any distribution $(H',Z)$ with $\cp(H',Z) \leq \alpha/(|\Hfam|\cdot M)$.
\item If $Y|_{H=h}$ is $(0.1,2\alpha/\eps)$-nonuniform for at least $2\eps$-fraction-frction of $h \in \Hfam$, then $(H,Y)$ is $\eps$-far from any distribution $(H',Z)$ with $\cp(H',Z) \leq \alpha/(|\Hfam|\cdot M)$. 
\end{enumerate}
\end{lemma}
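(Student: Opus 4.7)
\begin{proofof}{of Lemma \ref{lem:nonuniform_far}}
The plan for both parts is to argue by contradiction using the ``minimum measure'' dual characterization of statistical distance. Assume there exists $(H',Z)$ with $\cp(H',Z)\leq \alpha/(|\Hfam|M)$ that is close to $(H,Y)$ in the forbidden sense. Define $\mu(h,y)=\min(p_{H,Y}(h,y),p_{H',Z}(h,y))$ and its normalization $\tilde{\mu}(h,y)=|\Hfam|\cdot \mu(h,y)$; since $H$ is uniform, $\tilde{\mu}(h,y)\leq p_{Y|H=h}(y)$. For each $h$ let $\tilde{s}(h)=\sum_y \tilde{\mu}(h,y)$, so $0\leq \tilde{s}(h)\leq 1$. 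The standard identity $\Delta = 1-\sum \min$ gives a lower bound on $\sum_h \tilde{s}(h)$ from the closeness assumption, while the trivial bound $\sum_{h,y}\tilde{\mu}(h,y)^2 \leq |\Hfam|^2\cp(H',Z) \leq \alpha|\Hfam|/M$ constrains the second moment. Call $h$ \emph{good} if $Y|_{H=h}$ is nonuniform in the sense of the corresponding hypothesis; for a good $h$ the hypothesis precisely converts a lower bound on $\tilde{s}(h)$ into a lower bound on $\sum_y \tilde{\mu}(h,y)^2$.

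For part 1, the assumption yields $\sum_h \tilde{s}(h)\geq \delta |\Hfam|$. Call $h$ \emph{heavy} if $\tilde{s}(h)\geq \delta/4$. A simple mass accounting (non-heavy $h$'s contribute at most $(\delta/4)|\Hfam|$ to the sum) forces $|\{\text{heavy}\}| \geq (3\delta/4)|\Hfam|$. Combined with $|\{\text{good}\}|\geq (1-\delta/4)|\Hfam|$, inclusion-exclusion gives at least $(\delta/2)|\Hfam|$ hash functions that are both good and heavy. For each such $h$, the $(\delta/4, 2\alpha/\delta)$-nonuniformity applied to the function $y\mapsto \tilde{\mu}(h,y)$ (which has total mass $\tilde{s}(h)\geq \delta/4$ and pointwise lies below $p_{Y|h}$) gives $\sum_y \tilde{\mu}(h,y)^2 > 2\alpha/(\delta M)$. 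Summing over the good-heavy $h$'s produces $\sum_{h,y}\tilde{\mu}(h,y)^2 > (\delta/2)|\Hfam|\cdot 2\alpha/(\delta M) = \alpha|\Hfam|/M$, contradicting the upper bound from the collision probability of $(H',Z)$.

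For part 2, the closeness assumption gives $\sum_h \tilde{s}(h)\geq (1-\eps)|\Hfam|$. Now reverse the roles: the collision probability bound $\sum_{h,y}\tilde{\mu}(h,y)^2 \leq \alpha|\Hfam|/M$ together with the nonuniformity hypothesis limits the number of hash functions that are simultaneously good and heavy (heavy meaning $\tilde{s}(h)\geq 0.1$): any such $h$ contributes more than $2\alpha/(\eps M)$ to the second moment, so at most $\eps|\Hfam|/2$ of them exist. Since at least $2\eps|\Hfam|$ are good, at least $(3\eps/2)|\Hfam|$ of them satisfy $\tilde{s}(h)<0.1$. These good, non-heavy $h$'s contribute at most $(3\eps/2)\cdot 0.1 \cdot |\Hfam| = 0.15\eps|\Hfam|$ to $\sum_h \tilde{s}(h)$, while the rest contribute at most $(1-3\eps/2)|\Hfam|$. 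The total is at most $(1-1.35\eps)|\Hfam|$, contradicting the lower bound $(1-\eps)|\Hfam|$.

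The main delicacy is keeping the parameters consistent: in particular, part 2 is a more refined counting than part 1, and one must check that the ``missing'' mass from good-non-heavy $h$'s strictly exceeds $\eps|\Hfam|$. The calculation above leaves a comfortable slack ($0.35\eps|\Hfam|$), which makes the specific constants $0.1$ and $2\eps$ in the statement of the lemma essentially forced. The rest is routine: verify the first-moment and second-moment identities for $\tilde{\mu}$, check that $\tilde{\mu}\leq p_{Y|h}$ really does permit applying the nonuniformity definition, and make sure the strict inequality from nonuniformity is preserved through the summations.
\end{proofof}
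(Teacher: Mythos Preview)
Your argument is correct and, for part~1, essentially identical to the paper's: you both use the minimum measure $q_h(y)=\min\{p_{H,Y}(h,y),p_{H',Z}(h,y)\}$, count heavy $h$'s via a first-moment argument, intersect with the good set, and then invoke nonuniformity to contradict the collision-probability bound.

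For part~2 you take a genuinely different (though equivalent) route. The paper argues \emph{forward}: from $\sum_h \tilde s(h)\geq (1-\eps)|\Hfam|$ it deduces that at least a $(1-\eps/0.9)$-fraction of $h$ are heavy, intersects with the $2\eps$-fraction of good $h$'s to get at least $2\eps-\eps/0.9>\eps/2$ that are both, and then sums their second-moment contributions to exceed $\alpha|\Hfam|/M$. You argue the \emph{contrapositive}: use the second-moment bound to cap the number of good-heavy $h$'s below $\eps|\Hfam|/2$, deduce that at least $(3\eps/2)|\Hfam|$ good $h$'s are non-heavy, and then show this forces $\sum_h\tilde s(h)<(1-1.35\eps)|\Hfam|$, contradicting closeness. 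Both work; the paper's direction is slightly shorter, while yours makes more explicit why the constants $0.1$ and $2\eps$ are calibrated. One small point: your sentence ``these good, non-heavy $h$'s contribute at most $(3\eps/2)\cdot 0.1\cdot|\Hfam|$'' is cleanest if read as fixing any subset of exactly $(3\eps/2)|\Hfam|$ good-non-heavy $h$'s (which exist) and bounding the remaining $(1-3\eps/2)|\Hfam|$ hash functions by~$1$ each; you may want to phrase it that way to avoid ambiguity.
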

\begin{proof}
We introduce the following notations first.  For every $h \in \Hfam$, we define $q_h: [M] \rightarrow \R$ by
$$q_h(y) = \min \{ \Pr[ (H,Y) = (h,y)], \Pr[ (H',Z) = (h,y)]\}$$ 
for every $y \in [M]$. We also define $f:\Hfam \rightarrow \R$ by
$$f(h) = \sum_{y \in [M]} q_h(y) \leq \Pr[H=h] = \frac{1}{|\Hfam|}.$$

For the first statement, let $(H',Z)$ be a random variable over $\Hfam \times [M]$ that is $(1-\delta)$-close to $(H,Y)$. We need to show  that $\cp(H',Z) > \alpha/(|\Hfam|\cdot M)$. Note that $\sum_h f(h) = 1-\Delta( (H,Y), (H',Z)) \geq \delta$. So there are at least a $(3\delta/4)$-fraction of hash functions $h$ with $f(h) \geq (\delta/4)/|\Hfam|$. At least a $(3\delta/4)-(\delta/4) = \delta/2$-fraction of $h$ satisfy both $f(h) \geq (\delta/4)/|\Hfam|$ and $Y|_{H=h}$ is $(\delta/4,2\alpha/\delta)$-nonuniform. By the definition of nonuniformity, for each such $h$, we have
$$\sum_{y \in [M]^T} (|\Hfam| \cdot q_h(y))^2 > \frac{2\alpha}{\delta\cdot M}.$$ 
Therefore,
$$\cp(H',Z) \geq \sum_{h,y} q_h(y)^2 > \left(\frac{\delta}{2} \cdot |\Hfam|\right) \cdot \frac{2\alpha}{\delta\cdot
|\Hfam|^2M} = \frac{\alpha}{|\Hfam|\cdot M}.$$ 

Similarly, for the second statement, let $(H',Z)$ be a random variable over $\Hfam \times [M]$ that is $\eps$-close to $(H,Y)$. We need to show  that $\cp(H',Z) > \alpha/(|\Hfam|\cdot M)$. Note that $\sum_h f(h) = 1-\Delta( (H,Y), (H',Z)) \geq 1-\eps$. So there are at least a $1-\eps/0.9$-fraction of $h$ with $f(h) \geq 0.1/|\Hfam|$. At least a $2\eps-\eps/0.9 > \eps/2$-fraction of hash functions satisfy both $f(h) \geq 0.1/|\Hfam|$ and $Y|_{H=h}$ is $(0.1,2\alpha/\eps)$-nonuniform. By Lemma \ref{lem:hard_to_reduce_cp}, for each such $h$, we have
  $$\sum_{y \in [M]} (|\Hfam|\cdot q_h(y))^2 > \frac{2\alpha}{\eps\cdot M}.$$ 
Therefore,
  $$\cp(H',Z) \geq \sum_{h,y} q_h(y)^2 > \left(\frac{\eps}{2} \cdot |\Hfam|\right) \cdot \frac{2\alpha}{\eps\cdot |\Hfam|^2M} = \frac{\alpha}{|\Hfam|\cdot M}.$$ 
\end{proof}
\fi

\subsection{Lower Bounds for the Distribution of Hashed Values Only}

We can extend our lower bounds to the distribution of hashed sequence $\vecY = (H(X_1),\dots,H(X_T))$ along (without $H$) for both closeness requirements, at the price of losing the dependency on $\eps$ and incurring some dependency on the size of the hash family. Let $2^d = |\Hfam|$ be the size of the hash family. The dependency on $d$ is necessary. Intuitively, the hashed sequence $\vecY$ contains at most $T \cdot m$ bits of entropy, and the input $(H,X_1,\dots,X_T)$ contains at least $d + T \cdot k$ bits of entropy. When $d$ is large enough, it is possible that all the randomness of hashed sequence comes from the randomness of the hash family. Indeed, if $H$ is $T$-wise independent (which is possible with $d \simeq T\cdot m$), then $(H(X_1),\dots,H(X_T))$ is uniform when $X_1,\dots,X_T$ are all distinct. Therefore, 
  $$\Delta( (H(X_1),\dots,H(X_T)), U_{[M]^T}) \leq \Pr[ \mbox{ not all $X_1,\dots,X_T$ are distinct }]$$
Thus, $K = \Omega(T^2)$ (independent of $M$) suffices to make the hashed value close to uniform. 

\begin{theorem} \label{thm:lb_no_H}
Let $N,M, T$ be positive integers, and $d$ a positive real number such that $N \geq MT/d$. Let $\delta \in (0,1)$, $\alpha > 1$ be real numbers such that $\alpha \cdot 2^d < \delta^3 \cdot e^{T/32}/128$. Let $H:[N] \rightarrow [M]$ be a random hash function from an hash family $\Hfam$ of size at most $2^d$. There exists an integer $K = \Omega(\delta^2 MT/d\cdot \log(\alpha/\delta))$, and a block $K$-source $\vecX =(X_1,\dots,X_T)$ such that $\vecY = (H(X_1),\dots,H(X_T))$ is $(1-\delta)$-far from any distribution $\vecZ = (Z_1,\dots,Z_T)$ with $\cp(\vecZ) \leq \alpha/M^T$. In particular, $\vecY$ is $(1-\delta)$-far from uniform.
\end{theorem}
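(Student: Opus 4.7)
The plan is to adapt the proof of Theorem~\ref{thm:lb_small_cp} to the setting where we consider only the marginal $\vecY=(H(X_1),\dots,H(X_T))$ rather than the joint distribution $(H,\vecY)$. Crucially, one cannot simply project the Theorem~\ref{thm:lb_small_cp} bound downward: statistical distance is non-increasing under projection, so a lower bound on $\Delta((H,\vecY),(H',\vecZ'))$ does not automatically yield a lower bound on $\Delta(\vecY,\vecZ)$. Instead, we must argue directly about $\vecY$ and absorb the mixing over the $|\Hfam|=2^d$ hash functions as a factor-$2^d$ loss in the hypothesis---exactly what the condition $\alpha\cdot 2^d<\delta^3 e^{T/32}/128$ captures.

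First I would apply Lemma~\ref{lem:lb_single_hash} with error parameter $\delta/4$ and with $\eps=\sqrt{2\ln(128\alpha\cdot 2^d/\delta^3)/T}$. A quick check shows that the hypotheses $N\geq MT/d$ and $\alpha\cdot 2^d<\delta^3 e^{T/32}/128$ translate into the lemma's requirements $N\geq M/\eps^2$ and $\eps<1/4$, so the lemma produces a flat $K$-source $X$ with $K=\Omega((\delta/4)^2 M/\eps^2)=\Omega(\delta^2 MT/(d\cdot\log(\alpha/\delta)))$ such that a $(1-\delta/4)$-fraction of $h\in\Hfam$ (call them ``good'') satisfy $\Delta(h(X),U_{[M]})\geq\eps$. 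I would then set $\vecX=(X_1,\dots,X_T)$ to be $T$ i.i.d.\ copies of $X$; Lemma~\ref{lem:stat_inc_large_T} then yields $\Delta(h(\vecX),U_{[M]^T})\geq 1-\eta$ for every good $h$, with $\eta:=e^{-\eps^2 T/2}=\delta^3/(128\alpha\cdot 2^d)$.

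The key step is converting these per-$h$ bounds into a nonuniformity statement for $\vecY$. For each good $h$, let $T_h:=\{\vec y:\Pr[h(\vecX)=\vec y]>1/M^T\}$; the distance bound forces $|T_h|/M^T\leq\eta$ while $\Pr[h(\vecX)\in T_h]\geq 1-\eta$. Let $T:=\bigcup_{h\text{ good}}T_h$, so a union bound gives $|T|/M^T\leq 2^d\eta=\delta^3/(128\alpha)$ while $\Pr[\vecY\in T]\geq(1-\delta/4)(1-\eta)\geq 1-\delta/4-\eta$. For any function $q\leq p_{\vecY}$ with $\sum_{\vec y}q(\vec y)\geq\delta$, only $\delta/4+\eta$ of its mass can lie outside $T$, leaving $\sum_{\vec y\in T}q(\vec y)\geq 3\delta/4-\eta\geq\delta/2$; Cauchy--Schwarz then gives $\sum_{\vec y}q(\vec y)^2\geq(\delta/2)^2/|T|>\alpha/M^T$. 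Hence $\vecY$ is $(\delta,\alpha)$-nonuniform, and the standard argument (essentially the proof of Lemma~\ref{lem:nonuniform_far} specialized to trivial $\Hfam$) implies $\vecY$ is $(1-\delta)$-far from every $\vecZ$ with $\cp(\vecZ)\leq\alpha/M^T$.

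The main obstacle is the third step. In Theorem~\ref{thm:lb_small_cp}, mixing across $h$ was essentially free because $(H,\vecY)$ keeps $H$ in the clear, so the different $h(\vecX)$'s live in disjoint ``slices'' $\{h\}\times[M]^T$ of the sample space. Here those slices collapse onto $[M]^T$, and a priori averaging $2^d$ very nonuniform distributions could produce something close to uniform. The trick is to observe that the heavy sets $T_h$ are collectively small---their union has measure only $2^d\eta$ under $U_{[M]^T}$---so $\vecY$ remains concentrated on a vanishing fraction of $[M]^T$, and Cauchy--Schwarz on this small support then converts concentration into large collision probability for every subdistribution. The $d$ (as opposed to $2^d$) that appears in the bound on $K$ arises because, thanks to the exponential amplification of Lemma~\ref{lem:stat_inc_large_T}, only $\log(2^d)=d$ extra bits of entropy per block are needed to compensate for the $2^d$-factor loss.
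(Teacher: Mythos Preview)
Your argument is correct, but it takes a genuinely different route from the paper's proof. You explicitly rule out reducing to Theorem~\ref{thm:lb_small_cp} on the grounds that statistical distance is non-increasing under projection; that observation is true, but the paper sidesteps it by going in the opposite direction: it \emph{lifts} $\vecZ$ up rather than projecting $(H,\vecY)$ down. Concretely, the paper applies Theorem~\ref{thm:lb_small_cp} with $\alpha$ replaced by $\alpha\cdot 2^d$ (this is exactly where the hypothesis $\alpha\cdot 2^d<\delta^3 e^{T/32}/128$ enters) to obtain a source $\vecX$ such that $(H,\vecY)$ is $(1-\delta)$-far from every $(H',\vecZ')$ with $\cp(H',\vecZ')\leq \alpha\cdot 2^d/(|\Hfam|\cdot M^T)=\alpha/M^T$. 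Then, given any $\vecZ$ that is $(1-\delta)$-close to $\vecY$, it defines $H'$ by the coupling $H'|_{\vecZ=\vecz}\eqdef H|_{\vecY=\vecz}$; a one-line computation shows $\Delta((H,\vecY),(H',\vecZ))=\Delta(\vecY,\vecZ)\leq 1-\delta$, and since $\cp(\vecZ)\geq\cp(H',\vecZ)$ always, the conclusion $\cp(\vecZ)>\alpha/M^T$ follows immediately.

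Your approach instead reopens the proof of Theorem~\ref{thm:lb_small_cp}, pushes the factor $2^d$ into $\eta$ at the amplification step, and then handles the mixing over $h$ by the union-of-heavy-sets plus Cauchy--Schwarz argument, showing directly that the \emph{marginal} $\vecY$ is $(\delta,\alpha)$-nonuniform. This is more hands-on but entirely valid; in fact the $(\delta,\alpha)$-nonuniformity of $\vecY$ that you establish is a slightly more explicit statement than what the paper's black-box reduction extracts. The paper's proof is shorter and more modular; yours makes transparent exactly where the $2^d$ loss is absorbed (namely in bounding $|T|/M^T\leq 2^d\eta$), which is the intuition you articulate in your last paragraph.
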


Think of $\alpha$ and $\delta$ as constants. Then the theorem says that when the hash function contains $d \leq T/(32\ln 2) - O(1)$ bits of randomness, $K = \Omega(MT/d)$ is necessary for the hashed sequence to be close to uniform. For example, in some typical hash applications, $N = \poly(M)$ and the hash function is $2$-universal or $O(1)$-wise independent. In this case, $d = O(\log M)$ and we need $K = \Omega(MT/\log M)$. (Recall that our upper bound in Theorem \ref{thm:2-univ_cp} says that $K = O(MT)$ suffices.)
\ifnum\short=1
Details are in the full version of this paper\cite{ChungVa08TR}.
\fi

\ifnum\short=0
\begin{proof}
We will deduce the theorem from Theorem \ref{thm:lb_small_cp}. Replacing the parameter $\alpha$ by $\alpha \cdot 2^d$ in Theorem \ref{thm:lb_small_cp}, we know that there exists an integer $K = \Omega(\delta^2 MT/d\cdot \log(\alpha/\delta))$ and  a block $K$-source $\vecX = (X_1,\dots,X_T)$ such that $(H,\vecY) = (H,H(X_1),\dots,H(X_T))$ is $(1-\delta)$-far from any distribution $(H',\vecZ)$ with $\cp(H',\vecZ) \leq \alpha\cdot 2^d/ (2^d\cdot M^T) = \alpha/M^T$. Now, suppose we are given a random variable $\vecZ$ on $[M]^T$ with $\Delta(\vecY,\vecZ) \leq 1-\delta$. Then we can define an $H'$ such that $\Delta((H,\vecY),(H',\vecZ)) = \Delta(\vecY,\vecZ)$ (Indeed, define the conditional distribution $H'|_{\vecZ = \vecz}$ to equal $H|_{\vecY = \vecz}$ for every $\vecz \in [M]^T$.) Then we have 
  $$\cp(\vecZ) \geq \cp(H',\vecZ) > \frac{\alpha}{M^T}.$$
\end{proof}
\fi

\ifnum\short=0
One limitation of the above lower bound is that it only works when $d \leq T/(32\ln 2)- O(1)$. For example, the lower bound cannot be applied when the hash function is $T$-wise independent. Although $d = \Omega(T)$ may not be interesting in practice, for the sake of completeness, we provide another simple lower bound to cover this parameter region. 

\begin{theorem} \label{thm:lb_no_H_large_d}
Let $N,M, T$ be positive integers, and $\delta \in (0,1)$, $\alpha > 1$, $d > 0$ real numbers. Let $H:[N] \rightarrow [M]$ be a random hash function from an hash family $\Hfam$ of size at most $2^d$. Suppose $K \leq N$ be an integer such that $K \leq (\delta^2/4\alpha\cdot 2^d)^{1/T}\cdot M$. Then there exists a block $K$-source $\vecX =(X_1,\dots,X_T)$ such that $\vecY = (H(X_1),\dots,H(X_T))$ is $(1-\delta)$-far from any distribution $\vecZ = (Z_1,\dots,Z_T)$ with $\cp(\vecZ) \leq \alpha/M^T$. In particular, $\vecY$ is $(1-\delta)$-far from uniform.
\end{theorem}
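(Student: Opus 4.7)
The plan is to exploit the fact that when $K$ is very small, the support of $\vecY$ is tiny compared to $M^T$, so any distribution with small collision probability must put almost all of its mass outside this support. The source construction will simply be the flat one: pick an arbitrary subset $S \subset [N]$ of size $K$ (possible since $K \leq N$), and let $\vecX = (X_1,\dots,X_T)$ be $T$ i.i.d.\ copies of $U_S$. Then $\vecX$ is a block $K$-source, since for every $x_{<i}$ one has $\cp(X_i \mid X_{<i} = x_{<i}) = \cp(U_S) = 1/K$. No ``probabilistic method'' over $S$ is needed.

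Next I would estimate the support. For any fixed $h \in \Hfam$, the distribution $(h(X_1),\dots,h(X_T))$ is supported on $h(S)^T$, a set of size at most $K^T$. Therefore the support $V$ of $\vecY$ satisfies $V \subseteq \bigcup_{h \in \Hfam} h(S)^T$, which gives $|V| \leq |\Hfam| \cdot K^T \leq 2^d K^T$. In particular $\Pr[\vecY \in V] = 1$.

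Now let $\vecZ$ be any random variable over $[M]^T$ with $\cp(\vecZ) \leq \alpha/M^T$. By Cauchy--Schwarz,
$$\Pr[\vecZ \in V]^{2} = \Bigl(\sum_{z \in V} \Pr[\vecZ = z]\Bigr)^2 \leq |V| \cdot \sum_{z \in V} \Pr[\vecZ = z]^2 \leq |V| \cdot \cp(\vecZ) \leq 2^d K^T \cdot \frac{\alpha}{M^T}.$$
The hypothesis $K \leq (\delta^2/(4\alpha \cdot 2^d))^{1/T} \cdot M$ rearranges exactly to $2^d K^T \alpha/M^T \leq \delta^2/4$, so $\Pr[\vecZ \in V] \leq \delta/2$. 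Hence $\Delta(\vecY, \vecZ) \geq \Pr[\vecY \in V] - \Pr[\vecZ \in V] \geq 1 - \delta/2 \geq 1 - \delta$, which is the desired bound. The ``in particular'' statement about uniform follows because $U_{[M]^T}$ has collision probability $1/M^T \leq \alpha/M^T$ since $\alpha > 1$.

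There is no genuine obstacle here beyond verifying the two-line counting argument: a flat source forces $\vecY$ to live in a set of size at most $2^d K^T$, and the parameter regime is precisely chosen so that a Cauchy--Schwarz bound forces any low-collision-probability distribution to assign mass $\leq \delta/2$ to that set. The only mild subtlety is that the argument is genuinely complementary to Theorem~\ref{thm:lb_no_H}: it becomes nontrivial exactly when $d$ is large (so that $d$ starts to dominate $MT$ in setting $K$), precisely the region where the previous argument (which needed $d \leq T/(32 \ln 2) - O(1)$) breaks down.
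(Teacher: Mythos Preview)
Your proof is correct and follows the same construction and support-counting idea as the paper: take $T$ i.i.d.\ copies of a flat $K$-source, bound $|\supp(\vecY)| \leq 2^d K^T$, and conclude that any low-collision-probability distribution must be far from $\vecY$. The only difference is that the paper routes the last step through Lemma~\ref{lem:hard_to_reduce_cp} (first deducing that $\vecY$ is $(1-\delta^2/4\alpha)$-far from uniform, then invoking the nonuniformity notion), whereas you apply Cauchy--Schwarz directly to $\Pr[\vecZ \in V]$; your version is a slight streamlining of the same argument.
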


Again, think of $\alpha$ and $\delta$ as constants. The theorem says that $K = \Omega(M/2^{d/T})$ is necessary for the hashed sequence to be close to uniform. In particular, when $d = \Theta(T)$, $K = \Omega(M)$ is necessary. Theorem \ref{thm:lb_no_H} gives the same conclusion, but only works for $d \leq T/(32\ln 2) - O(1)$. On the other hand, when $d = o(T)$, Theorem \ref{thm:lb_no_H} gives better lower bound $K = \Omega(MT/d)$.

\begin{proof}
Let $X$ be any flat $K$-source, i.e., a uniform distribution over a set of size $K$. We simply take $\vecX = (X_1,\dots,X_T)$ to be $T$ independent copies of $X$. Note that $\vecY$ has support at most as large as $(H,\vecX)$. Thus,
  $$ |\supp(\vecY)| \leq |\supp(H,\vecX)| = 2^d \cdot K^T \leq \frac{\delta^2}{4\alpha} \cdot M^T.$$
Therefore, $\vecY$ is $(1-\delta^2/4\alpha)$-far from uniform. By Lemma \ref{lem:hard_to_reduce_cp}, $\vecY$ is $(1-\delta)$-far from any distribution $\vecZ = (Z_1,\dots,Z_T)$ with $\cp(\vecZ) \leq \alpha/M^T$.
\end{proof}
\fi

\subsection{Lower Bound for $2$-universal Hash Functions}

In this subsection, we show Theorem \ref{thm:2-univ_cp} is almost tight in the following sense.  We show that there exists $K = \Omega(MT/\eps \cdot \log(1/\eps) )$, a $2$-universal hash family $\Hfam$, and a block $K$-source $\vecX$ such that $(H,\vecY)$ is $\eps$-far from having collision probability $100/(|\Hfam|\cdot M^T)$. The improvement over Theorem  \ref{thm:lb_small_cp} is the almost tight dependency on $\eps$. Recall that Theorem \ref{thm:2-univ_cp} says that for $2$-universal hash family, $K = O(MT/\eps)$ suffices. The upper and lower bound differs by a factor of $\log (1/\eps)$. In particular, our result for $4$-wise independent hash functions (Theorem \ref{thm:4-wise_cp}) cannot be achieved with $2$-universal hash functions. The lower bound can further be extended to the distribution of hashed sequence $\vecY = (H(X_1),\dots,H(X_T))$ as in the previous subsection. Furthermore, since the $2$-universal hash family we use has small size, we only pay a factor of $O(\log M)$ in the lower bound on $K$. 
\ifnum\short=0
Formally we prove the following results.
\fi
\ifnum\short=1
Formally we have the following result.
\fi


\begin{theorem} \label{thm:lb_2-univ}
For every prime power $M$, real numbers $\eps \in (0,1/4)$ and $\alpha \geq 1$, the following holds. For all integers $t$ and $N$ such that $\eps \cdot M^{t-1} \geq 1$ and $N \geq 6\eps M^{2t}$, and for $T = \lceil \eps^2M^{2t-1} \log (\alpha/\eps) \rceil$,\footnote{For technical reason, our lower bound proof does not work for every sufficiently large $T$. However, note that the density of $T$ such that the lower bound holds is $1/M^2$.} there exists an integer $K = \Omega(MT/\eps\cdot\log(\alpha/\eps))$, and a $2$-universal hash family $\Hfam$ from $[N]$ to $[M]$, and a block $K$-source $\vecX =(X_1,\dots,X_T)$ such that $(H,\vecY) = (H,H(X_1),\dots,H(X_T))$ is $\eps$-far from any distribution $(H',\vecZ)$ with $\cp(H',\vecZ) \leq \alpha/(|\Hfam|\cdot M^T)$.
\end{theorem}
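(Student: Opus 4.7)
The plan is to exhibit a specific small 2-universal family $\Hfam$ from $[N]$ to $[M]$, a flat $K$-source $X$ on $[N]$, and then set $\vecX = (X_1, \ldots, X_T)$ to be $T$ i.i.d.\ copies of $X$, feeding the outcome into the machinery already developed in Section \ref{sec:negative}. A natural candidate for $\Hfam$ is the affine linear family $h_{a,b}(x) = \langle a, x\rangle + b$ acting on $[N] = \F_M^{2t}$, of size $|\Hfam| = M^{2t+1}$; this family is 2-universal but not 4-wise independent, and it exhibits exactly the kind of correlations between hash functions that Theorem \ref{thm:2-univ_cp} cannot rule out by a single Markov step.

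The heart of the argument is a strengthening of Lemma \ref{lem:lb_single_hash} tailored to this family: one must construct a flat $K$-source $X$ with $K = \Theta(\eps M^{2t}\log^2(\alpha/\eps))$ such that at least a $3\eps$-fraction of $h \in \Hfam$ satisfy $\Delta(h(X), U_{[M]}) \geq \gamma$ for $\gamma = \Theta(\sqrt{\log(\alpha/\eps)/T})$. Note that the generic probabilistic-method argument of Lemma \ref{lem:lb_single_hash} only produces such an $X$ at the smaller value $K = \Theta(\eps^2 M^{2t})$, so the desired statement is stronger than the generic one by a factor of roughly $\log^2(\alpha/\eps)/\eps$ in $K$. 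The extra factor must come from the linearity of $\Hfam$: the ``bad'' directions $a$ are precisely those for which the indicator $\mathbf{1}_{\supp(X)}$ has a large Fourier coefficient at the character indexed by $a$, and by choosing $\supp(X)$ to be a carefully designed spectrally-concentrated set (for instance, a random subset of a union of cosets of a low-dimensional subspace, padded to attain the required entropy), one obtains an $\Omega(\eps)$-dense, correlated family of bad hash functions rather than the incoherent $O(\sqrt{\eps})$-fraction delivered by the generic argument.

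Once such an $X$ is in hand, the remaining steps follow the template of Theorem \ref{thm:lb_small_cp} verbatim: for each bad $h$, Lemma \ref{lem:stat_inc_large_T} implies $(h(X_1),\ldots,h(X_T))$ is $(1-\eta)$-far from uniform for $\eta = e^{-T\gamma^2/2} = \poly(\eps/\alpha)$; Lemma \ref{lem:hard_to_reduce_cp} upgrades each such conditional distribution to $(0.1, 2\alpha/\eps)$-nonuniformity after choosing absolute constants appropriately; and part (2) of Lemma \ref{lem:nonuniform_far} concludes that $(H,\vecY)$ is $\eps$-far from every $(H', \vecZ)$ with $\cp(H',\vecZ) \leq \alpha/(|\Hfam|\cdot M^T)$. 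The main obstacle is the spectral construction described in the previous paragraph; the various parameter constraints on $t$, $N$, and the exact form of $T$ in the theorem statement trace back to what this construction requires, and the unavoidable loss in passing from a $\sqrt{\eps}$-fraction to an $\eps$-fraction of bad hashes is precisely the $\log(1/\eps)$ gap separating this lower bound from the matching upper bound of Theorem \ref{thm:2-univ_cp}.
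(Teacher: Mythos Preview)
Your high-level template (pick a bad single-block source, take $T$ i.i.d.\ copies, apply Lemma~\ref{lem:hard_to_reduce_cp} and part~(2) of Lemma~\ref{lem:nonuniform_far}) matches the paper's, but the heart of your argument --- the ``spectrally-concentrated'' source $X$ for the affine family --- is asserted, not constructed. You correctly observe that the generic probabilistic bound of Lemma~\ref{lem:lb_single_hash} falls short by a factor of $1/\eps$, but your remedy (``a random subset of a union of cosets of a low-dimensional subspace, padded\ldots'') is a hope rather than a construction, and you have not verified that any such set achieves an $\Omega(\eps)$-fraction of hashes with deviation $\Theta(\sqrt{\log(\alpha/\eps)/T})$ at entropy $K=\Theta(\eps M^{2t})$. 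That is the entire content of the theorem; everything downstream is routine.

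The paper circumvents this difficulty by shifting the work from the source to the hash family. Rather than the monolithic affine family on $\F_M^{2t}$, it partitions the domain into $s=\lceil 4\eps M^t\rceil$ blocks $D_1,\ldots,D_s$, each a copy of $\F_M^t$, and on block $D_i$ applies the linear map $h_{a+\alpha_i b}$ for random $a,b\in\F_{M^t}$; the $s$ maps are only pairwise independent, so $|\Hfam|=M^{2t}$ and the family is $2$-universal. The source $X$ is then simply \emph{uniform} on $D_1\cup\cdots\cup D_s$, with $K=sM^t\approx 4\eps M^{2t}$. A hash function is ``bad'' when exactly one of the $s$ linear pieces is the zero map --- an event of probability $\ge 2\eps$ --- and in that case one entire block collapses to $0\in\F_M$, yielding an explicit distribution with $\cp(h(X))=1/M+\Theta(1/(K\eps))$. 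This realises the tightness of the Markov step in Theorem~\ref{thm:2-univ_cp} by direct computation, with no Fourier analysis and no probabilistic existence argument.

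A second, smaller divergence: the paper does not use Lemma~\ref{lem:stat_inc_large_T} to propagate across blocks. Instead it computes the Hellinger closeness $C(h(X))\le 1-M/(64K\eps)$ for bad $h$ (Claim~\ref{clm:comp_closeness}), uses multiplicativity $C(h(X_1),\ldots,h(X_T))=C(h(X))^T$, and only then passes to statistical distance via Lemma~\ref{lem:Hellinger_to_stat}. Your route through Lemma~\ref{lem:stat_inc_large_T} would also close once the single-block construction is in hand, but the Hellinger route is cleaner and is what drives the precise form of $T$ in the statement.
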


\ifnum\short=0
\begin{theorem} \label{thm:lb_2-univ_no_H}
For every prime power $M$, real numbers $\eps \in (0,1/4)$ and $\alpha \geq 1$, the following holds. For all integers $t$ and $N$ such that $\eps \cdot M^{t-1} \geq 1$ and $N \geq 6\eps M^{2t}$, and for $T = \lceil \eps^2M^{2t-1} \log (\alpha M/\eps) \rceil$, there exists an integer $K = \Omega(MT/\eps\cdot\log(\alpha M/\eps))$, and a $2$-universal hash family $\Hfam$ from $[N]$ to $[M]$, and a block $K$-source $\vecX =(X_1,\dots,X_T)$ such that $\vecY = (H(X_1),\dots,H(X_T))$ is $\eps$-far from any distribution $\vecZ$ with $\cp(\vecZ) \leq \alpha/M^T$.
\end{theorem}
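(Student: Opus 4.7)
The plan is to mirror the reduction in the proof of Theorem~\ref{thm:lb_no_H}, using Theorem~\ref{thm:lb_2-univ} as the starting point instead of Theorem~\ref{thm:lb_small_cp}. The first step is to fix the specific $2$-universal hash family $\Hfam$ constructed in the proof of Theorem~\ref{thm:lb_2-univ}; for the standard families natural in that setting (e.g., affine/linear maps over an extension field of $\F_M$), one has $|\Hfam|=M^{O(1)}$, and in particular $\log|\Hfam|=O(\log M)$. Crucially, the size of $\Hfam$ depends only on $M$ (and possibly $t$), not on $\alpha$ or $\eps$.

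Next, I would invoke Theorem~\ref{thm:lb_2-univ} with the parameter $\alpha$ replaced by $\alpha\cdot|\Hfam|$. Since $\log(\alpha|\Hfam|/\eps)=O(\log(\alpha M/\eps))$, the prescribed value of $T$ is preserved up to constants absorbed into the $\Omega(\cdot)$, and we obtain an integer $K=\Omega(MT/\eps\cdot\log(\alpha M/\eps))$ together with a block $K$-source $\vecX=(X_1,\dots,X_T)$ such that $(H,\vecY)$ is $\eps$-far from every distribution $(H',\vecW)$ on $\Hfam\times[M]^T$ satisfying
\[
  \cp(H',\vecW)\leq \frac{\alpha\cdot|\Hfam|}{|\Hfam|\cdot M^T} \;=\; \frac{\alpha}{M^T}.
\]
The final step strips off $H$ via the same coupling trick used in the proof of Theorem~\ref{thm:lb_no_H}. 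Given any $\vecZ$ on $[M]^T$ with $\Delta(\vecY,\vecZ)\leq\eps$, I would define a joint random variable $(H',\vecZ)$ by setting the conditional distribution $H'|_{\vecZ=\vecz}$ equal to $H|_{\vecY=\vecz}$ for each $\vecz$. By construction $\Delta((H,\vecY),(H',\vecZ))=\Delta(\vecY,\vecZ)\leq\eps$, while projecting probability mass along $(h,\vecz)\mapsto\vecz$ can only increase collision probability, giving $\cp(H',\vecZ)\leq\cp(\vecZ)$. Combined with the conclusion of the previous step, this forces $\cp(\vecZ)>\alpha/M^T$, as desired.

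The only non-routine step is the bookkeeping for $|\Hfam|$: the reduction pays a cost of $\log|\Hfam|$ inside the logarithm when replacing $\log(\alpha/\eps)$ by $\log(\alpha\cdot|\Hfam|/\eps)$, so one must verify that the explicit $2$-universal family from Theorem~\ref{thm:lb_2-univ} has size $M^{O(1)}$ in order to match the $\log(\alpha M/\eps)$ appearing in the statement of Theorem~\ref{thm:lb_2-univ_no_H}. This is immediate from the standard construction, which is precisely why the only difference between the two theorems is the replacement of $\alpha$ by $\alpha M$ in the logarithmic term.
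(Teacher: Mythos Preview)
Your approach is exactly the paper's: it says in one line that $|\Hfam|=M^{2t}$ and that Theorem~\ref{thm:lb_2-univ_no_H} follows from Theorem~\ref{thm:lb_2-univ} by the same argument used to derive Theorem~\ref{thm:lb_no_H} from Theorem~\ref{thm:lb_small_cp}. Your reduction (replace $\alpha$ by $\alpha\cdot|\Hfam|$, then couple $H'$ to $\vecZ$ via $H'|_{\vecZ=\vecz}=H|_{\vecY=\vecz}$ and use $\cp(H',\vecZ)\le\cp(\vecZ)$) is precisely that argument.

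One correction to your bookkeeping: the family built in the proof of Theorem~\ref{thm:lb_2-univ} does \emph{not} have size $M^{O(1)}$. It consists of pairs $(a,b)\in\hat{\F}^2$ with $\hat{\F}$ the field of size $M^t$, so $|\Hfam|=M^{2t}$, and $\log|\Hfam|=2t\log M$ grows with $t$. This is exactly what the paper recalls before invoking the reduction. Your assertion that ``$\log|\Hfam|=O(\log M)$'' and that this is ``immediate from the standard construction'' is therefore wrong as stated; the dependence on $t$ is real. (Plugging $\alpha'=\alpha\cdot M^{2t}$ into Theorem~\ref{thm:lb_2-univ} literally gives $T=\lceil \eps^2 M^{2t-1}\log(\alpha M^{2t}/\eps)\rceil$ rather than the $\log(\alpha M/\eps)$ printed in Theorem~\ref{thm:lb_2-univ_no_H}; the paper is itself somewhat loose on this point, but you should at least quote the correct value of $|\Hfam|$.)
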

\fi

Basically, the idea is to show that the Markov Inequality applied in the proof of Theorem \ref{thm:2-univ_cp} \ifnum\short=0 (see Inequality (\ref{eq:Markov}))\fi is tight for a single block. More precisely, we show that there exists a $2$-universal hash family $\Hfam$, and a $K$-source $X$ such that with probability $\eps$ over $h \leftarrow H$, $\cp(h(X)) \geq 1/M + \Omega(1/K\eps)$. Intuitively, if we take $T = \Theta(K\eps \cdot \log (\alpha/\eps)/M)$ independent copies of such $X$, then the collision probability will satisfy $\cp(h(X_1),\dots,h(X_T)) \geq (1+\Omega(M/K\eps))^T/M^T \geq \alpha
/(\eps M^T)$, and so the overall collision probability is $\cp(H,\vecY) \geq \alpha /(|\Hfam|\cdot M^T)$. Formally, we analyze our construction below using Hellinger distance, and show that the collision probability remains high even after modifying a $\Theta(\eps)$-fraction of distribution.
\ifnum\short=1
For details and the extension to distribution $\vecY = (H(X_1),\dots,H(X_T))$, please refer to \cite{ChungVa08TR}.
\fi

\ifnum\short=0 

\begin{proofof}{of Theorem \ref{thm:lb_2-univ}} 
Fix a prime power $M$, and $\eps > 0$, we identify $[M]$ with the finite field $\F$ of size $M$. Let $t$ be an integer parameter such that $M^{t-1} > 1/\eps$. Recall that the set $\Hfam_0$ of linear functions $\{h_{\vec{a}}:\F^t\rightarrow \F\}_{\vec{a}\in\F^t}$ where $h_{\vec{a}}(\vec{x}) = \sum_i a_i x_i$ is $2$-universal. Note that
picking a random hash function $h\leftarrow \Hfam_0$ is equivalent to picking a random vector $\vec{a}\leftarrow \F^t$. Two special properties of $\Hfam_0$ are (i) when $\vec{a} = \vec{0}$, the whole domain $\F^t$ is sent to $0 \in \F$, and (ii) the size of hash family $|\Hfam_0|$ is the same as the size of the domain, namely $|\F^t|$. We will use $\Hfam_0$ as a building block in our construction.

We proceed to construct the hash family $\Hfam$. We partition the domain $[N]$ into several sub-domains, and apply different hash function to each sub-domain. Let $s$ be an integer parameter to be determined later. We require $N \geq s\cdot M^t$, and partition $[N]$ into $D_0, D_1,\dots,D_s$, where each of $D_1,\dots,D_s$ has size $M^t$ and is identified with $\F^t$, and $D_0$ is the remaining part of $[N]$. In our construction, the data $\vecX$ will never come from $D_0$. Thus, wlog, we can assume $D_0$ is empty. For every $i = 1,\dots, s$, we use a linear hash function $h_{\vec{a}_i} \in \Hfam_0$ to send $D_i$ to $\F$. Thus, a hash function $h \in \Hfam$ consists of $s$ linear hash function $(h_{\vec{a}_1},\dots,h_{\vec{a}_s})$, and can be described by $s$ vectors $\vec{a}_1,\dots,\vec{a}_s \in \F^t$. Note that to make $\Hfam$ $2$-universal, it suffices to pick $\vec{a}_1,\dots,\vec{a}_s$ pairwise independently. Specifically, we identify $\F^t$ with the finite field $\hat{\F}$ of size $M^t$, and pick $(\vec{a}_1,\dots,\vec{a}_s)$ by picking $a, b \in \hat{\F}$, and output $(a+\alpha_1 \cdot b, a+\alpha_2 \cdot b,\dots, a + \alpha_s \cdot b)$ for some distinct constants $\alpha_1,\dots,\alpha_s \in \hat{\F}$. Formally, we define the hash family to be
  $$\Hfam = \{ h^{a,b} : [N] \rightarrow [M] \}_{a,b\in \hat{\F}} \mbox{, where } h^{a,b} = (h_{a+\alpha_1  b},\dots, h_{a+\alpha_s b}) \eqdef (h^{a,b}_1,\dots,h^{a,b}_s).$$ 
It is easy to verify that $\Hfam$ is indeed $2$-universal, and $|\Hfam| = M^{2t}$. 

We next define a single block $K$-source $X$ that makes the Markov Inequality (\ref{eq:Markov}) tight. We simply take $X$ to be a uniform distribution over $D_1 \cup \cdots \cup D_s$, and so $K = s\cdot M^t$. Consider a hash function $h^{a,b} \in \Hfam$. If all $h^{a,b}_i$ are non-zero and distinct, then $h^{a,b}(X)$ is the uniform distribution. If exactly one $h^{a,b}_i = 0$, then $h^{a,b}$ sends $M^t + (s-1)M^{t-1}$ elements in $[N]$ to $0$, and $(s-1)M^{t-1}$ elements to each nonzero $y\in \F$. Let us call such $h^{a,b}$ \emph{bad} hash functions. Thus, if $h^{a,b}$ is bad, then 
\begin{eqnarray*}
 \cp(h^{a,b}(X)) & = & \left(\frac{M^t + (s-1)M^{t-1}}{K}\right)^2 + (M-1)\cdot \left(\frac{(s-1)M^{t-1}}{K}\right)^2 \\
  & = & \frac{1}{M} + \frac{M-1}{s^2M} \geq \frac{1}{M} + \frac{1}{2s^2}.
\end{eqnarray*}  
Note that $h^{a,b}$ is bad with probability
  $$\Pr[ \mbox{exactly one $h^{a,b}_i = 0$}] = \Pr[ b \neq 0 \wedge \exists i \quad (a + \alpha_i b = 0)] = \left(1-\frac{1}{M^t}\right)\cdot \frac{s}{M^t} \geq \frac{s}{2M^t}.$$ 
We set $s = \lceil 4\eps M^t \rceil \leq M^t$. It follows that with probability at least $2\eps$ over $h\leftarrow \Hfam$, the collision probability satisfies $\cp(h(X)) \geq 1/M + 1/(4K\eps)$, as we intuitively desired. However, instead of working with collision probability directly, we need to use Hellinger closeness to measure the growth of distance to uniform (see Definition \ref{def:Hellinger}.) The following claim upper bounds the Hellinger closeness of $h(X)$ for bad hash functions $h$. The proof of the claim is deferred to the end of this section.

\begin{claim} \label{clm:comp_closeness}
Suppose $h$ is a bad hash function defined as above, then the Hellinger closeness of $h(X)$ satisfies $C(h(X)) \leq 1 -M/(64K\eps)$.
\end{claim}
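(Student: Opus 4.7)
The plan is to compute the Hellinger closeness of $h(X)$ directly using the explicit probability mass function of the bad hashed distribution, and then bound it via a careful second-order expansion of $\sqrt{1+x}$ and $\sqrt{1-y}$. Recall that for bad $h$, writing $K = sM^t$, the pmf of $h(X)$ is $p_0 = (M^t+(s-1)M^{t-1})/K = 1/M + (M-1)/(sM)$ at $0$ and $p_y = (s-1)M^{t-1}/K = 1/M - 1/(sM)$ at every nonzero $y \in \F$. Setting $f(y)=M\cdot p_y$ as in the discussion preceding the claim, the Hellinger closeness is
\[
C(h(X)) \;=\; \frac{1}{M}\Bigl(\sqrt{1+\tfrac{M-1}{s}}\;+\;(M-1)\sqrt{1-\tfrac{1}{s}}\Bigr).
\]

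The key observation is that the first-order Taylor terms of the two square roots cancel exactly, so the loss comes entirely from the second-order terms. Concretely, I would use the two elementary inequalities $\sqrt{1+x}\le 1+x/2$ (valid for all $x\ge 0$) and $\sqrt{1-y}\le 1-y/2-y^2/8$ (valid for $y\in[0,1]$), applied with $x=(M-1)/s$ and $y=1/s$. Adding, the first-order contributions $\pm(M-1)/(2s)$ cancel, leaving
\[
M\cdot C(h(X)) \;\le\; 1 + \frac{M-1}{2s} + (M-1)\Bigl(1-\frac{1}{2s}-\frac{1}{8s^2}\Bigr) \;=\; M - \frac{M-1}{8s^2},
\]
so $C(h(X)) \le 1 - (M-1)/(8Ms^2)$.

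It remains to translate this into the claimed form $1 - M/(64K\eps)$. Substituting $K = sM^t$ and recalling $s = \lceil 4\eps M^t\rceil$, the bound $M/(64K\eps) = 1/(64\eps s M^{t-1})$ can be compared to $(M-1)/(8Ms^2)$, and the desired inequality reduces to $s \le 8\eps(M-1)M^{t-1}$, which follows from $s \le 4\eps M^t + 1$ together with the standing hypothesis $\eps M^{t-1}\ge 1$ (the extra slack between $1/8$ and $1/64$ absorbs the $+1$ from the ceiling and is more than enough for all $M\ge 2$).

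The main obstacle, and the reason the bound has to be second-order rather than first-order, is that $\sqrt{1+x}\le 1+x/2$ has no useful quadratic improvement for $x\ge 0$ — the quadratic term of the Taylor series actually bends the function downward. The entire quadratic loss must therefore be extracted from $\sqrt{1-1/s}$, which is why the proof uses a one-sided second-order upper bound on the $y\neq 0$ side only. Once one sees this asymmetric Taylor step, the rest is just bookkeeping of the constants.
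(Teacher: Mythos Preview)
Your overall plan---write the Hellinger closeness explicitly and bound the square roots by second-order Taylor expansions---is exactly right, but you extract the second-order loss from the wrong side, and this makes the bound too weak by a factor of $M-1$.

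Concretely, your bound gives $C(h(X))\le 1-(M-1)/(8Ms^2)$. In the last paragraph you claim this reduces to $s\le 8\eps(M-1)M^{t-1}$, but the algebra is off by a factor of $M$: with $K=sM^t$, the inequality $(M-1)/(8Ms^2)\ge M/(64K\eps)$ is equivalent to $s\le 8\eps(M-1)M^{t-2}$, not $8\eps(M-1)M^{t-1}$. Since $s=\lceil 4\eps M^t\rceil\approx 4\eps M^t$, this would require $4M^2\le 8(M-1)$, which is false for every $M\ge 2$. So the argument as written does not prove the claim; the best it yields is $C(h(X))\le 1-\Theta(1/(K\eps))$, short of the required $1-\Theta(M/(K\eps))$.

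The conceptual slip is in your final paragraph: the fact that $\sqrt{1+x}$ is concave means precisely that one \emph{does} have a quadratic improvement $\sqrt{1+x}\le 1+x/2 - x^2/(8(1+x)^{3/2})$ (Taylor with Lagrange remainder), and this is where the main loss should come from. The paper applies exactly this bound to $x=(M-1)/s$ and only the crude $\sqrt{1-z}\le 1-z/2$ to the $(M-1)$ light atoms, obtaining
\[
C(h(X))\;\le\;1-\frac{(M-1)^2}{8Ms^2\bigl(1+(M-1)/s\bigr)^{3/2}},
\]
which is stronger than your bound by a factor of roughly $(M-1)$ and, after using $s\ge M$ and $s^2=\Theta(K\eps)$, gives the claimed $1-M/(64K\eps)$. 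Intuitively, the single heavy atom at $0$ deviates from uniform by $(M-1)/s$, while each of the $M-1$ light atoms deviates by only $1/s$; squaring and summing, the heavy atom contributes $(M-1)^2/s^2$ versus a total of $(M-1)/s^2$ from the light atoms, so the dominant second-order effect lives on the $\sqrt{1+x}$ side, not the $\sqrt{1-y}$ side.
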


Finally, for every integer $T \in [\eps^2M^{2t-1} \log (\alpha/\eps), c_0 \cdot \eps^2M^{2t-1} \log (\alpha/\eps)]$, we can write $T = c \cdot (64K\eps/M)\cdot \ln (800\alpha/\eps)$ for some constant $c < c_0$. Let $\vecX = (X_1,\dots,X_T)$ be $T$ independent copies of $X$. We now show that $K, \Hfam, \vecX$ satisfy the conclusion of the theorem.  That is, $K = \Omega( MT/(\eps \log(\alpha/\eps)))$ (as follows from the definition of $T$) and $(H,\vecY) = (H,H(X_1),\dots,H(X_T))$ is $\eps$-far from any distribution $(H',\vecZ)$ with $\cp(H',\vecZ) \leq \alpha/(|\Hfam|\cdot M^T)$.

Consider the distribution $(h(X_1),\dots,h(X_T))$ for a bad hash function $h \in \Hfam$. From the above claim, the Hellinger closeness satisfies
  $$C( h(X_1),\dots,h(X_T)) = C(h(X))^T \leq (1 - M/64K\eps)^T \leq e^{MT/64K\eps} \leq \frac{800\alpha}{\eps}.$$
By Lemma \ref{lem:Hellinger_to_stat} and the definition of Hellinger closeness, we have 
  $$\Delta( (h(X_1),\dots,h(X_T)), U_{[M]^T}) \geq 1 - C( h(X_1),\dots,h(X_T)) \geq 1 - \frac{800\alpha}{\eps}.$$
By Lemma \ref{lem:hard_to_reduce_cp}, $(h(X_1),\dots,h(X_T))$ is $(0.1,2\alpha/\eps)$-nonuniform for at least $2\eps$-fraction of bad hash functions $h$. By the second statement of Lemma \ref{lem:nonuniform_far}, this implies $(H,\vecY)$ is $\eps$-far from any distribution $(H',\vecZ)$ with $\cp(H',\vecZ) \leq \alpha/(|\Hfam|\cdot M^T)$.

In sum, given $M,\eps,\alpha, t$ that satisfies the premise of the theorem, we set $K = \lceil 4\eps M^{t}\rceil \cdot M^t$, and proved that for every $N \geq K$, and $T = \Theta((K\eps/M)\cdot \ln (\alpha/\eps))$, the conclusion of the theorem is true. It remains to prove Claim \ref{clm:comp_closeness}.

\begin{claimproof}
Let $p(x) = M\cdot \Pr[h(X)=x]$ for every $x \in \F$. For a bad hash function $h$, we have $p(0) = (1+ (M-1)/s)$, and $p(x) = (1- 1/s)$ for every $x \neq 0$. We will upper bound $C(h(X)) = (1/M)\cdot \sum_x \sqrt{p(x)}$ using Taylor series. Recall that for $z \geq 0$, there exists some $z',z'' \in [0,z]$ such that
  $$\sqrt{1+z} = 1 + \frac{z}{2} + \frac{z^2}{2} \cdot \left( -\frac{1}{4(1+z')^{3/2}}\right) \leq 1 + \frac{z}{2} - \frac{z^2}{8(1+z)^{3/2}} \mbox{ , and}$$
  $$\sqrt{1-z} = 1 - z \cdot \frac{1}{2\sqrt{1-z''}} \leq 1 - \frac{z}{2}.$$
We have
\begin{eqnarray*}
  C(h(X)) &=& \frac{1}{M}\sum_x \sqrt{p(x)} \\
      &\leq & \frac{1}{M} \left( 1 + \frac{M-1}{2s} - \frac{(M-1)^2}{8s^2\cdot(1+(M-1)/s)^{3/2}} + (M-1)\cdot \left(1 - \frac{1}{2s} \right)  \right) \\   
      & = & 1 - \frac{(M-1)^2}{8Ms^2(1+(M-1)/s)^{3/2}} 
\end{eqnarray*}
Recall that $M\geq 2$, $s = \eps M^t \geq M$, and $s^2 = K\eps$, we have
  $$C(h(X)) \leq 1- \frac{M^2}{64K\eps}.$$   
\end{claimproof}	
\end{proofof}

Recall that $|\Hfam| = M^{2t}$. Theorem \ref{thm:lb_2-univ_no_H} follows from Theorem \ref{thm:lb_2-univ} by exactly the same argument as in the proof of Theorem \ref{thm:lb_no_H}.
\fi

\section*{Acknowledgments}
We thank Wei-Chun Kao for helpful discussions in the early stages of this
work, David Zuckerman for telling us about Hellinger distance, and Michael
Mitzenmacher for suggesting parameter settings useful in practice.

\bibliographystyle{alphabetic}
\bibliography{hash}

\ifnum\short=0

\appendix

\section{Technical Lemma on Binomial Coefficients}
\label{app:sec:Bino}

\begin{lemma}[Claim \ref{clm:binomial}, restated] \label{app:lem:binomial}
Let $N, K > 1$ be integers such that $N > 2K$, and $L\in [0,K/2]$, $\beta \in (0, \min\{1,\sqrt{L}\})$ real numbers. Let $S \subset [N]$ be a random subset of size $K$, and $T \subset [N]$ be a fixed subset of $[N]$ of arbitrary size. We have
$$ \Pr_S \left[ \left| |S\cap T| - L\right| \leq \beta \sqrt{L} \right] \leq O(\beta).$$
\end{lemma}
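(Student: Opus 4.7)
The random variable $X = |S \cap T|$ is hypergeometrically distributed with parameters $(N, |T|, K)$, so
$$\Pr[X = j] = \frac{\binom{|T|}{j}\binom{N - |T|}{K - j}}{\binom{N}{K}},$$
with mean $\mu = K|T|/N$ and variance $\sigma^2 = K|T|(N-|T|)(N-K)/(N^2(N-1))$. My plan is to prove a pointwise bound $\Pr[X = j] = O(1/\sqrt{L})$ for every integer $j$ in the interval $I = [L-\beta\sqrt{L}, L+\beta\sqrt{L}]$, and then sum over the $O(\beta\sqrt{L} + 1)$ integers in $I$. This produces an overall bound of $O(\beta + 1/\sqrt{L})$, which is the desired $O(\beta)$ in the regime $\beta = \Omega(1/\sqrt{L})$ that is the one actually used by Lemma~\ref{lem:lb_single_hash}.

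First, I would invoke the symmetry $X \leftrightarrow K - X$ (which corresponds to replacing $T$ by its complement) to reduce to the case $|T| \leq N/2$. Next, I would split into two subcases according to the distance from $L$ to the mean $\mu$. When $|L - \mu|$ exceeds a large constant multiple of $\sqrt{L \log(1/\beta)}$, a Chernoff-type tail bound for the hypergeometric distribution already yields $\Pr[X \in I] \leq O(\beta)$ directly, since $I$ then lies in a tail that decays faster than $\beta$. In the complementary regime, $L$ is comparable to $\mu$, and because $K, |T| \leq N/2$ the variance satisfies $\sigma^2 = \Omega(\mu) = \Omega(L)$.

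In this central regime, the pointwise estimate follows by applying Stirling's formula to each of the three binomial coefficients $\binom{|T|}{j}$, $\binom{N-|T|}{K-j}$, and $\binom{N}{K}$. The entropy (exponential) terms combine and nearly cancel once we use the first-order optimality of the mode, leaving a residue of the form $O(1/\sigma) = O(1/\sqrt{L})$ that is uniform in $j \in I$. Multiplying by the number of integers in $I$ then gives the claimed $O(\beta)$ bound.

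The main obstacle will be the degenerate regimes where the Stirling approximation becomes delicate: $|T|$ very small (forcing $j$ small), $j$ close to $|T|$, or $K-j$ close to $N - |T|$, since some of the $1/\sqrt{\cdot}$ factors in the Gaussian-like estimate blow up. To sidestep these boundary issues cleanly, I would use the fact that the hypergeometric distribution is log-concave on $\mathbb{Z}$ (which follows from a direct computation of the ratio $\Pr[X=j+1]/\Pr[X=j]$), combined with the standard fact that any log-concave distribution on $\mathbb{Z}$ with variance $\sigma^2$ has maximum point probability $O(1/\sigma)$. This bypasses a case analysis on the location of $j$ and gives the uniform pointwise bound $O(1/\sqrt{L})$ needed to close the argument.
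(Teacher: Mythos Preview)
Your plan is correct and reaches the same conclusion as the paper---a pointwise bound $\Pr[X=j]=O(1/\sqrt{L})$ on each term, summed over $O(\beta\sqrt{L}+1)$ integers---but the route to the pointwise bound is different.

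The paper fixes the value $R$ and studies the point probability $f(|T|)=\binom{|T|}{R}\binom{N-|T|}{K-R}/\binom{N}{K}$ as a function of the \emph{set size} $|T|$. A single Stirling estimate gives $f(|T|)=O(1/\sqrt{R})=O(1/\sqrt{L})$ throughout the central range $2R\le |T|\le N-2K+2R$, and for $|T|$ outside this range the ratio $f(|T|)/f(|T|+1)$ shows $f$ is monotone, reducing to the endpoints of the central range. No split on the distance $|L-\mu|$ is needed, and no separate tail bound enters.

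Your approach fixes $|T|$ and instead exploits structure in $j$: log-concavity together with the general fact that a log-concave integer distribution with variance $\sigma^2$ has mode probability $O(1/\sigma)$. This is more conceptual and avoids explicit Stirling computations, at the cost of the extra case split. One small imprecision: in your ``far'' case the Chernoff bound does not always yield $O(\beta)$---when $\mu\gg L$ the lower-tail bound only gives $e^{-\Omega(\mu)}$, which need not be $\le\beta$. But that regime is already covered by your log-concavity argument, since then $\sigma^2\asymp\mu\gtrsim L$. A cleaner dichotomy is therefore $\mu\ge cL$ (log-concavity gives mode $O(1/\sqrt{L})$) versus $\mu<cL$ (only the upper tail is relevant, and the multiplicative Chernoff bound gives $e^{-\Omega(L)}=O(1/\sqrt{L})$). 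Either organization closes the argument.
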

\begin{proof}
By an abuse of notation, we use $T$ to denote the size of set $T$. The probability can be expressed as a sum of binomial coefficients as follows.
$$\Pr_S \left[ \left| |S\cap T| - L\right| \leq \beta \sqrt{L} \right] = \sum_{R=\left\lceil L - \beta \sqrt{L}\right\rceil}^{\left\lfloor L + \beta \sqrt{L}\right\rfloor } \frac{{T\choose R}{N-T\choose K-R}}{{N\choose K}}.$$
Note that there are at most $\lfloor 2\beta \sqrt{L}\rfloor + 1$ terms, it suffices to show that for every $R \in \left[L - \beta \sqrt{L},L + \beta \sqrt{L}\right]$,
$$f(T) \eqdef \frac{{T\choose R}{N-T\choose K-R}}{{N\choose K}} \leq O\left(\sqrt{\frac{1}{L}}\right).$$ 
We use the following bound on binomial coefficients, which can be derived from Stirling's formula.
\begin{claim}
  For integers $0< i < a$, $0<j<b$, we have 
  $$\frac{{a\choose i}{b\choose j}}{{a+b\choose i+j}} \leq O\left(\sqrt{\frac{a\cdot b \cdot (i+j) \cdot (a+b-i-j)}{i\cdot(a-i) \cdot j \cdot (b-j) \cdot (a+b)}}\right).$$
\end{claim}
Note that $L\in[0,K/2]$ implies $K-R = \Omega(K)$. When $2R \leq T \leq N-2K + 2R$, we have
\begin{eqnarray*}
\lefteqn{f(T) = \frac{{T\choose R}{N-T\choose K-R}}{{N\choose K}}} \\
& = & O\left( \sqrt{ \frac  
            {T (N-T) K (N-K)} {R(T-R)(K-R)(N-T-K+R)N}
            }\right) \\
& = & O\left( \sqrt{ 
            \frac{1}{R} \cdot
            \frac{K}{K-R} \cdot
            \frac{N-K}{N} \cdot
            \frac{T(N-T)}{(T-R)(N-T-K+R)}
            }\right) \\
& = & O\left(\sqrt{\frac{1}{R}}\right) = O\left(\sqrt{\frac{1}{L}}\right),
\end{eqnarray*}
as desired. Note that when $N > 2K$, such $T$ exists. Finally, observe that $\beta^2 < L$ implies $R \geq 1$, and
$$ \frac{f(T)}{f(T+1)} = \frac{(T-R+1)(N-T)}{(T+1)(N-T-K+R)}.$$
It follows that $f(T)$ is increasing when $T \leq 2R$, and $f(T)$ is decreasing when $T \geq N-2K+2R$. Therefore, $f(T) \leq f(2R) = O(\sqrt{1/L})$ for $T \leq 2R$, and $f(T) \leq f(N-2K+2R) = O(\sqrt{1/L})$ for $T \geq N-2K+2R$, which complete the proof. 
\end{proof}

\section{Proof of Lemma \ref{lem:stat_inc_small_T}} \label{app:proof_lem_stat_inc_small_T}

\begin{lemma}[Lemma \ref{lem:stat_inc_small_T}, restated] \label{app:lem:stat_inc_small_T}
  Let $X$ and $Y$ be random variables over $[M]$ such that
  $\Delta(X,Y) \geq \eps$. Let $\vecX = (X_1,\dots,X_T)$ be $T$
  i.i.d. copies of $X$, and let $\vecY = (Y_1,\dots,Y_T)$ be $T$
  i.i.d. copies of $Y$. We have
  $$ \Delta(\vecX,\vecY) \geq \min\{ \eps_0, c\sqrt{T}\cdot\eps\},$$
  where $\eps_0, c$ are absolute constants.
\end{lemma}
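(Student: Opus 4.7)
\begin{proof2}[Proof plan.]
The plan is to reduce the problem to comparing two binomial distributions and then combine two arguments according to the size of $\sqrt{T}\eps$: a Hellinger-based bound (used successfully in the positive results of the paper) for the regime where $\sqrt{T}\eps$ is already large, and a central limit / Berry--Esseen argument for the regime where $\sqrt{T}\eps$ is small.

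\textbf{Step 1 (reduction to Bernoulli products).} Let $A \subseteq [M]$ be a set witnessing the statistical distance, so that $p - q \geq \eps$ where $p = \Pr[X \in A]$ and $q = \Pr[Y \in A]$. Set $X' = \mathbf{1}[X \in A]$ and $Y' = \mathbf{1}[Y \in A]$, and define the componentwise images $\vecX' = (X'_1,\dots,X'_T)$ and $\vecY' = (Y'_1,\dots,Y'_T)$. By the data processing inequality, $\Delta(\vecX,\vecY) \geq \Delta(\vecX',\vecY')$. Both $\vecX'$ and $\vecY'$ are exchangeable on $\{0,1\}^T$, and conditioned on the Hamming weight the conditional distribution is uniform on the corresponding slice of $\{0,1\}^T$ in either case; hence the Hamming-weight map is a sufficient statistic for both and $\Delta(\vecX',\vecY') = \Delta(\mathrm{Bin}(T,p),\mathrm{Bin}(T,q))$. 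It therefore suffices to lower-bound the latter by $c \min\{\eps_0,\sqrt{T}\,\eps\}$.

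\textbf{Step 2 (large $T\eps^2$ via Hellinger).} Fix an absolute constant $c_0 > 0$ and assume $T\eps^2 \geq c_0$. From Lemma~\ref{lem:Hellinger_to_stat}, $d(X,Y)^2 \geq \Delta(X,Y)^2/2 \geq \eps^2/2$. Since the Bhattacharyya affinity is multiplicative over product distributions,
\[
1 - d(\vecX,\vecY)^2 \;=\; (1 - d(X,Y)^2)^T \;\leq\; (1-\eps^2/2)^T \;\leq\; e^{-T\eps^2/2} \;\leq\; e^{-c_0/2}.
\]
Applying Lemma~\ref{lem:Hellinger_to_stat} in the direction $\Delta \geq d^2$ yields $\Delta(\vecX,\vecY) \geq 1 - e^{-c_0/2}$, which is a positive absolute constant $\eps_0$.

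\textbf{Step 3 (small $T\eps^2$ via CLT).} Now suppose $T\eps^2 < c_0$, so $\sqrt{T}\,\eps$ is bounded by an absolute constant, and work with $\Delta(\mathrm{Bin}(T,p),\mathrm{Bin}(T,q))$ directly. Lower-bound this by $\Pr[\mathrm{Bin}(T,q) \leq \tau] - \Pr[\mathrm{Bin}(T,p) \leq \tau]$ at $\tau = \lfloor T(p+q)/2\rfloor$. In the bulk regime, where $p,q$ are bounded away from $0$ and $1$ (say $p,q \in [1/8,7/8]$), Berry--Esseen approximates each CDF by the corresponding normal CDF with error $O(1/\sqrt{Tp(1-p)}) = O(1/\sqrt{T})$, and the gap between the two normal CDFs at the midpoint is $2\Phi\!\bigl(\sqrt{T}\eps/(2\sqrt{p(1-p)})\bigr)-1 = \Theta(\sqrt{T}\,\eps)$ whenever $\sqrt{T}\,\eps \leq 1$, yielding the desired $\Omega(\sqrt{T}\,\eps)$ bound. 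In the tail regime, where (w.l.o.g.) both $p$ and $q$ are small, Poisson approximation to $\mathrm{Bin}(T,p)$ and $\mathrm{Bin}(T,q)$ gives $\Delta \geq \Omega(\min\{1, T\eps/\sqrt{T\max(p,q)}\}) = \Omega(\min\{1,\sqrt{T}\,\eps/\sqrt{\max(p,q)}\}) \geq \Omega(\sqrt{T}\,\eps)$; and if $T\max(p,q)$ is bounded, one can read the required bound directly off of $\Pr[\mathrm{Bin}(T,q)=0] - \Pr[\mathrm{Bin}(T,p)=0] = (1-q)^T - (1-p)^T = \Omega(T\eps) \geq \Omega(\sqrt{T}\,\eps)$. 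Combining the two steps gives $\Delta(\vecX,\vecY) \geq \min\{\eps_0, c\sqrt{T}\,\eps\}$ with absolute constants $\eps_0,c$.

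\textbf{Main obstacle.} The cleanest part, Step 2, only handles $T\eps^2 = \Omega(1)$; in the small-$T\eps^2$ regime, Hellinger is inherently loose (it delivers only $\Delta \gtrsim T\eps^2$, missing the $\sqrt{T}\,\eps$ rate), so the heart of the proof is the CLT-style argument of Step 3, and the main technicality is making the Berry--Esseen / Poisson bounds uniform over all $(p,q)$ with $p - q \geq \eps$ rather than only over $p,q$ bounded away from $\{0,1\}$.
\end{proof2}
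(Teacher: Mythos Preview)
Your Step~3 has a genuine gap in the bulk case, not merely the boundary-uniformity issue you flag. With $p,q\in[1/8,7/8]$, the Berry--Esseen error in each binomial CDF is $\Theta(1/\sqrt{T})$, while your signal (the gap between the two normal CDFs at the midpoint) is $\Theta(\sqrt{T}\,\eps)$. When $T\eps \ll 1$ the error swamps the signal, and this regime is not excluded by your case split: take $p=1/2$, $\eps = 1/T$, $T$ large; then $T\eps^2 = 1/T$ so Step~2 does not apply, yet the target $\sqrt{T}\,\eps = 1/\sqrt{T}$ sits exactly at the Berry--Esseen error scale, and the trivial bound $\Delta(\vecX,\vecY)\geq\eps = 1/T$ is too weak by a factor of $\sqrt{T}$. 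So in this sub-regime you would need something sharper than the global Berry--Esseen bound (a local limit theorem, or a direct ratio calculation for nearby binomials); as written, the argument does not close.

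The paper avoids this entirely with two moves you are missing. First, rather than reducing to an arbitrary Bernoulli pair $(p,q)$, it uses a \emph{randomized} map $f$ (the witness-set indicator OR'ed with an appropriately biased independent coin) so that $f(Y)$ is an \emph{exactly} fair coin, while $\Delta(f(X),f(Y))\geq\eps/2$. Second, with one side uniform on $\{0,1\}^T$, it takes the explicit witness $S=\{x:\wt(x)\leq T/2-\sqrt{T}\}$ and computes the pointwise likelihood ratio $2^T\Pr[\vecX=x] = (1+2\eps)^{\wt(x)}(1-2\eps)^{T-\wt(x)} \leq ((1-2\eps)/(1+2\eps))^{\sqrt{T}} \leq 1 - \Omega(\min\{\sqrt{T}\eps,1\})$ for every $x\in S$, exactly and with no approximation error; combining with $\Pr[U_{\{0,1\}^T}\in S]=\Omega(1)$ gives the result in one line. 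The randomization that forces $q=1/2$ is what makes this exact ratio bound possible and is the key idea your proposal lacks.
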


\begin{proof}
We prove the lemma by the following two claims. The first claim
reduces the lemma to the special case that $X$ is a Bernoulli random
variable with bias $\Omega(\eps)$, and $Y$ is a uniform coin. The
second claim proves the special case.

\begin{claim} \label{lem:reduce_Bernoulli}
  Let $X$ and $Y$ be random variables over $[M]$ such that
  $\Delta(X,Y) = \eps$. Then there exists a randomized
  function $f:[M] \rightarrow \zo$ such that $f(Y) = U_{\zo}$,
  and $\Delta(f(X), f(Y)) \geq \eps/2$.
\end{claim}
\begin{claimproof}
By the definition, there exists a set $T \subset [M]$ such that
$$\left| \Pr[X \in T] - \Pr[Y \in T] \right| = \eps.$$
With out loss of generality, we can assume that $\Pr[Y\in T] = p
\leq 1/2$ (because we can take the complement of $T$.) Let $g : [M]
\rightarrow \zo$ be the indicator function of $T$, so we have
$\Pr_Y[g(Y)=1] = p$. For every $x\in[M]$, we define $f(x) = g(x)
\vee b$, where $b$ is a biased coin with $\Pr[b = 0] = 1/(2(1-p))$.
The claim follows by observing that  $$\Pr[f(Y) = 0] = \Pr[g(Y) = 0
\wedge b = 0] = (1-p) \cdot 1/(2(1-p)) = 1/2,$$ and
$$\Delta(f(X), f(Y)) \geq \Delta(X,Y) \cdot \Pr[b=0] \geq \eps/2.$$
\end{claimproof}
\begin{claim} \label{lem:indep_Bernoulli}
  Let $X$ be a Bernoulli random variable over $\zo$ such that
  $\Pr[X=0] = 1/2 - \eps$. Let $\vecX = (X_1,\dots,X_T)$ be $T$
  independent copies of $X$. Then
  $$ \Delta(\vecX, U_{\zo^T}) \geq \min \{ \eps_0, c \sqrt{T} \eps \},$$
  where $\eps_0, c$ are absolute constants independent of $\eps$ and $T$.
\end{claim}
\begin{claimproof}
For $\vecx \in \zo^T$, let the \emph{weight} $\wt(\vecx)$ of $\vecx$ to be
the number of $1$'s in $\vecx$. Let
$$S = \left\{x\in \zo^T: \wt(x) \leq \frac{T}{2} -\sqrt{T} \right\}$$
be the subset of $\zo^T$ with small weight (This choice of $S$ is the
main source of improvement in our proof compared to that of Reyzin \cite{Reyzin04},
who instead considers the set of all $x$ with weight at most $T/2$.) For every $\vecx \in S$,
we have
$$\Pr[ \vecX = \vecx] \leq \frac{1}{2^T} \cdot (1-\eps)^{T/2 + \sqrt{T}}
\cdot (1+\eps)^{T/2-\sqrt{T}} \leq
 \left(1- \min \left\{ \frac{\sqrt{T}\cdot
\eps}{2} ,\frac{1}{2}\right\} \right) \cdot \Pr[ U_{\zo^T} = \vecx]
.$$ 
By standard results on large deviation, we have 
$$\Pr[U_{\zo^T} \in S] \geq \Omega(1).$$
Combining the above two inequalities, we get
\begin{eqnarray*}
  \Delta(\vecX, U_{\zo^T}) & \geq & \Pr[U_{\zo^T} \in S] - \Pr[\vecX \in S]  
\\
  & \geq & \left(1 - \left(1- \min \left\{ \frac{\sqrt{T}\cdot
\eps}{2} ,\frac{1}{2}\right\} \right) \right) \cdot  \Pr[U_{\zo^T}
\in S] \\
  & \geq & \min \left\{ \frac{\sqrt{T}\cdot
\eps}{2} ,\frac{1}{2}\right\} \cdot \Omega(1) \\
  & = & \min \{ c \sqrt{T} \eps, \eps_0 \}
\end{eqnarray*}
for some absolute constants $c, \eps_0$, which completes the proof.
\end{claimproof}
Note that applying the same randomized function $f$ on two random
variables $X$ and $Y$ cannot increase the statistical distance.
I.e., $\Delta(f(X),f(Y)) \leq \Delta(X,Y)$. The lemma following
immediately by the above two claims:
\begin{eqnarray*}
  \Delta(\vecX, \vecY) & \geq & \Delta( ((f_1(X_1),\dots,f_T(X_T)),
   ((f_1(Y_1),\dots,f_T(Y_T)) \\
   & \geq & \min \{ \eps_0, c \sqrt{T} \eps \}
\end{eqnarray*}
where $f_1,\dots,f_T$ are independent copies of randomized function
defined in Claim \ref{lem:reduce_Bernoulli}, and $\eps_0,c$ are
absolute constants from Claim \ref{lem:indep_Bernoulli}.
\end{proof}

\fi

\end{document}